\documentclass{article}

\usepackage[utf8]{inputenc} %

\usepackage{amsthm, amsfonts, amssymb, amsmath,enumitem, natbib, bbm, mathabx, stmaryrd, tablefootnote, fancyref}
\usepackage{mathtools}
\usepackage[margin=1.2in]{geometry}
\mathtoolsset{showonlyrefs}
\newtheorem{thm}{Theorem}%
\newtheorem{lem}{Lemma}%
\newtheorem{prop}{Proposition}%
\newtheorem{rem}{Remark}%

\usepackage{graphicx} %
\usepackage{color, float}

\usepackage[ruled,vlined,linesnumbered]{algorithm2e}
\usepackage{array} %
\usepackage{verbatim} %
\usepackage{subfig} %
\usepackage[colorlinks=true, pdfstartview=FitV,
linkcolor=blue, citecolor=blue, urlcolor=blue]{hyperref}
\makeatletter
\def\namedlabel#1#2{\begingroup
	#2%
	\def\@currentlabel{#2}%
	\phantomsection\label{#1}\endgroup
}
\makeatother
\newcommand{\vertiii}[1]{{\left\vert\kern-0.25ex\left\vert\kern-0.25ex\left\vert #1
		\right\vert\kern-0.25ex\right\vert\kern-0.25ex\right\vert}}

\usepackage[titles,subfigure]{tocloft} %

\newcommand{\rev}[1]{#1}%
\SetKwInput{KwInput}{Input}
\SetKwInput{KwOutput}{Output}

\begin{document}
\title{Nested conformal prediction\\ and quantile out-of-bag ensemble methods}

\author{Chirag Gupta, Arun K. Kuchibhotla, Aaditya K. Ramdas\\
Carnegie Mellon University, \\
\texttt{chiragg@cmu.edu, arunku@cmu.edu, aramdas@cmu.edu}
}

\maketitle
\begin{abstract}
Conformal prediction is a popular tool for providing valid prediction sets for classification and regression problems, without relying on any distributional assumptions on the data. While the traditional description of conformal prediction starts with a nonconformity score, we provide an alternate (but equivalent) view that starts with a sequence of nested sets and calibrates them to find a valid prediction set. 
The nested framework subsumes all nonconformity scores, including recent proposals based on quantile regression and density estimation. While these ideas were originally derived based on sample splitting, our framework seamlessly extends them to other aggregation schemes like cross-conformal, jackknife+ and out-of-bag methods.
We use the framework to derive a new algorithm (QOOB, pronounced cube) that combines four ideas: quantile regression, cross-conformalization, ensemble methods and out-of-bag predictions. \rev{We develop a computationally efficient implementation of cross-conformal, that is also used by QOOB.} In a detailed numerical investigation, QOOB performs either the best or close to the best on all simulated and real datasets. Code for QOOB is available at \url{ https://github.com/aigen/QOOB}.
\end{abstract}

\newpage
\tableofcontents
\newpage

\section{Introduction}
\rev{Traditional machine learning algorithms provide point predictions, such as mean estimates for regression and class labels for classification, without conveying uncertainty or confidence. However, sensitive applications like medicine and finance often require valid uncertainty estimates. In this paper, we discuss  quantification of predictive uncertainty through predictive inference, wherein we provide prediction sets rather than point predictions. } 

Formally, the problem of distribution-free predictive inference is described as follows: given dataset $D_n \equiv \{(X_i, Y_i)\}_{i=1}^n$ drawn i.i.d. from $P_{XY} = P_X \times P_{Y|X}$ on $\smash{\mathcal{X} \times \mathcal{Y}}$\footnote{The spaces $\mathcal{X}$ and $\mathcal{Y}$ are without restriction. For example, take $\mathcal{X} \equiv \mathbb{R}^d$ and let $\mathcal{Y}$ be a subset of $\mathbb{R}$,  or a discrete space such as in multiclass classification. Though it is not formally required, it may be helpful think of $\mathcal{Y}$ as a totally ordered set or a metric space.
}, and $\smash{X_{n+1} \sim P_X}$, we must construct a prediction set $C(D_n,\alpha,X_{n+1}) \equiv C(X_{n+1})$ for $Y_{n+1}$ that satisfies:
\begin{equation}\label{eq:marginal-validity}
\text{ for any distribution } P_{XY}, ~ \mathbb{P}(Y_{n+1} \in C(X_{n+1})) \geq 1-\alpha.
\end{equation}
Here the probability is taken over all $n+1$ points and $\alpha \in (0,1)$ is a predefined confidence level. %
\rev{As long as \eqref{eq:marginal-validity} is true, the `size' of $C(X_{n+1})$ conveys how certain we are about the prediction at $X_{n+1}$.} Methods with property~\eqref{eq:marginal-validity} are called \emph{marginally} valid to differentiate them from \emph{conditional} validity:
\[
\forall P_{XY},~ \mathbb{P}(Y_{n+1} \in C(X_{n+1}) \mid X_{n+1}=x) \geq 1-\alpha, \text{ for $P_X$-almost all $x$.} 
\]
Conditional validity is known to be impossible without assumptions on $P_{XY}$; see~\citet[Section 2.6.1]{balasubramanian2014conformal} and~\citet{barber2019limits}. \rev{If $Y_{n+1} \in C(X_{n+1})$, we say that $C(X_{n+1})$ covers $Y_{n+1}$, and we often refer to marginal (conditional) validity as marginal (conditional) coverage. In this paper, we develop methods that are (provably) marginally valid, but have reasonable conditional coverage in practice, using conformal prediction.}

Conformal prediction is a universal framework for constructing marginally valid prediction sets. It acts like a wrapper around any prediction algorithm; in other words, any black-box prediction algorithm can be ``conformalized'' to produce valid prediction sets instead of point predictions. 
We refer the reader to the works of~\citet{vovk2005algorithmic} and~\citet{balasubramanian2014conformal} for details on the original framework.
In this paper, we provide an alternate and equivalent viewpoint for accomplishing the same goals, called \emph{nested conformal prediction}. %

Conformal prediction starts from a nonconformity score. As we will see with explicit examples later, these nonconformity scores are often rooted in some underlying geometric intuition about how a good prediction set may be discovered from the data. Nested conformal acknowledges this geometric intuition and makes it explicit: instead of starting from a score, it instead starts from a sequence of all possible prediction sets $\{\mathcal{F}_t(x)\}_{t\in\mathcal{T}}$ for some ordered set $\mathcal{T}$. \rev{Just as we suppressed the dependence of set $C(\cdot)$ on the labeled data $D_n$ in property \eqref{eq:marginal-validity}, here too $\mathcal{F}_t(\cdot)$ will actually depend on $D_n$ but we suppress this for notational simplicity. }

These prediction sets are `nested', that is, for every $t_1 \leq t_2 \in \mathcal{T}$, we require that $\mathcal{F}_{t_1}(x) \subseteq \mathcal{F}_{t_2}(x)$; also $\mathcal{F}_{\inf \mathcal{T}}=\emptyset$ and $\mathcal{F}_{\sup \mathcal{T}}=\mathcal{Y}$. Thus large values of $t$ correspond to larger prediction sets. Given a tolerance $\alpha \in [0, 1]$, we wish to identify the smallest $t \in \mathcal{T}$ such that 
\[\mathbb{P}(Y \in \mathcal{F}_{t}(X)) \geq 1 - \alpha. 
\]
In a nutshell, nested conformal learns a data-dependent mapping $\alpha \to t(\alpha)$ using the conformal principle. Note that the mapping must be decreasing; lower tolerance values $\alpha$ naturally lead to larger prediction sets.

\rev{We now briefly describe the steps involved in split/inductive conformal prediction \cite{papadopoulos2002inductive}, \cite{lei2018distribution},~\cite[Chapter 2.3]{balasubramanian2014conformal}, and use it to illustrate the nested principle. First, split $D_n$ into a training set $D_1 \equiv \{(X_i, Y_i)\}_{1\le i\le m}$ and a calibration set $D_2 \equiv \{(X_i, Y_i)\}_{m < i\le n}$. Using $D_1$, construct an estimate $\widehat{\mu}(\cdot)$ of the conditional mean of $Y$ given $X$. Then construct the nonconformity score as the residuals of $\widehat{\mu}$ on $D_2$: 
$r_i := |Y_i - \widehat{\mu}(X_i)|, \text{for } i \in D_2.$
Finally, define
\[
C(X_{n+1}) = \Big\{y\in\mathbb{R}:\,|y - \widehat{\mu}(X_{n+1})| < Q_{1-\alpha}(\{r_i\}_{i\in D_2})\Big\},
\] 
where $Q_{1-\alpha}(A)$ for a finite set $A$ represents the $(1-\alpha)$-th quantile of elements in $A$. Due to the exchangeability of order statistics, $C(\cdot)$ can be shown to be marginally valid (see Proposition~\ref{thm:Conformal-main-result}).}

We now give an alternate derivation of the above set using nested conformal:
\begin{enumerate}
	\item After learning $\widehat \mu$ using $D_1$ (as done before), construct a sequence of nested prediction sets corresponding to symmetric intervals around $\widehat{\mu}(\cdot)$: 
  \[
  \{\mathcal{F}_t(\cdot)\}_{t\ge0} := \{[\widehat{\mu}(\cdot) - t, \widehat{\mu}(\cdot) + t]:t\ge0\}.
  \] 
  Note that $\mathcal{F}_t(\cdot)$ is a random set since it is based on $\widehat{\mu}(\cdot)$ which is random through ${D}_1$.
  It is clear that regardless of $\widehat{\mu}$, for any distribution of  $(X, Y)$, and any $\alpha \in [0,1]$, there exists a (minimal) $t = t(\alpha)$ such that
$\mathbb{P}(Y\in\mathcal{F}_t(X)) \ge 1-\alpha.$
	Hence we can rewrite our nested family  as
	$$\Big\{[\widehat{\mu}(\cdot) - t, \widehat{\mu}(\cdot) + t]:t\ge0\Big\} = \Big\{[\widehat{\mu}(\cdot) - t(\alpha), \widehat{\mu}(\cdot) + t(\alpha)]:\alpha\in[0,1]\Big\}.$$
    \item The only issue now is that we do not know the map $\alpha\mapsto t(\alpha)$, that is, given $\alpha$ we do not know which of these prediction intervals to use. Hence we use the calibration data to ``estimate'' the map $\alpha\to t(\alpha)$. This is done by finding the smallest $t$ such that $\mathcal{F}_t(X_i)$ contains $Y_i$ for at least $1-\alpha$ fraction of the calibration points $(X_i, Y_i)$ (we provide formal details later). Because the sequence $\{\mathcal{F}_t(\cdot)\}_{t \geq 0}$ is increasing in $t$, finding the smallest $t$ leads to the smallest prediction set within the nested family.  
\end{enumerate}

Embedding nonconformity scores into our nested framework allows for easy comparison between the geometric intuition of the scores; see Table~\ref{tab:summary_literature}. Further, this interpretation enables us to extend these nonconformity scores beyond the split/inductive conformal setting that they were originally derived in. Specifically, we seamlessly derive cross-conformal, jackknife+ and OOB versions of these methods, including our new method called QOOB (pronounced cube). 

\rev{A final reason that the assumption of nestedness is natural is the fact that the optimal prediction sets are nested: Suppose $Z_1, \ldots, Z_{n}$ are exchangeable random variables with a common distribution that has density $p(\cdot)$ with respect to some underlying measure. The ``oracle'' prediction set~\citep{lei2013distribution} for a future observation $Z_{n+1}$ is given by the \emph{level set} of the density with valid coverage, that is, $\{z:p(z) \ge t(\alpha)\}$ with $t(\alpha)$ defined by largest $t$ such that $\mathbb{P}(p(Z_{n+1}) \ge t) \ge 1-\alpha$. Because $\{z:p(z) \ge t\}$ is decreasing with $t$, $\{z:p(z) \ge t(\alpha)\}$ is decreasing with $\alpha\in[0,1]$, forming a nested sequence of sets. See~Appendix~\ref{appsec:optimal-prediction-set} for more details. }

\subsection{Organization and contributions}
For simplicity, our discussion focuses on the regression setting: $\mathcal{Y} = \mathbb{R}$. However, all ideas are easily extended to other prediction settings \rev{including classification}. The paper is organized as follows:
\begin{enumerate}
\item In Section~\ref{sec:split-conformal}, we formalize the earlier discussion and present split/inductive conformal~\citep{papadopoulos2002inductive,lei2018distribution} in the language of nested conformal prediction, and translate various conformity scores developed in the literature for split conformal into nested prediction sets. 
\item  In Section~\ref{sec:Jackknife-plus}, we rephrase the jackknife+~\citep{barber2019predictive} and cross-conformal prediction~\citep{vovk2015cross} in terms of the nested framework. 
This allows the jackknife+ to use many recent score functions, such as those based on quantiles, which were originally developed and deployed in the split framework. 
In Section~\ref{subsec:compuation-cross} we provide an efficient implementation of cross-conformal that matches the jackknife+ prediction time for a large class of nested sets that includes all standard nested sets. 

\item In Section~\ref{sec:OOB-conformal}, we extend the Out-of-Bag conformal~\citep{johansson2014regression} and jackknife+ after bootstrap~\citep{kim2020predictive} methods to our nested framework. These are based on ensemble methods such as random forests, and are relatively computationally efficient because only a single ensemble needs to be built. 
\item In Section~\ref{sec:QOOB}, we consolidate the ideas developed in this paper to construct a novel conformal method called QOOB (Quantile Out-of-Bag, pronounced cube), that is both computationally and statistically efficient. 
QOOB combines four ideas: quantile regression, cross-conformalization, ensemble methods and out-of-bag predictions. Section~\ref{sec:numerical-experiments} demonstrates  QOOB's strong empirical performance. 
\end{enumerate}

\rev{In Appendix~\ref{appsec:Equivalence}, we show that nested conformal is equivalent to the standard conformal prediction based on nonconformity scores. We also formulate full transductive conformal prediction in the nested framework. }In Appendix~\ref{app:k-fold} we derive K-fold variants of jackknife+/cross-conformal in the nested framework and in Appendix~\ref{sec:sampling-nested}, we develop the other aggregated conformal methods of subsampling and bootstrap in the nested framework. In Appendix~\ref{app:empty-case}, we discuss cross-conformal computation and the jackknife+ in the case when our nested sequence could contain empty sets. This is a subtle but important issue to address when extending these methods to quantile-based nested sets of \citet{romano2019conformalized}, and thus relevant to QOOB as well. %
Appendix~\ref{appsec:proofs} contains all proofs.

\section{Split conformal based on nested prediction sets}\label{sec:split-conformal}
In the introduction, we showed that in a simple regression setup with the nonconformity scores as held-out residuals, split conformal intervals can be naturally expressed in terms of nested sets. Below, we introduce the general nested framework and recover the usual split conformal method with general scores using this framework. We show how existing nonconformity scores in literature exhibit natural re-interpretations in the nested framework. The following description of split conformal follows descriptions by \citet{papadopoulos2002inductive} and~\citet{lei2018distribution} but rewrites it in terms of nested sets. 

Suppose $(X_i, Y_i)\in\mathcal{X}\times\mathcal{Y}, i\in[n]$ denotes the training dataset. Let $[n] = \mathcal{I}_1\cup\mathcal{I}_2$ be a partition of $[n]$. For $\mathcal{T}\subseteq\mathbb{R}$ and each $x\in\mathcal{X}$, let $\{\mathcal{F}_t(x)\}_{t\in\mathcal{T}}$ (with $\mathcal{F}_t(x)\subseteq\mathcal{Y}$) denote a nested sequence of sets constructed based on the first split of training data $\{(X_i, Y_i):i\in\mathcal{I}_1\}$, that is, $\mathcal{F}_t(x) \subseteq \mathcal{F}_{t'}(x)$ for $t \le t'$. \rev{The sets $\mathcal{F}_t(x)$ are not fixed but random through $\mathcal{I}_1$. We suppress this dependence for notational simplicity. } Consider the score 
\begin{equation}\label{eq:Score-definition}
r(x, y) ~:=~ \inf\{t\in\mathcal{T}:\,y\in\mathcal{F}_t(x)\},
\end{equation}
where $r$ is a mnemonic for ``radius'' and $r(x,y)$ can be informally thought of as the smallest ``radius'' of the set that captures $y$ (and perhaps thinking of a multivariate response, that is $\mathcal{Y}\subseteq \mathbb{R}^d$, and $\{\mathcal{F}_t(x)\}$ as representing appropriate balls/ellipsoids might help with that intuition).
Define the scores for the second split of the training data $\{r_i = r(X_i, Y_i)\}_{i\in\mathcal{I}_2}$ and set
\[
Q_{1-\alpha}(r, \mathcal{I}_2) ~:=~ \lceil (1-\alpha)(1 + 1/|\mathcal{I}_2|)\rceil\mbox{-th quantile of }\{r_i\}_{i\in\mathcal{I}_2}.
\]
(that is, $Q_{1-\alpha}(r, \mathcal{I}_2)$ is the $\lceil(1-\alpha)(1 + 1/|\mathcal{I}_2|)\rceil$-th largest element of the set $\{r_i\}_{i\in\mathcal{I}_2}$). 
The final prediction set is given by 
\begin{equation}
C(x) ~:=~ \mathcal{F}_{Q_{1-\alpha}(r,\mathcal{I}_2)}(x) = \{y \in \mathcal{Y}:\,r(x, y) \le Q_{1-\alpha}(r, \mathcal{I}_2)\}.
\label{eq:Nested-prediction-set}
\end{equation}
The following well known sample coverage guarantee holds true \citep{papadopoulos2002inductive,lei2018distribution}.
\begin{prop}\label{thm:Conformal-main-result}
If $\{(X_i, Y_i)\}_{i\in[n]\cup\{n+1\}}$ are exchangeable, then the prediction set $C(\cdot)$ in \eqref{eq:Nested-prediction-set} satisfies 
\[
\mathbb{P}\left(Y_{n+1}\in C(X_{n+1})\mid \{(X_i, Y_i):i\in\mathcal{I}_1\}\right) ~\ge~ 1-\alpha.
\]
If the scores $\{r_i, i\in\mathcal{I}_2\}$ are almost surely distinct, then $C(\cdot)$ also satisfies
\begin{equation}\label{eq:Upper-bound}
\mathbb{P}\left(Y_{n+1}\in C(X_{n+1})\mid \{(X_i, Y_i):i\in\mathcal{I}_1\}\right) ~\le~ 1-\alpha + \frac{1}{|\mathcal{I}_2|+1}.
\end{equation}
\end{prop}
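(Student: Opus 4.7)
The proof will proceed by reducing the claim to the standard ``quantile of exchangeable variables'' lemma, after first conditioning on the first split $D_1 = \{(X_i, Y_i) : i \in \mathcal{I}_1\}$. Once we condition on $D_1$, the nested family $\{\mathcal{F}_t(\cdot)\}_{t \in \mathcal{T}}$ is fixed, and so the radius map $r(x,y) = \inf\{t \in \mathcal{T} : y \in \mathcal{F}_t(x)\}$ becomes a deterministic function of $(x,y)$. Because $\{(X_i, Y_i)\}_{i \in [n] \cup \{n+1\}}$ is exchangeable and $\mathcal{I}_2 \cup \{n+1\}$ is disjoint from $\mathcal{I}_1$, the random variables $\{r_i\}_{i \in \mathcal{I}_2 \cup \{n+1\}}$, with $r_i := r(X_i, Y_i)$ and $r_{n+1} := r(X_{n+1}, Y_{n+1})$, are exchangeable conditional on $D_1$. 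By the definition of $r$ together with the nestedness of $\{\mathcal{F}_t\}$, the coverage event factors through this scalar score:
\[
\{Y_{n+1} \in C(X_{n+1})\} \;=\; \{r_{n+1} \le Q_{1-\alpha}(r, \mathcal{I}_2)\}.
\]

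The problem is now purely one-dimensional. Writing $m = |\mathcal{I}_2|$ and $k = \lceil (1-\alpha)(m+1) \rceil$, so that $Q_{1-\alpha}(r, \mathcal{I}_2)$ is exactly the $k$-th smallest of $\{r_i\}_{i \in \mathcal{I}_2}$, a direct check gives
\[
\{r_{n+1} \le Q_{1-\alpha}(r, \mathcal{I}_2)\} \;=\; \{\text{rank of } r_{n+1} \text{ in } \{r_i\}_{i \in \mathcal{I}_2 \cup \{n+1\}} \text{ is at most } k\}.
\]
By the conditional exchangeability established above, the rank of $r_{n+1}$ is uniformly distributed on $\{1, \ldots, m+1\}$ when the scores are distinct, and is stochastically dominated by this uniform in general (e.g.\ by breaking ties uniformly at random). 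Hence the probability on the left is at least $k/(m+1) \ge 1-\alpha$, proving the lower bound.

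For the upper bound \eqref{eq:Upper-bound}, assume the scores $\{r_i\}_{i \in \mathcal{I}_2}$ are almost surely distinct; by exchangeability of $\{(X_i, Y_i)\}_{i \in \mathcal{I}_2 \cup \{n+1\}}$ conditional on $D_1$, the augmented collection $\{r_i\}_{i \in \mathcal{I}_2 \cup \{n+1\}}$ is then also almost surely distinct. In that case no ties arise, the rank is exactly uniform on $\{1,\ldots,m+1\}$, and the coverage probability is exactly $k/(m+1) < (1-\alpha) + 1/(m+1)$, using $k < (1-\alpha)(m+1) + 1$.

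The only genuine obstacle is the bookkeeping needed to translate between the ceiling-and-quantile definition of $Q_{1-\alpha}$ and the rank-of-the-test-point characterization, specifically verifying that the $(1 + 1/|\mathcal{I}_2|)$ factor in that definition produces the index $k = \lceil(1-\alpha)(m+1)\rceil$ for which the exchangeable-rank argument is clean. Everything else is the classical split conformal argument rewritten with the ``radius'' $r$ playing the role of the nonconformity score.
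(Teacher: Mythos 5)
Your proof is correct and follows essentially the same route as the paper: condition on the first split so the nested family and hence the score $r$ are fixed, reduce the coverage event to $\{r_{n+1}\le Q_{1-\alpha}(r,\mathcal{I}_2)\}$, and invoke exchangeability of the scores over $\mathcal{I}_2\cup\{n+1\}$. The only difference is that the paper delegates the final rank/quantile step to Lemma 2 of \citet{romano2019conformalized}, whereas you prove that lemma inline via the uniform-rank argument.
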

See Appendix~\ref{appsec:split-conformal} for a proof. 
Equation~\eqref{eq:Score-definition} is the key step that converts a sequence of nested sets $\{\mathcal{F}_t(x)\}_{t\in\mathcal{T}}$ into a nonconformity score $r$. Through two examples, we demonstrate how natural sequences of nested sets in fact give rise to standard nonconformity scores considered in literature, via equation~\eqref{eq:Score-definition}. 

\begin{table}[t]
\caption{Examples from the literature covered by nested conformal framework. The methods listed are split conformal, locally weighted conformal, CQR, CQR-m, CQR-r, distributional conformal and conditional level-set conformal. Functions $\widehat{q}_{a}$ represents a conditional quantile estimate at level $a$, and $\widehat{f}$ represents a conditional density estimate.}
\label{tab:summary_literature}
\resizebox{\textwidth}{!}{%
\begin{tabular}{llll}\hline
Reference & $\mathcal{F}_t(x)$ & $\mathcal{T}$ & Estimates \\\hline\hline
\citet{lei2018distribution} & $[\widehat{\mu}(x) - t, \widehat{\mu}(x) + t]$ & $[0, \infty)$ & $\widehat{\mu}$ \\
\citet{lei2018distribution} & $[\widehat{\mu}(x) - t\widehat{\sigma}(x),\widehat{\mu}(x) + t\widehat{\sigma}(x)]$ & $[0, \infty)$ & $\widehat{\mu}, \widehat{\sigma}$ \\
\citet{romano2019conformalized} & $[\widehat{q}_{\alpha/2}(x) - t, \widehat{q}_{1-\alpha/2}(x) + t]$ & $(-\infty, \infty)$ & $\widehat{q}_{\alpha/2}, \widehat{q}_{1-\alpha/2}$ \\
\citet{kivaranovic2019adaptive} & $(1+t)[\widehat{q}_{\alpha/2}(x), \widehat{q}_{1-\alpha/2}(x)] - t\widehat{q}_{1/2}(x)$ & $(-\infty, \infty)$ & $\widehat{q}_{\alpha/2}, \widehat{q}_{1-\alpha/2}, \widehat{q}_{1/2}$ \\
\citet{sesia2019comparison} & $[\widehat{q}_{\alpha/2}(x), \widehat{q}_{1-\alpha/2}(x)] \pm t(\widehat{q}_{1-\alpha/2}(x) - \widehat{q}_{\alpha/2}(x))$ & $(-1/2,\infty)$ & $\widehat{q}_{\alpha/2}, \widehat{q}_{1-\alpha/2}$ \\
\citet{chernozhukov2019distributional} & $[\widehat{q}_{t}(x), \widehat{q}_{1-t}(x)]$ & $(0,1/2)$ & $\{\widehat{q}_{\alpha}\}_{\alpha\in[0,1]}$\\
\citet{izbicki2019distribution} & $\{y:\widehat{f}(y|x) \geq \widecheck{t}_{\delta}(x)\}$\tablefootnote{$\ \widecheck{t}_{\delta}(x)$ is an estimator of $t_{\delta}(x)$, where $t_{\delta}(x)$ is defined the largest $t$ such that $\mathbb{P}(f(Y|X) \ge t_{\delta}(X)\big|X = x) \ge 1 - \delta$; see~\citep[Definition 3.3]{izbicki2019distribution} for details.} & $[0,1]$ & $\widehat{f}$\\
\hline
\end{tabular}%
}
\end{table}

\begin{enumerate}
  \item \textbf{Split/Inductive Conformal \citep{papadopoulos2002inductive,lei2018distribution}.} Let $\widehat{\mu}(\cdot)$ be an estimator of the regression function $\mathbb{E}[Y|X]$ based on $(X_i, Y_i), i\in\mathcal{I}_1$, and consider nested sets corresponding to symmetric intervals around the mean estimate: 
  \[
  \mathcal{F}_t(x) ~:=~ [\widehat{\mu}(x) - t, \widehat{\mu}(x) + t], \ t \in \mathcal{T} = \mathbb{R}^+.
  \]
  Observe now that
  \begin{align*}
  \inf\{t\ge0:y\in\mathcal{F}_t(x)\} &= \inf\{t\ge0:\widehat{\mu}(x) - t \le y\le \widehat{\mu}(x) + t\}\\
  &= \inf\{t\ge0:\, -t \le y - \widehat{\mu}(x) \le t\} ~=~ |y - \widehat{\mu}(x)|,
  \end{align*}
 which is exactly the nonconformity score of split conformal. 
  \item \textbf{Conformalized Quantiles \citep{romano2019conformalized}.} 
For any $\beta \in (0,1)$, let the function $q_\beta(\cdot)$ be the conditional quantile function. Specifically, for each $x$, define  $q_{\beta}(x):= \sup\{a: \mathbb{P}(Y \le a \mid X = x) \le \beta$\}.
  Let $\widehat{q}_{\alpha/2}(\cdot), \widehat{q}_{1-\alpha/2}(\cdot)$ be any conditional quantile estimators based on $(X_i, Y_i),i\in\mathcal{I}_1$.
  If the quantile estimates are good, we hope that $\mathbb{P}(Y \in [\widehat{q}_{\alpha/2}(X), \widehat{q}_{1-\alpha/2}(X)]) \approx 1-\alpha$, but this cannot be guaranteed in a distribution-free or assumption lean manner. However, it may be possible to  achieve this with a symmetric expansion or shrinkage of the interval $[\widehat{q}_{\alpha/2}(X), \widehat{q}_{1-\alpha/2}(X)]$ (assuming $\widehat{q}_{\alpha/2}(X) \leq \widehat{q}_{1 - \alpha/2}(X)$). 
  Following the intuition, consider
  \begin{equation}\label{eq:CQR}
  \mathcal{F}_t(x) ~:=~ [\widehat{q}_{\alpha/2}(x) - t, \widehat{q}_{1-\alpha/2}(x) + t], \ t\in\mathbb{R}.
  \end{equation}

  Note that the sets in~\eqref{eq:CQR} are increasing in $t$ if $\widehat{q}_{\alpha/2}(x) \le \widehat{q}_{1-\alpha/2}(x)$, and
  \begin{align*}
  \inf\{t\in\mathbb{R}:\,y\in\mathcal{F}_t(x)\} &= \inf\{t\in\mathbb{R}:\,\widehat{q}_{\alpha/2}(x) - t \le y \le \widehat{q}_{1-\alpha/2}(x) + t\}\\
  &= \max\{\widehat{q}_{\alpha/2}(x) - y, y - \widehat{q}_{1-\alpha/2}(x)\}.
  \end{align*}
  Hence $r(X_i, Y_i) = \max\{\widehat{q}_{\alpha/2}(X_i) - Y_i,\,Y_i - \widehat{q}_{1-\alpha/2}(X_i)\}$ for $i\in\mathcal{I}_2$. This recovers exactly the nonconformity score proposed by \citet{romano2019conformalized}.  
  
\end{enumerate}

We believe that it is more intuitive to start with the shape of the predictive set, like we did above, than a nonconformity score. In this sense, nested conformal is a formalized technique to go from statistical/geometric intuition about the shape of the prediction set to a nonconformity score. See Table~\ref{tab:summary_literature} for more translations between scores and nested sets.

Split conformal prediction methods are often thought of as being statistically inefficient because they only make use of one split of the data for training the base algorithm, while the rest is held-out for calibration. Recently many extensions have been proposed~\citep{carlsson2014aggregated,vovk2015cross,johansson2014regression,bostrom2017accelerating,barber2019predictive,kim2020predictive} that make use all of the data for training. 
All of these methods can be rephrased easily in terms of nested sets; we do so in Sections~\ref{sec:Jackknife-plus} and \ref{sec:OOB-conformal}. This understanding also allows us to develop our novel method QOOB in Section~\ref{sec:QOOB}.

\section{Cross-conformal and Jackknife+ using nested sets}\label{sec:Jackknife-plus}

In the previous section, we used a part of training data to construct the nested sets and the remaining part to calibrate them for finite sample validity. This procedure, although computationally efficient, can be statistically inefficient due to the reduction of the sample size used for calibrating. Instead of splitting into two parts, it is statistically more efficient to split the data into multiple parts. In this section, we describe such versions of nested conformal prediction sets and prove their validity.
These versions in the score-based conformal framework are called cross-conformal prediction and the jackknife+, and were developed by~\citet{vovk2015cross} and~\citet{barber2019predictive}, but the latter only for a specific score function.

\subsection{Rephrasing leave-one-out cross-conformal using nested sets}
We now derive leave-one-out cross-conformal in the language of nested prediction sets. %
Suppose $\{\mathcal{F}_{t}^{-i}(x)\}_{t\in\mathcal{T}}$ for each $x\in\mathcal{X},\,i\in[n]$ denotes a collection of nested sets constructed based only on $\{(X_j, Y_j)\}_{\,j\in[n]\setminus\{i\}}$. We assume that  $\{\mathcal{F}_{t}^{-i}(x)\}_{t\in\mathcal{T}}$ is constructed invariantly to permutations of the input points $\{(X_j, Y_j)\}_{\,j\in[n]\setminus\{i\}}$ \rev{; note that this is also required in the original description of cross-conformal \citep{vovk2015cross}}. %
The $i$-th nonconformity score $r_i$ induced by these nested sets is defined as $r_i(x, y) ~=~ \inf\{t\in\mathcal{T}:\,y\in\mathcal{F}_t^{-i}(x)\}.$ The leave-one-out cross-conformal prediction set is given by
\begin{equation}\label{eq:LOO}
{C}^{\texttt{LOO}}(x) ~:=~ \left\{y\in\mathbb{R}:\,\sum_{i=1}^n \mathbbm{1}\{r_i(X_i, Y_i) < r_i(x, y)\} < (1-\alpha)(n+1)\right\}.
\end{equation}
For instance, given a conditional mean estimator $\widehat{\mu}^{-i}(\cdot)$ trained on $\{(X_j, Y_j)\}_{\,j\in[n]\setminus\{i\}}$, we can consider the nested sets $\mathcal{F}_t^{-i}(x) = [\widehat{\mu}^{-i}(x) - t, \widehat{\mu}^{-i}(x) + t]$ to realize the absolute deviation residual function $r_i(x, y) = |y - \widehat{\mu}^{-i}(x)|$. %
We now state the coverage guarantee that ${C}^{\texttt{LOO}}(\cdot)$ satisfies. 

\begin{thm}\label{thm:validity-nested-conformal-jackknife-plus}
If $\{(X_i, Y_i)\}_{i\in[n+1]}$ are exchangeable and the sets $\mathcal{F}_{t}^{-i}(x)$ constructed based on $\{(X_j, Y_j)\}_{ j\in[n]\setminus\{i\}}$ are invariant to their ordering, then
\[
\mathbb{P}(Y_{n+1}\in{C}^{\texttt{LOO}}(X_{n+1})) ~\ge~ 1 - 2\alpha.
\]
\end{thm}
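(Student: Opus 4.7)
The plan is to adapt the symmetrization-plus-counting argument that \citet{barber2019predictive} used for the original jackknife+ to the general nested setting here. The core insight is that permutation-invariance of the score construction together with exchangeability of the $n+1$ tuples lets us symmetrize the non-coverage event across all $n+1$ positions, after which a purely combinatorial lemma closes the argument.

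First, I would promote $r_1,\dots,r_n$ to a symmetric leave-two-out family indexed by unordered pairs of $[n+1]$. For each $i\neq k$ in $[n+1]$, let $r_{ik}$ be the score induced via~\eqref{eq:Score-definition} by the nested sets built from $\{(X_j,Y_j):j\in[n+1]\setminus\{i,k\}\}$. The ordering-invariance hypothesis forces $r_{ik}=r_{ki}$, and the score the algorithm actually uses is exactly $r_i=r_{i,n+1}$. Defining the tournament indicator
\[
A_{ik}:=\mathbbm{1}\{r_{ik}(X_i,Y_i)<r_{ik}(X_k,Y_k)\}\ \text{for } i\neq k,\qquad A_{kk}:=0,
\]
the crucial antisymmetry is $A_{ik}+A_{ki}\leq 1$, since the two strict inequalities cannot hold simultaneously.

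Next I would symmetrize the non-coverage event across all $n+1$ positions. For each $k\in[n+1]$, set $E_k:=\{\sum_{i\neq k}A_{ik}\geq(1-\alpha)(n+1)\}$; the event we want to bound is precisely $E_{n+1}$. The transposition of indices $k$ and $n+1$ in the full dataset turns the LOO procedure ``with $n+1$ as test point'' into the same procedure ``with $k$ as test point''; a direct index-tracking, using that $r_{ik}$ depends only on the $n-1$ points outside $\{i,k\}$, shows that $E_{n+1}$ evaluated on the permuted data equals $E_k$ on the original data. Exchangeability of $(X_1,Y_1),\dots,(X_{n+1},Y_{n+1})$ then gives $\mathbb{P}(E_k)=\mathbb{P}(E_{n+1})$ for every $k$, so
\[
(n+1)\,\mathbb{P}(E_{n+1}) \;=\; \mathbb{E}\Big[\textstyle\sum_{k=1}^{n+1}\mathbbm{1}\{E_k\}\Big].
\]

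It will then suffice to prove the deterministic bound $\sum_k\mathbbm{1}\{E_k\}\leq 2\alpha(n+1)$, an adaptation of Lemma~1 of~\citet{barber2019predictive}. Letting $S:=\{k:E_k\text{ holds}\}$ and $N:=|S|$, splitting the double sum into within-$S$ and outside-$S$ contributions yields
\[
N(1-\alpha)(n+1) \;\leq\; \textstyle\sum_{k\in S}\sum_i A_{ik} \;\leq\; \binom{N}{2}+N(n+1-N),
\]
using the antisymmetry inside $S$ and the trivial bound $A_{ik}\leq 1$ outside. A short algebraic rearrangement yields $N\leq 2\alpha(n+1)-1$, which plugged into the previous display gives $\mathbb{P}(E_{n+1})\leq 2\alpha$, as required. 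The main obstacle is conceptual rather than technical: recognizing that the algorithm's $r_i$'s are merely one slice of a fully symmetric leave-two-out family, so that the tournament symmetrization is even available; once that viewpoint is in place, both the index-tracking in the symmetrization step and the counting in the combinatorial step are short.
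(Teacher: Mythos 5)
Your proposal is correct and follows essentially the same route as the paper: the paper also lifts the $r_i$'s to a symmetric leave-two-out family (packaged as the matrix $D_{i,j}=r_{(i,j)}(X_i,Y_i)$), identifies non-coverage with membership in the ``strange'' set $\mathcal{I}(D)$ of~\eqref{eq:Definition-I-of-A-set}, and then invokes the same two ingredients you write out by hand --- the deterministic tournament-counting bound $|\mathcal{I}(D)|\le 2\alpha(n+1)-1$ and the exchangeability argument equating $\mathbb{P}(j\in\mathcal{I}(D))$ across positions (Lemmas~\ref{lem:Cardinality}--\ref{lem:Exchangeable-proof}, imported from \citet{barber2019predictive}). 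The only difference is presentational: you unfold the cited lemmas into an explicit counting and index-tracking argument, while the paper reuses them as black boxes.
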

See Appendix~\ref{appsec:Jackknife-plus} for a proof
 of Theorem~\ref{thm:validity-nested-conformal-jackknife-plus}, which follows the proof of Theorem 1 in~\citet{barber2019predictive} except with the new residual defined based on nested sets. In particular, Theorem~\ref{thm:validity-nested-conformal-jackknife-plus} applies when the nested sets are constructed using conditional quantile estimators as in the conformalized quantile example discussed in Section~\ref{sec:split-conformal}. The discussion in this section can be generalization to cross-conformal and the CV+ methods of~\citet{vovk2015cross} and~\citet{barber2019predictive}, which construct K-fold splits of the data and require training an algorithm only $n/K$ times (instead of $n$ times in the leave-one-out case). These are discussed in the nested framework in Appendix~\ref{app:k-fold}.
 
In a regression setting, one may be interested in constructing prediction sets that are intervals (since they are easily interpretable), whereas $C^{\texttt{LOO}}(x)$ need not be an interval in general. Also, it is not immediately evident how one would algorithmically compute the prediction set defined in~\eqref{eq:LOO} without trying out all possible values $y \in \mathcal{Y}$. We discuss these concerns in Sections~\ref{subsec:jackknife+} and \ref{subsec:compuation-cross}. 
\subsection{Jackknife+ and other prediction intervals that contain $C^{\texttt{LOO}}(x)$}\label{sec:intervals-LOO}
\label{subsec:jackknife+}
In a regression setting, prediction intervals may be more interpretable or `actionable' than prediction sets that are not intervals. To this end, intervals that contain $C^{\texttt{LOO}}(x)$ are good candidates for prediction intervals since they inherit the coverage validity of Theorem~\ref{thm:validity-nested-conformal-jackknife-plus}. For the residual function $r_i(x, y) = |y - \widehat{\mu}^{-i}(x)|$, \citet{barber2019predictive} provided an interval that always contains $C^{\texttt{LOO}}(x)$, called the jackknife+ prediction interval. In this section, we discuss when the jackknife+ interval can be defined for general nested sets. Whenever jackknife+ can be defined, we argue that another interval can be defined that contains $C^\texttt{LOO}(x)$ and is guaranteed to be no longer in width than the jackknife+ interval.%

\rev{For general nonconformity scores, an analog of the jackknife+ interval may not exist.} However, in the special case when the nested sets $\mathcal{F}_t^{-i}(x)$ are themselves either intervals or empty sets, an analog of the jackknife+ interval can be derived. Note that all the examples listed in Table~\ref{tab:summary_literature} (except for the last one) result in $\mathcal{F}_t(x)$ being either a nonempty interval or the empty set. For clarity of exposition, we discuss the empty case separately in Appendix~\ref{app:empty-case}. Below, suppose $\mathcal{F}_{r_i(X_i, Y_i)}^{-i}(x) $ is a nonempty interval and define the notation
\[
[\ell_i(x), u_i(x)]~:=~\mathcal{F}_{r_i(X_i, Y_i)}^{-i}(x)  .
\] 
With this notation, the cross-conformal prediction set can be re-written as 
\begin{align}
C^{\texttt{LOO}}(x) &= \left\{y:\,\sum_{i=1}^n \mathbbm{1}\{y\notin [\ell_i(x), u_i(x)]\} < (1-\alpha)(n+1)\right\} \nonumber \\
&= \left\{y:\,  \alpha(n+1) - 1 < \sum_{i= 1}^n \mathbbm{1}\{y \in [\ell_i(x), u_i(x)]\} \right\}. \label{eq:loo-interval-definition}
\end{align}
Suppose $y < q_{n,\alpha}^{-}(\{\ell_i(x)\})$, where 
$q_{n,\alpha}^{-}(\{\ell_i(x)\})$ denotes the 
$\lfloor\alpha(n+1)\rfloor$-th smallest value of  $\{\ell_i(x)\}_{i=1}^n$. Clearly, 
\[
\sum_{i= 1}^n \mathbbm{1}\{y \in [\ell_i(x), u_i(x)]\} \leq \sum_{i= 1}^n \mathbbm{1}\{y \geq l_i(x)\} \leq \lfloor\alpha(n+1)\rfloor - 1,
\] 
and hence $y \notin C^\texttt{LOO}(x)$. Similarly it can be shown that if $y > q_{n,\alpha}^{+}(\{u_i(x)\})$ (where $q_{n,\alpha}^{+}(\{u_i(x)\})$ denotes the $\lceil(1-\alpha)(n+1)\rceil$-th smallest value of  $\{u_i(x)\}_{i=1}^n$), $y \notin C^\texttt{LOO}(x)$. 
Hence, defining
the jackknife+ prediction interval as
\begin{equation}
C^{\texttt{JP}}(x) ~:=~ [q_{n,\alpha}^{-}(\{\ell_i(x)\}),\, q_{n,\alpha}^{+}(\{u_i(x)\})],\label{eq:JP-definition}
\end{equation}
we conclude
\begin{equation}\label{eq:loo-jp}
C^{\texttt{LOO}}(x)~\subseteq~ C^{\texttt{JP}}(x) \quad\mbox{for all}\quad x\in\mathcal{X}.
\end{equation}
However, there exists an even shorter interval containing~$C^{\texttt{LOO}}(x)$: its convex hull; this does not require the nested sets to be intervals. The convex hull of $C^{\texttt{LOO}}(x)$ is defined as the smallest interval containing itself. Hence,
\begin{equation}\label{eq:intervals-containing-loo}
C^{\texttt{LOO}}(x) ~\subseteq~ \mathrm{Conv}(C^{\texttt{LOO}}(x)) ~\subseteq~ C^{\texttt{JP}}(x).
\end{equation}   
Because of~\eqref{eq:intervals-containing-loo}, the coverage guarantee from Theorem~\ref{thm:validity-nested-conformal-jackknife-plus} continues to hold for $\mathrm{Conv}(C^{\texttt{LOO}}(x))$ and $C^{\texttt{JP}}(x)$. Interestingly, $C^{\texttt{LOO}}(x)$ can be empty but $C^{\texttt{JP}}(x)$ is non-empty if each $\mathcal{F}_{r_i(X_i, Y_i)}^{-i}(x) $ is non-empty (in particular it contains the medians of $\{\ell_i(x)\}$ and $\{u_i(x)\}$). Further, $\mathrm{Conv}(C^{\texttt{LOO}}(x))$ can be a strictly smaller interval than $C^{\texttt{JP}}(x)$; see Section~\ref{subsec:cc-jp} for details.

\subsection{Efficient computation of the cross-conformal prediction set}
\label{subsec:compuation-cross}
Equation~\eqref{eq:LOO} defines $C^{\texttt{LOO}}(x)$ implicitly, and does not address the question of how to compute the mathematically defined prediction set efficiently. If the nested sets $\mathcal{F}_t^{-i}(x)$ are themselves guaranteed to either be intervals or empty sets, jackknife+ seems like a computationally feasible alternative since it just relies on the quantiles $q_{n,\alpha}^{-}(\{\ell_i(x)\}),\, q_{n,\alpha}^{+}(\{u_i(x)\})$ which can be computed efficiently. However, it turns out that $C^{\texttt{LOO}}(x), \mathrm{Conv}(C^{\texttt{LOO}}(x)),$ and $C^{\texttt{JP}}(x)$ can all be computed in near linear in $n$ time. In this section, we provide an algorithm for near linear time computation of the aforementioned prediction sets. We will assume for simplicity that $\mathcal{F}_t^{-i}(x)$ is always an interval; the empty case is discussed separately in Appendix~\ref{app:empty-case}.

First, notice that the inclusion in~\eqref{eq:LOO} need not be ascertained for every $y \in \mathcal{Y}$ but only for a finite set of values in $\mathcal{Y}$. These values are exactly the ones corresponding to the end-points of the intervals produced by each training point $\mathcal{F}_{r_i(X_i, Y_i)}^{-i}(x) = [\ell_i(x), u_i(x)]$. This is because none of the indicators $\mathbbm{1}\{r_i(X_i, Y_i) < r_i(x, y)\}$ change value between two consecutive interval end-points. Since $\ell_i(x)$ and $u_i(x)$ can be repeated, we define the bag of all these values (see footnote\footnote{A bag denoted by $\Lbag \cdot \Rbag$ is an unordered set with potentially repeated elements. Bag unions respect the number of occurrences of the elements, eg $\Lbag 1, 1, 2 \Rbag \cup \Lbag 1, 3 \Rbag = \Lbag 1, 1, 1, 2, 3 \Rbag$. } for $\Lbag \cdot \Rbag$ notation):
\begin{equation}
\label{eq:y-k-main}
\mathcal{Y}^x := \bigcup_{i=1}^n \Lbag \ell_i(x), u_i(x)\Rbag  .
\end{equation}
Thus we only need to verify the condition
\begin{equation}
\label{eq:y-condition}
\sum_{i= 1}^n \mathbbm{1}\{y \in [\ell_i(x), u_i(x)]\}  > \alpha(n+1) - 1
\end{equation}
for the $2n$ values of $y \in \mathcal{Y}^x$ and construct the prediction sets suitably. %
Done naively, verifying \eqref{eq:y-condition} itself is an $O(n)$ operation for an overall time of $O(n^2)$. However, \eqref{eq:y-condition} can be verified for all $y \in \mathcal{Y}^x$ in one pass on the sorted $\mathcal{Y}^x$ for a running time of $O(n\log n)$; we describe how to do so.%
\begin{algorithm}[!t]
	\SetAlgoLined
	
	\KwInput{%
		Desired coverage level $\alpha$; $\mathcal{Y}^x$ and $\mathcal{S}^x$ computed as defined in  equations~\eqref{eq:y-k-main},~\eqref{eq:s-k-main} using the training data $\{(X_i, Y_i)\}_{i=1}^n$, test point $x$ and any sequence of nested sets $\mathcal{F}^{-i}_t(\cdot)$ \;}
	\KwOutput{Prediction set $C^x \subseteq \mathcal{Y}$}
	
	$threshold \gets \alpha(n+1) - 1$;
	if $threshold < 0$, then return $\mathbb{R}$ and stop;\\
	$C^x \gets \emptyset$; $count  \gets 0$; $left\_endpoint \gets 0$\;
	\For{$i \gets 1$ \textbf{to} $|\mathcal{Y}^x|$}{
		\eIf{$s^x_i = 1$}{ \label{line:condition-1}
			$count \gets count + 1$\; \label{line:increase-count}
			\If{$count  > threshold $ \textbf{and} $count -1 \leq threshold$}{ \label{line:condition-2}
				$left\_endpoint \gets y^x_i$\; \label{line:update-left-endpoint}
			}
		}{
			\If{$count > threshold$ \textbf{and} $count - 1\leq threshold$}{ \label{line:condition-3}
				$C^x \gets C^x \cup \{[left\_endpoint, y^x_i]\}$\; \label{line:update-prediction-set}
			}
			$count \gets count - 1$\; \label{line:decrease-count}
		}
	}
	\Return $C^x$\;
	\caption{Efficient cross-conformal style aggregation}
	\label{alg:efficientCC}
\end{algorithm}

Let the sorted order of the points $\mathcal{Y}^x$ be $y^x_1 \leq y^x_2 \leq \ldots \leq y^x_{|{\mathcal{Y}}^x|}$. If $\mathcal{Y}^x$ contains repeated elements, we require that the left end-points $\ell_i$ come earlier in the sorted order than the right end-points $u_i$ for the repeated elements. Also define the bag of indicators $\mathcal{S}^x$ with elements $s^x_1 \leq s^x_2 \leq \ldots \leq s^x_{|{\mathcal{Y}^x}|}$, where 
\begin{equation}
\label{eq:s-k-main}
s^x_i :=
\begin{cases}
1  & \mbox{if } y^x_i \mbox{ corresponds to a left end-point}\\
0 & \mbox{if } y^x_i \mbox{ corresponds to a right end-point}.
\end{cases}
\end{equation}
Given $\mathcal{Y}^x$ and $\mathcal{S}^x$, Algorithm~\ref{alg:efficientCC} describes how to compute the cross-conformal prediction set in one pass (thus time $O(n)$) for every test point. Thus the runtime (including the sorting) is $O(n\log n)$ time to compute the predictions $\mathcal{F}_{r_i(X_i, Y_i)}^{-i}(x)$ for every $i$. If each prediction takes time $\leq T$, the overall time is $O(n\log n)+Tn$, which is the same as jackknife+.\footnote{For jackknife+, using quick-select, we could obtain $O(n) + Tn$ randomized, but the testing time $Tn$ usually dominates the additional $n\log n$ required to sort.} 
\begin{prop}
	\label{prop:efficientCC-correctness}
	Algorithm~\ref{alg:efficientCC} correctly computes the cross-conformal prediction set~\ref{eq:LOO} given $\mathcal{Y}^x$ and $\mathcal{S}^x$. 
\end{prop}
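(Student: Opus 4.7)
The plan is to verify Algorithm~\ref{alg:efficientCC} by (i) identifying the exact structure of $C^{\texttt{LOO}}(x)$ in terms of a single step function, (ii) showing that the variable $count$ tracks the values of this step function at each iteration, and (iii) showing that the ``transition'' conditions on lines~\ref{line:condition-2} and~\ref{line:condition-3} are triggered precisely at the boundaries of $C^{\texttt{LOO}}(x)$.

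First, define the counting function
\[
f(y) \;:=\; \sum_{i=1}^n \mathbbm{1}\{y\in[\ell_i(x),u_i(x)]\},
\]
so that by~\eqref{eq:loo-interval-definition},
\[
C^{\texttt{LOO}}(x) \;=\; \{y\in\mathbb{R}:\,f(y) > threshold\},\qquad threshold := \alpha(n+1)-1.
\]
Because each summand $\mathbbm{1}\{y\in[\ell_i,u_i]\}$ jumps up (by $1$) at $y=\ell_i(x)$ and jumps down (by $1$) just past $y=u_i(x)$, the function $f$ is piecewise constant, with all discontinuities occurring inside $\mathcal{Y}^x$. If $threshold < 0$ then $f(y) \ge 0 > threshold$ for every $y$, so $C^{\texttt{LOO}}(x) = \mathbb{R}$, matching the early return in the algorithm. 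Henceforth assume $threshold \ge 0$.

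Next I will establish the loop invariant that links $count$ to $f$. Let $count_i$ denote the value of $count$ after completing iteration $i$ of the \textbf{for} loop, with $count_0 = 0$. Using the tie-breaking convention (at a shared value, $\ell$-indices are listed before $u$-indices), I would argue by induction on $i$ that
\[
\text{if } s^x_i = 1,\ \text{then } count_i = f(y^x_i);\qquad \text{if } s^x_i = 0,\ \text{then the value of }count\ \text{tested on line~\ref{line:condition-3} equals } f(y^x_i).
\]
The tie-breaking rule is exactly what is needed here: at a common value $y^*$, all $\ell$-contributions must be counted before any $u$-contribution is removed, so that $count$ equals the number of intervals containing $y^*$ at the moment we examine it. This is the only subtle point in the invariant.

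With the invariant in hand, the remaining step is to show that the intervals emitted by the algorithm are exactly the maximal closed intervals forming $C^{\texttt{LOO}}(x)$. Because $f$ can only jump up at an $\ell$-point and only jump down just past a $u$-point, the superlevel set $\{f>threshold\}$ decomposes into a disjoint union of intervals of the form $[y^x_a,y^x_b]$ where $s^x_a=1$, $s^x_b=0$, $f(y^x_a)>threshold$, $f(y^x_{a'})\le threshold$ for the $\ell$-type iteration just preceding $a$, $f(y^x_b)>threshold$, and $f$ drops to $\le threshold$ immediately after $b$. The conjunction ``$count>threshold$ \textbf{and} $count-1\le threshold$'' on line~\ref{line:condition-2} (at $\ell$-points) isolates precisely the indices $a$ where $f$ first exceeds $threshold$, and the same conjunction on line~\ref{line:condition-3} (at $u$-points) isolates precisely the indices $b$ where the next decrement drops $f$ back to $\le threshold$. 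Lines~\ref{line:update-left-endpoint} and~\ref{line:update-prediction-set} then record these boundaries as closed intervals and append them to $C^x$, yielding $C^x = C^{\texttt{LOO}}(x)$.

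The main obstacle I anticipate is the bookkeeping around repeated endpoint values: one must check that the stipulated ordering of ties (all $\ell$'s before all $u$'s at a common coordinate) faithfully mirrors the true left-inclusive, right-inclusive nature of the closed intervals $[\ell_i(x),u_i(x)]$, and that the ``start'' and ``end'' emissions cannot be triggered on the wrong side of a tie. Everything else is a clean induction plus a case analysis of the four transitions of the step function~$f$.
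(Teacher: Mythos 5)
Your proposal is correct and follows essentially the same route as the paper's own argument in Appendix~\ref{appsec:alg-proof}: there too, $count$ is shown to track the number of intervals $[\ell_i(x),u_i(x)]$ containing the current point (the left-hand side of~\eqref{eq:y-condition}), and the two conjunctions on lines~\ref{line:condition-2} and~\ref{line:condition-3} are identified as detecting exactly the upward and downward crossings of $threshold$, i.e.\ the endpoints of the maximal intervals of the superlevel set. Your version is slightly more structured (explicit step function $f$ and a stated loop invariant); the only caveat is that at a value shared by several left endpoints the equality $count_i=f(y^x_i)$ holds only at the last tied $\ell$-index, but since $count$ increments by one at a time the crossing condition still fires exactly once at the correct coordinate --- a tie-handling subtlety you already flag and which the paper itself dispatches with a parenthetical remark.
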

The proof of the proposition is in Appendix~\ref{appsec:alg-proof}. The proof proceeds through a step-wise description of the algorithm that makes it transparent how the algorithm verifies \eqref{eq:y-condition}
for every value of $y \in \mathcal{Y}^x$.  %
\section{Extending ensemble based out-of-bag conformal methods using nested sets}
\label{sec:OOB-conformal}
Cross-conformal, jackknife+, and their K-fold versions perform multiple splits of the data and for every training point $(X_i, Y_i)$, a residual function $r_i$ is defined using a set of training points that does not include $(X_i, Y_i)$. 
In the previous section, our description required training the base algorithm multiple times on different splits of the data. Often each of these individual algorithms is itself an ensemble algorithm (such as random forests). As described in this section, an ensemble algorithm naturally provide multiple (random) splits of the data from a single run and need not be re-trained on different splits to produce conformal prediction sets. This makes the conformal procedure computationally efficient. %
At the same time, like cross-conformal, the conformal prediction sets produced here are often shorter than split conformal because they use all of the training data for prediction. 
In a series of interesting papers \citep{johansson2014regression, bostrom2017accelerating, linusson2019efficient, kim2020predictive}, many authors have exhibited promising empirical evidence that these ensemble algorithms improve the width of prediction sets without paying a computational cost. We call this the \emph{OOB-conformal} method (short for out-of-bag). \rev{\citet{linusson2019efficient} provided an extensive empirical comparison of OOB-conformal to other conformal methods but without formal validity guarantees.}%

We now describe the procedure formally within the nested conformal framework, thus extending it instantly to residual functions that have hitherto not been considered. Our procedure can be seen as a generalization of the OOB-conformal method~\citep{linusson2019efficient} or the jackknife+ after bootstrap method \citep{kim2020predictive}:
\begin{enumerate}
	\item Let $\{M_j\}_{j = 1}^K$ denote $K \geq 1$ independent and identically distributed random sets drawn uniformly from $\{M: M\subset[n], |M| = m\}$. This is the same as subsampling. Alternatively $\{M_j\}_{j=1}^K$ could be i.i.d. random bags, where each bag is obtained by drawing $m$ samples with replacement from $[n]$. This procedure corresponds to bootstrap. 
	\item For every $i\in [n]$, define 
	\[M_{-i} := \{j : i \notin M_j\},\]
	 which contains the indices of the training sets that are \emph{out-of-bag} for the $i$-th data point. 
	\item The idea now is to use an ensemble learning method that, for every $i$, aggregates $|M_{-i}|$ many predictions to identify a single collection of nested sets $\{\mathcal{F}_t^{-i}(x)\}_{t \in \mathcal{T}}$. For instance, one can obtain an estimate $\widehat{\mu}_{j}(\cdot)$ of the conditional mean based on the training data corresponding to $M_j$, for every $j$, and then construct 
	\[
	\mathcal{F}_t^{-i}(x) = [\widehat{\mu}_{-i}(x) - t, \widehat{\mu}_{-i}(x) + t],
	\]
	where $\widehat{\mu}_{-i}(\cdot)$ is some combination (such as the mean) of $\{\widehat{\mu}_j(\cdot)\}_{\{j: i\notin M_j\}}$.
	\item The remaining conformalization procedure is identical to $C^{\texttt{LOO}}(x)$ described in Section~\ref{sec:Jackknife-plus}. Define the residual score 
	$r_i(x, y) := \inf\left\{t\in\mathcal{T}:\,y\in\mathcal{F}_t^{-i}(x)\right\}.$
\end{enumerate}
Using the cross-conformal scheme, the prediction set for any $x \in \mathcal{X}$ is given as
\begin{equation}
C^{\texttt{OOB}}(x) := \left\{y:\,\sum_{i=1}^n \mathbbm{1}\{r_i(X_i, Y_i) < r_i(x, y)\} < (1-\alpha)(n+1)\right\}.
\label{eq:def-oob-cross}
\end{equation}
If $\mathcal{F}_t^{-i}(x)$ is an interval for all $1\le i\le n$ and $x\in\mathcal{X}$, then following the discussion in Section~\ref{sec:intervals-LOO}, we could also derive a jackknife+ style prediction interval that is guaranteed to be non-empty:
\begin{equation}
C^{\texttt{OOB-JP}}(x) := [q^{-}_{n,\alpha}(\ell_i(x)), q^{+}_{n,\alpha}(u_i(x))].
\label{eq:def-oob-jp}
\end{equation}

If $\mathcal{F}_t^{-i}(x)$ could further contain empty sets, a jackknife+ interval can still be derived following the discussion in Appendix~\ref{app:empty-case}, but we skip these details here. Once again, we have that for every $x \in \mathcal{X}$, $C^{\texttt{OOB}}(x) \subseteq C^{\texttt{OOB-JP}}(x)$; see Equation~\eqref{eq:loo-jp} for details. The computational discussion of Section~\ref{subsec:compuation-cross} extends to $C^{\texttt{OOB}}$.

Recently, \citet{kim2020predictive} provided a  $1-2\alpha$ coverage guarantee of the OOB-conformal method when $\mathcal{F}_t^{-i}(x) = [\widehat{\mu}_{-i}(x) - t, \widehat{\mu}_{-i}(x) + t]$ where $\widehat{\mu}_{-i}(\cdot)$ represents the aggregation of conditional mean estimate from $\{M_j\}_{i\notin M_j}$. We generalize their result to any sequence of nested sets and extend it to the cross-conformal style aggregation scheme. In order to obtain a coverage guarantee, the conformal method must ensure a certain exchangeability requirement is satisfied. To do so, the argument of \citet{kim2020predictive} required the number of random resamples $K$ to itself be drawn randomly from a binomial distribution. We assert the same requirement in the following theorem (proved in~Appendix~\ref{appsec:OOB-conformal}).
\begin{thm}\label{thm:OOB-cross-validity}
	Fix a permutation invariant ensemble technique that constructs sets $\{\mathcal{F}_t^{-i}\}_{t\in \mathcal{T}}$ given a collection of subsets of $[n]$. Fix integers $\widetilde{K}, m \geq 1$ and let  
	\begin{align*}
	K ~&\sim~ \mathrm{Binomial}\left(\widetilde{K}, \left(1 - \frac{1}{n+1}\right)^m\right) \quad (\mbox{in the case of bagging}),\;\mathrm{or},\\
	K ~&\sim~ \mathrm{Binomial}\left(\widetilde{K}, 1 - \frac{m}{n+1}\right)\quad (\mbox{in the case of subsampling}).
	\end{align*} 
	Then 
	$\mathbb{P}(Y_{n+1}\in  C^{\texttt{OOB}}(X_{n+1})) \ge 1- 2\alpha.$
\end{thm}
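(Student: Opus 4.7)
My plan is to adapt the coupling argument of \citet{kim2020predictive} to the nested framework. Two ingredients work in tandem: augmenting the data by the test point to enable an exchangeability argument, and a binomial thinning that makes the augmented procedure reduce in distribution to the original OOB scheme.

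First I set up the \emph{augmented procedure} on all $n+1$ points. Draw $\widetilde{K}$ subsets (or bags) $\widetilde{M}_1, \ldots, \widetilde{M}_{\widetilde{K}}$ of size $m$ from $[n+1]$. For each $i \in [n+1]$, apply the permutation-invariant ensemble technique to $\{\widetilde{M}_j : i \notin \widetilde{M}_j\}$ to obtain nested sets $\widetilde{\mathcal{F}}_t^{-i}$, and set $\widetilde{r}_i(x, y) := \inf\{t : y \in \widetilde{\mathcal{F}}_t^{-i}(x)\}$ and $\widetilde{R}_i := \widetilde{r}_i(X_i, Y_i)$. The combination of (i) exchangeability of $(X_i, Y_i)_{i \in [n+1]}$, (ii) symmetry under index permutations of the subset-drawing distribution over $[n+1]$, and (iii) permutation invariance of the ensemble implies that $(\widetilde{R}_1, \ldots, \widetilde{R}_{n+1})$ is exchangeable. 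Once exchangeable residuals are in hand, the argument used in the proof of Theorem~\ref{thm:validity-nested-conformal-jackknife-plus} applies verbatim and delivers $1 - 2\alpha$ coverage for the augmented prediction set built from the $\widetilde{R}_i$'s.

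The \emph{binomial coupling} is what transfers this bound to $C^{\texttt{OOB}}$. By a direct counting/independence calculation, $K^{\star} := |\{j : n+1 \notin \widetilde{M}_j\}|$ has exactly the Binomial distribution stated in the theorem (using $(1 - 1/(n+1))^m$ for bagging and $1 - m/(n+1)$ for subsampling), and conditionally on $K^{\star}$ the subsets not containing $n+1$ are i.i.d.\ uniform size-$m$ subsets (or bags) of $[n]$ -- matching the law of the original $(K, M_1, \ldots, M_K)$. Couple the two procedures so that $K^{\star} = K$ and these collections coincide as unordered families; under the coupling, the test-point quantities in the augmented procedure agree in law with those defining $C^{\texttt{OOB}}(X_{n+1})$, and the augmented coverage bound becomes the desired bound for the original OOB prediction set.

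\textbf{Main obstacle.} The delicate step is the identification of the original and augmented quantities under the coupling. Because $\widetilde{\mathcal{F}}_t^{-i}$ for $i \in [n]$ can be built from subsets containing the extra index $n+1$ while $\mathcal{F}_t^{-i}$ cannot, pointwise equality between $\widetilde{R}_i$ and $R_i$ fails, so the match must be carried out at the level of joint distributions -- slicing over which $\widetilde{M}_j$'s contain $n+1$ and invoking the conditional i.i.d.\ structure to line up the laws. It is exactly the Binomial parameterization of $K$ that makes these conditional laws coincide, so this is where the hypothesis on $K$ is used in an essential way.
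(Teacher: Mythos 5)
Your overall strategy---lift the procedure to $n+1$ points, observe that $K=|\{j: n+1\notin \widetilde{M}_j\}|$ has exactly the stated Binomial law with the bags avoiding $n+1$ conditionally distributed as the original $\{M_j\}_{j=1}^K$, and then transfer a coverage bound for the lifted procedure back to $C^{\texttt{OOB}}$---is precisely the paper's strategy, and your binomial-thinning computation is correct. The gap is in how you build the lifted residuals. You define $\widetilde{\mathcal{F}}_t^{-i}$ for $i\in[n+1]$ by ensembling \emph{all} bags not containing $i$, i.e.\ $\{\widetilde{M}_j: i\notin \widetilde{M}_j\}$. For $i\in[n]$ these bags may contain the index $n+1$, so $\widetilde{\mathcal{F}}_t^{-i}$ (and hence $\widetilde{R}_i$) depends on $(X_{n+1},Y_{n+1})$. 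The original $\mathcal{F}_t^{-i}$ is built only from bags drawn from $[n]$, so no conditioning on which $\widetilde{M}_j$'s contain $n+1$ can line up the two joint laws: the bags containing $n+1$ but not $i$ contribute to $\widetilde{\mathcal{F}}_t^{-i}$ and have no counterpart in the original procedure. The obstacle you flag at the end is therefore not a technicality to be smoothed over---it defeats the identification step for your construction.

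The paper's fix is to define the comparison matrix through \emph{leave-two-out} sets: $\mathcal{F}_t^{-(i,j)}$ is obtained by ensembling only $\{\widetilde{M}_k: i,j\notin \widetilde{M}_k\}$, and $D_{i,j}=r_{(i,j)}(X_i,Y_i)$. Then the entries that matter for the test point, $D_{i,n+1}$ and $D_{n+1,i}$, use only bags avoiding $n+1$, so after conditioning they coincide \emph{exactly} with $r_i(X_i,Y_i)$ and $r_i(X_{n+1},Y_{n+1})$ from the original OOB procedure; the event $\{Y_{n+1}\notin C^{\texttt{OOB}}(X_{n+1})\}$ becomes $\{n+1\in\mathcal{I}(D)\}$ with $\mathcal{I}(\cdot)$ as in \eqref{eq:Definition-I-of-A-set}. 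Relatedly, your claim that exchangeability of the vector $(\widetilde{R}_1,\dots,\widetilde{R}_{n+1})$ lets the proof of Theorem~\ref{thm:validity-nested-conformal-jackknife-plus} go through ``verbatim'' is not right: the $1-2\alpha$ bound does not come from exchangeable scalar residuals but from the comparison-matrix argument (Theorem~\ref{thm:fundamental-jackknife-result}), which needs the full matrix symmetry $D\overset{d}{=}\Pi D\Pi^{\top}$ of Lemma~\ref{lem:Exchangeable-proof}, and that symmetry is what forces each entry $D_{i,j}$ to be of the form $\mathbb{G}(Z_i,\{Z_1,\dots,Z_{n+1}\}\setminus\{Z_i,Z_j\})$, i.e.\ to exclude \emph{both} indices. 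Replacing your leave-one-out $\widetilde{\mathcal{F}}_t^{-i}$ with the pairwise $\mathcal{F}_t^{-(i,j)}$ repairs both issues simultaneously and recovers the paper's proof.
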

Because $C^{\texttt{OOB}}(x) \subseteq C^{\texttt{OOB-JP}}(x)$ for every $x \in \mathcal{X}$, the validity guarantee continues to hold for $C^{\texttt{OOB-JP}}(\cdot)$.
While we can only prove a $1-2\alpha$ coverage guarantee, it has been observed empirically that the OOB-conformal method with regression forests as the ensemble scheme and nested sets  $\{[\widehat{\mu}(x) - t\widehat{\sigma}(x), \widehat{\mu}(x) + t\widehat{\sigma}(x)]\}_{t \in \mathbb{R}^+}$ satisfies $1-\alpha$ coverage while providing the shortest prediction sets on average~\citep{bostrom2017accelerating}. %
On the other hand, the best empirically performing nested sets are the ones introduced by \citet{romano2019conformalized}: $\{[\widehat{q}_{\beta}(x) - t, \widehat{q}_{\beta}(x) + t]\}_{t \in \mathbb{R}}$ (for an appropriately chosen $\beta$). Using nested conformal, we show how these these two ideas can be seamlessly integrated: quantile based nested sets with an OOB-style aggregation scheme. In Section~\ref{sec:QOOB} we formally develop our novel method QOOB, and in Section~\ref{sec:numerical-experiments} we empirically verify that it achieves competitive results in terms of the length of prediction sets.

\section{QOOB: A novel conformal method using nested sets}
\label{sec:QOOB}
\rev{The nested conformal interpretation naturally separates the design of conformal methods into two complementary aspects: 
\begin{enumerate}[label=(\alph*)]
	\item identifying an information efficient nonconformity score based on a set of nested intervals, and
	\item performing sample efficient aggregation of the nonconformity scores while maintaining validity guarantees.
\end{enumerate}
In this section, we leverage this dichotomy to merge two threads of ideas in the conformal literature and develop a novel conformal method that empirically achieves state-of-the-art results in terms of the width of prediction sets.} 

First, we review what is known on aspect (b). While split-conformal based methods are computationally efficient, they lose sample efficiency due to sample splitting. Aggregated conformal methods such as cross-conformal, jackknife+, and OOB-conformal do not have this drawback and are the methods of choice for computationally feasible and sample efficient prediction sets. Among all aggregation techniques, the OOB-conformal method has been observed empirically to be the best aggregation scheme which uses all the training data efficiently~\citep{bostrom2017accelerating}. 

Next, we consider aspect (a), the design of the nested sets. The nested sets considered by \citet{bostrom2017accelerating} are $\{[\widehat{\mu}(x) - t\widehat{\sigma}(x), \widehat{\mu}(x) + t\widehat{\sigma}(x)]\}_{t \in \mathbb{R}^+}$ based on mean and variance estimates obtained using out-of-bag trees. On the other hand, it has been demonstrated that nested sets based on quantile estimates $\widehat{q}_s(x)$ given by $\{[\widehat{q}_{\beta}(x) - t, \widehat{q}_{1-\beta}(x) + t]\}_{t \in \mathbb{R}}$ perform better than those based on mean-variance estimates in the split conformal setting~\citep{romano2019conformalized, sesia2019comparison}. 

Building on these insights, we make the following suggestion: Quantile Out-of-Bag (QOOB) conformal; pronounced ``cube'' conformal. This method works in the following way. First, a quantile regression based on random forest~\citep{meinshausen2006quantile} with $T$ trees is learnt by subsampling or bagging the training data $T$ times. Next, the out-of-bag trees for every training point $(X_i, Y_i)$ are used to learn a quantile estimator function $\widehat{q}^{-i}_s(\cdot)$  for $s = \beta$ and $s = 1 - \beta$. Here $\beta = k\alpha$ for some constant $k$. Now for every $i$ and some $x \in \mathcal{X}$, we define the nested sets as
\[
\mathcal{F}_t^{-i}(x) ~:=~ [\widehat{q}_\beta^{-i}(x) - t, \widehat{q}_{1-\beta}^{-i}(x) + t].
\]

The nonconformity scores based on these nested sets are aggregated to provide a prediction set as described by~$C^{\texttt{OOB}}(x)$ in~\eqref{eq:def-oob-cross} of Section~\ref{sec:OOB-conformal}. Algorithm~\ref{alg:QOOB} describes QOOB procedurally. Following Section~\ref{subsec:compuation-cross}, the aggregation step of QOOB (line~\ref{line:qoob-aggregation}, Algorithm~\ref{alg:QOOB}) can be performed in time $O(n\log n)$. 
\begin{algorithm}[!t]
	\SetAlgoLined
	\KwInput{Training data $ \{(X_i, Y_i)\}_{i=1}^n$, test point $x$, desired coverage level $\alpha$, number of trees $T$, nominal quantile level $\beta$ (default $=2\alpha$)}
	\KwOutput{Prediction set $C^x \subseteq \mathcal{Y}$}
	$\{M_j\}_{j = 1}^T \gets$ training bags drawn independently from $[n]$ using subsampling or bootstrap\;
	$\{M_{-i}\}_{i = 1}^n \gets  \{j : i \notin M_j\}$\;
	\For{$j \gets 1$ \textbf{to} $T$}{
		$\phi_j \gets $ Quantile regression trees learnt using the data-points in $M_j$\\\qquad \qquad (this step could include subsampling of features)\;
	}
	\For{$i \gets 1$ \textbf{to} $n$}{
		$\Phi_{-i} \gets \{\phi_j: j \in M_{-i}\}$\;
		$\widehat{q}^{-i}(\cdot)  \gets$ quantile regression forest using the trees $\Phi_{-i}$\;
		$\mathcal{F}_t^{-i}(\cdot) \gets  [\widehat{q}_\beta^{-i}(\cdot) - t, \widehat{q}_{1-\beta}^{-i}(\cdot) + t]$\; 
		$r_i(\cdot, \cdot) \gets ((x, y) \to \inf\left\{t\in\mathcal{T}:\,y\in\mathcal{F}_t^{-i}(x)\right\})$\;
	}
	
	$C^x \gets $ OOB prediction set defined in Equation~\eqref{eq:def-oob-cross}; (call Algorithm~\ref{alg:efficientCC} with $\mathcal{Y}^x$, $\mathcal{S}^x$ computed using $\mathcal{F}_t^{-i}(\cdot)$, the training data $\{(X_i, Y_i)\}_{i=1}^n$ and test point $x$ as described in equations~\eqref{eq:y-k-main}, ~\eqref{eq:s-k-main}) \label{line:qoob-aggregation}\\% \{(X_i, Y_i)\}_{i=1}^n$, $x$, and $(r_i)_{i=1}^n$)\;
	\Return $C^x$\;
	\caption{Quantile Out-of-Bag conformal (QOOB)}
	\label{alg:QOOB}
\end{algorithm}

Since QOOB is a special case of OOB-conformal, it inherits an assumption-free $1-2\alpha$ coverage guarantee from Theorem~\ref{thm:OOB-cross-validity} if $K$ is drawn from an appropriate binomial distribution as described in the theorem. In practice, we typically obtain $1-\alpha$ coverage with a fixed $K$. In Section~\ref{sec:numerical-experiments}, we empirically demonstrate that QOOB achieves state-of-the-art performance on multiple real-world datasets. We also discuss three aspects of our method: 
\begin{enumerate}[label=(\alph*)]
	\item how to select the nominal quantile level $\beta = k\alpha$, 
	\item the effect of the number of trees $T$ on the performance, and 
	\item the performance of the jackknife+ version of our method (QOOB-JP), which corresponds to the OOB-JP style aggregation (equation~\eqref{eq:def-oob-jp}) of quantile-based nonconformity scores.
\end{enumerate}

\section{Numerical comparisons}\label{sec:numerical-experiments}
We compare several methods discussed in this paper using synthetic and real datasets. MATLAB code to execute QOOB and reproduce the experiments in this section is provided at \url{https://github.com/AIgen/QOOB}. Some experiments on synthetic data are discussed in Section~\ref{subsec:synthetic}; the rest of this section discusses results on real datasets. We use the following 6 datasets from the UCI repository: blog feedback, concrete strength, superconductivity, news popularity, kernel performance and protein structure. Metadata and links for these datasets are provided in Appendix~\ref{appsec:additional-info-exps}. Appendix~\ref{appsec:more-exps} also contains experimental results with 5 additional UCI datasets. 

In order to assess the coverage and width properties, we construct 100 versions of each of the datasets, by independently drawing 1000 random data points (uniformly without replacement) from the full dataset. Then we split each such version into two parts: training and testing, with sizes $768$\footnote{the number of training points is divisible by many factors, which is useful for creating a varying number of folds for K-fold methods} and $232$ respectively. Hence corresponding to each dataset, we get 100 different versions with 768 training and 232 testing points. 

For each compared conformal method, we report the following two metrics:
\begin{itemize}
	
	\item \emph{Mean-width}: For a prediction set $C(x)\subseteq \mathcal{Y} = \mathbb{R}$ its width is defined as its Lebesgue measure. For instance, if $C(x)$ is an interval, then the width is its length and if $C(x)$ is a union of two or more disjoint intervals, then the width is the sum of the lengths of these disjoint intervals. We report the average over the mean-width given by
	\begin{equation}\label{eq:mean-width}
	\mbox{Ave-Mean-Width} := \frac{1}{100}\sum_{b = 1}^{100}\, \left(\frac{1}{232}\sum_{i=1}^{232} \mbox{width}({C}_b(X_i^b))\right).
	\end{equation}
	Here ${C}_b(\cdot)$ is a prediction set learnt from the $b$-th version of a dataset. The outer mean is the average over 100 versions of a dataset. The inner mean is the average of the width over the testing points in a particular version of a dataset.
	\item \emph{Mean-coverage}: We have proved finite-sample coverage guarantees for all our methods and to verify (empirically) this property, we also report the average over the mean-coverage given by
	\begin{equation}\label{eq:mean-coverage}
	\mbox{Ave-Mean-Coverage} := \frac{1}{100}\sum_{b=1}^{100}\,\left(\frac{1}{232}\sum_{i=1}^{232} \mathbbm{1}\{Y_i^b \in {C}_b(X_i^b)\}\right).
	\end{equation}
	
\end{itemize} 
In addition to reporting the average over versions of a dataset, we also report the estimated standard deviation of the average (to guage the fluctuations). In the rest of the discussion, the qualification `average' may be skipped for succinctness, but all reports and conclusions are to be understood as comments on the average value for mean-width and mean-coverage. 

Random forest (RF) based regressors perform well across different conformal methods and will be used as the base regressor in our experiments, with varying  $T$, the number of trees. Each tree is trained on an independently drawn bootstrap sample from the training set (containing about $(1-1/e)100\% \approx 63.2\%$ of all training points).
The numerical comparisons will use the following methods:
\begin{enumerate}
	\item SC-$T$: Split conformal~\citep{papadopoulos2002inductive, lei2018distribution} with nested sets $\{[\widehat{\mu}(x) - t, \widehat{\mu}(x) + t]\}_{t \in \mathbb{R}^+}$ and $T$ trees. 
	\item Split-CQR-$T$ ($2\alpha$): Split conformalized quantile regression~\citep{romano2019conformalized} with $T$ trees and nominal quantile level $2\alpha$. This corresponds to the nested sets $
	\{ [\widehat{q}_{2\alpha}^{-i}(x) - t, \widehat{q}_{1-{2\alpha}}^{-i}(x) + t]\}_{t \in \mathbb{R}}$. Quantile conformal methods require the nominal quantile level to be set carefully, as also noted by~\citet{sesia2019comparison}. In our experiments, we observe that Split-CQR-$T$ performs well at the nominal quantile level $2\alpha$. This is discussed more in Section~\ref{subsec:nominal-quantile}. 
	\item 8-fold-CC-$T$: 8-fold cross-conformal~\citep{vovk2015cross, barber2019predictive} with $T$
	trees learnt for every fold and the nested sets $\{[\widehat{\mu}(x) - t, \widehat{\mu}(x) + t]\}_{t \in \mathbb{R}^+}$. Leave-one-out cross-conformal is computationally expensive if $T$ trees are to be trained for each fold, and does not lead to significantly improved performance compared to OOB-CC in our experiments. Hence we did not report a detailed comparison across all datasets. 
	\item OOB-CC-$T$: OOB-cross-conformal~\citep{johansson2014regression, kim2020predictive} with $T$
	trees. This method considers the nested sets $\{[\widehat{\mu}(x) - t, \widehat{\mu}(x) + t]\}_{t \in \mathbb{R}^+}$ where $\widehat{\mu}$ is the average of the mean-predictions for $x$ on out-of-bag trees. 
	\item OOB-NCC-$T$: OOB-normalized-cross-conformal~\citep{bostrom2017accelerating} with $T$ trees. This method considers nested sets $\{[\widehat{\mu}(x) - t\widehat{\sigma}(x), \widehat{\mu}(x) + t\widehat{\sigma}(x)]\}_{t \in \mathbb{R}^+}$ where $\widehat{\sigma}(x)$ is the standard deviation of mean-predictions for $x$ on out-of-bag trees.
	\item QOOB-$T$ ($2\alpha$): OOB-quantile-cross-conformal with $T$ trees and nominal quantile level $\beta = 2\alpha$. This is our proposed method. In our experiments, we observe that QOOB-T performs well at the nominal quantile level $2\alpha$. We elaborate more on the nominal quantile selection in Section~\ref{subsec:nominal-quantile}. %
\end{enumerate}

\begin{table}[!h]
	\caption{Mean-width~\eqref{eq:mean-width} of conformal methods with regression forests ($\alpha = 0.1$). Average values across 100 simulations are reported with the standard deviation in brackets. }
	\centering
	\resizebox{\textwidth}{!}{
		\begin{tabular}{c|c|c|c|c|c|c}
			\hline
			\textbf{Method} & {Blog} & {Protein} & {Concrete}& {News} & {Kernel} & {Superconductivity} \\ \hline \hline 
			SC-100 & 25.54 & 16.88 & 22.29 & 12491.84 & 452.71 & 54.46 \\[-0.15in] 
			& (0.71) & (0.08) & (0.14) & (348.07) & (5.10) & (0.37) \\\hline
			Split-CQR-100 (2$\alpha$) & \textbf{12.22} & 14.20 & 21.45 & \textbf{7468.15} & \textbf{295.49} & 39.59 \\[-0.15in] 
			& (0.35) & (0.09) & (0.12) & (136.93) & (3.09) & (0.27)\\\hline
			8-fold-CC-100 & 24.83 & 16.42 & 19.23 & 12461.40 & 411.81 & 50.30 \\[-0.15in] 
			& (0.44) & (0.05) & (0.04) & (263.54) & (3.4299) & (0.24) \\\hline
			OOB-CC-100 & 24.76 & 16.38 & 18.69 & 12357.58 & 402.97 & 49.31 \\[-0.15in] 
			& (0.50) & (0.04) & (0.03) & (213.72) & (3.13) & (0.24) \\\hline
			OOB-NCC-100 & 20.31 & 14.87 & 18.66 & 11500.22 & 353.35 & 39.55 \\ [-0.15in]
			& (0.42) & (0.05) & (0.06) & (320.91) & (2.95) & (0.22)\\\hline
			QOOB-100 ($2\alpha$) & 14.43 & \textbf{13.74} & \textbf{18.19} & 7941.19 & 300.04 & \textbf{37.04} \\[-0.15in] 
			& (0.38) & (0.05) & (0.05) & (89.21) & (2.70) & (0.18)\\\hline
	\end{tabular}}
	\label{table:overall-comparison-width}
\end{table}

\begin{table}[!h]
	\caption{Mean-coverage~\eqref{eq:mean-coverage} of conformal methods with regression forests ($\alpha = 0.1$).  Average values across 100 simulations are reported. The standard deviation of these average values are zero up to two significant digits. }
	\centering
	\resizebox{\textwidth}{!}{
		\begin{tabular}{c|c|c|c|c|c|c}
			\hline
			\textbf{Method} & {Blog} & {Protein} & {Concrete}& {News} & {Kernel} & {Superconductivity} \\ \hline \hline 
			SC-100 & 0.90 & 0.90 & 0.90 & 0.90 & 0.90 & 0.90 \\ \hline
			Split-CQR-100 ($2\alpha$) & 0.91 & 0.90 & 0.90 & 0.90 & 0.90 & 0.90 \\ \hline
			8-fold-CC-100 & 0.91 & 0.91 & 0.91 & 0.91 & 0.91 & 0.90 \\ \hline
			OOB-CC-100 & 0.90 & 0.91 & 0.91 & 0.91 & 0.91 & 0.90 \\ \hline
			OOB-NCC-100 & 0.92 & 0.91 & 0.91 & 0.92 & 0.93 & 0.91 \\ \hline
			QOOB-100 ($2\alpha$) & 0.92 & 0.91 & 0.92 & 0.91 & 0.93 & 0.91 \\ \hline
	\end{tabular}}
	\label{table:overall-comparison-coverage}
\end{table}
\label{subsec:QOOB-main}

Tables~\ref{table:overall-comparison-width} and~\ref{table:overall-comparison-coverage} report the mean-width and mean-coverage that these conformal methods achieve on 6 datasets. Here, the number of trees $T$ is set to $100$ for all the methods.
We draw the following conclusions: 

\begin{itemize}
	\item Our novel method QOOB achieves the shortest or close to the shortest mean-width compared to other methods while satisfying the $1-\alpha$ coverage guarantee. The closest competitor is Split-CQR. As we further investigate in Section~\ref{subsec:QOOB-trees}, even on datasets where Split-CQR performs better than QOOB, if the number of trees are increased beyond 100, QOOB shows a decrease in mean-width while Split-CQR does not improve. For example, on the kernel dataset, QOOB outperforms Split-CQR at 400 trees.
	\item In Table~\ref{table:overall-comparison-width}, QOOB typically has low values for the standard deviation of the average-mean-width across all methods. This entails more reliability to our method, which may be desirable in some applications. In Sections~\ref{subsec:nominal-quantile} and~\ref{subsec:QOOB-trees}, we observe that this property is true across different number of trees and nominal quantile levels as well. 
	\item On every dataset, QOOB achieves coverage higher than the prescribed value of $1-\alpha$, with a margin of 1-3\%. Surprisingly this is true even if it shortest mean-width among all methods. It may be possible to further improve the performance of QOOB in terms of mean-width by investigating the cause for this over-coverage. 
	\item OOB-CC does better than 8-fold-CC with faster running times. Thus, to develop QOOB, we chose to work with out-of-bag conformal. %
\end{itemize}

Appendix~\ref{appsec:more-exps} contains Table \ref{table:overall-comparison-width} and \ref{table:overall-comparison-coverage} style results for 5 additional UCI datasets. QOOB continues to perform well on these additional datasets. In the rest of this section, we present a more nuanced empirical study of QOOB. Each subsection focuses on a different aspect of QOOB's empirical performance: %
\begin{itemize}
	\item In Section~\ref{subsec:nominal-quantile}, we discuss the significant impact that nominal quantile selection has on the performance of QOOB and Split-CQR. We observe that $2\alpha$ is an appropriate nominal quantile recommendation for both methods. 
	\item In Section~\ref{subsec:QOOB-trees}, we show that increasing the number of trees $T$ leads to decreasing mean-widths for QOOB, while this is not true for its closest competitor Split-CQR. QOOB also outperforms other competing OOB methods across different values for the number of trees $T$. 
	\item In Section~\ref{subsec:QOOB-sample-size}, we compare QOOB and Split-CQR in the small sample size (small $n$) regime where we expect sample splitting methods to lose statistical efficiency. We confirm that QOOB significnatly outperforms Split-CQR on all 6 datasets we have considered for $n \leq 100$. 
	\item In Section~\ref{subsec:cc-jp}, we compare the related methods of cross-conformal and jackknife+ and demonstrate that there exist settings where cross-conformal leads to shorter intervals compared to jackknife+, while having a similar computational cost (as discussed in Section~\ref{subsec:compuation-cross}). This justifies performing OOB conformal with cross-conformal aggregation, instead of jackknife+ style aggregation as suggested by \citet{kim2020predictive}.
	\item In Section~\ref{subsec:synthetic}, we demonstrate that QOOB achieves conditional coverage on a synthetic dataset. We use the data distribution designed by \citet{romano2019conformalized} for demonstrating the conditional coverage of Split-CQR. %
\end{itemize}

\subsection{Nominal quantile selection has a significant effect on QOOB}
\label{subsec:nominal-quantile}
QOOB and Split-CQR both use nominal quantiles $\widehat{q}_\beta$, $\widehat{q}_{1-\beta}$ from a learnt quantile prediction model. In the case of Split-CQR, as observed by \citet{romano2019conformalized} and \citet{sesia2019comparison}, tuning $\beta$ leads to improved performance. %
We perform a comparison of QOOB and Split-CQR at different values of $\beta$. Figure~\ref{fig:width-nominal-level} reports mean-widths for QOOB-100 ($k\alpha$) and Split-CQR-100 ($k\alpha$), with OOB-NCC-100 as a fixed baseline (that does not vary with $k$). We observe that the nominal quantile level significantly affects the performance of Split-CQR and QOOB. Both methods perform well at the nominal quantile of about $2\alpha$. %
We encourage a more detailed study on the theoretical and empirical aspects of nominal quantile selection in future work. We also note that for all values of $k$, QOOB typically has smaller standard deviation of the average-mean-width compared to Split-CQR, implying more reliability in the predictions.

\begin{figure}[t!]
	\includegraphics[width=0.9\textwidth]{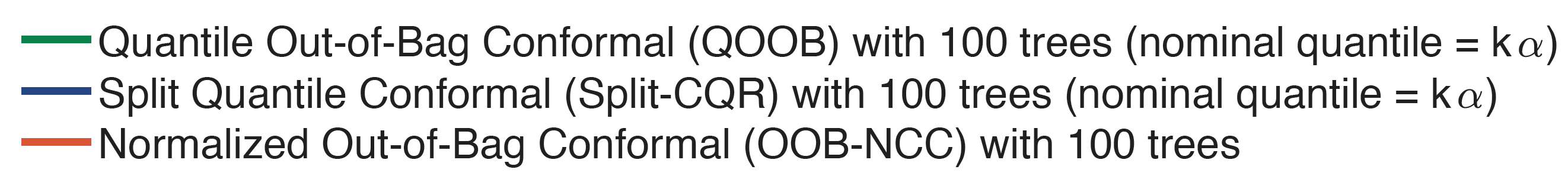}\vspace{-0.8cm}\\
	\centering
	\subfloat[Concrete structure.]{\includegraphics[width=0.33\textwidth]{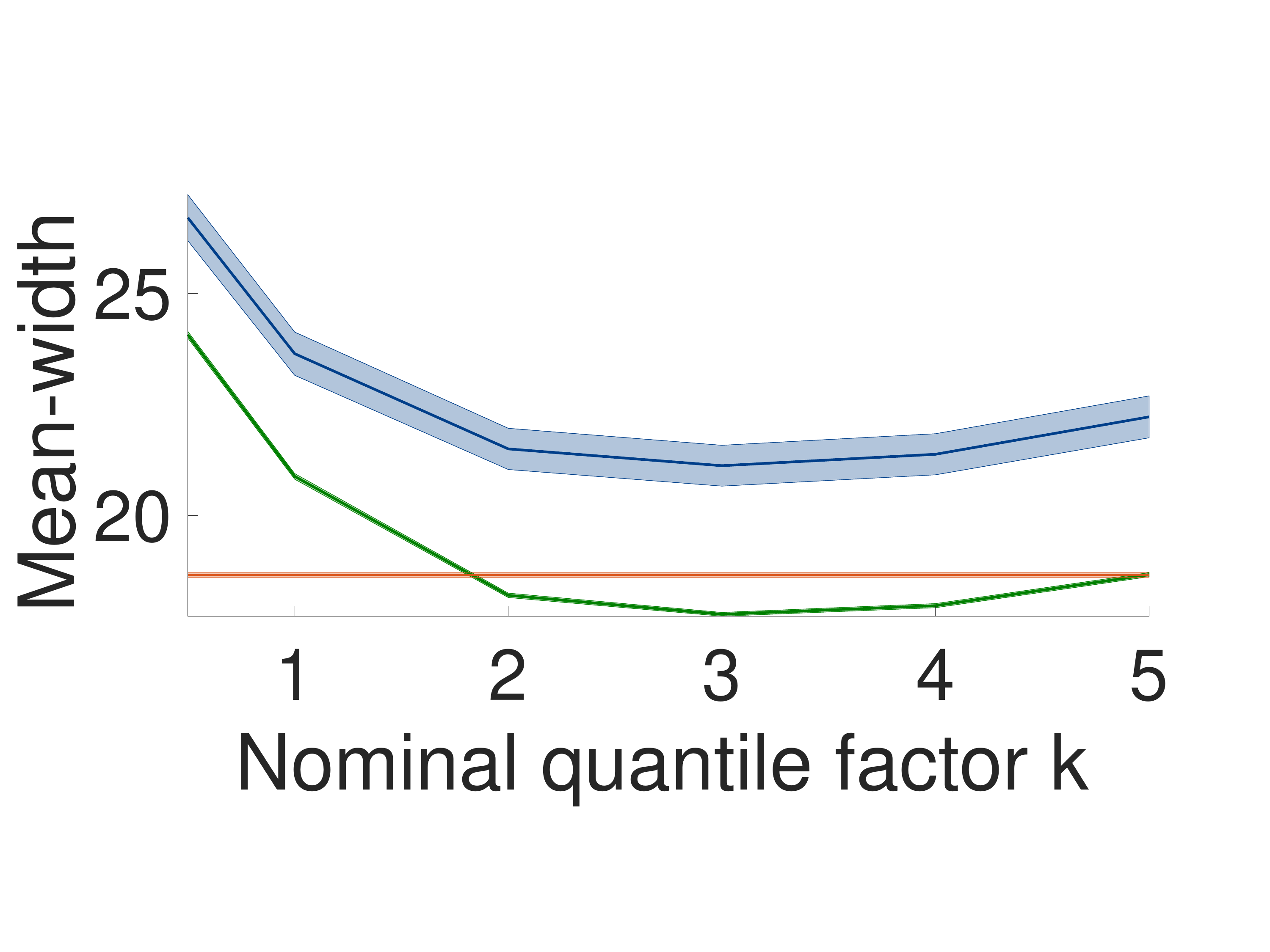}}
	\subfloat[Blog feedback.]{ \includegraphics[width=0.33\textwidth]{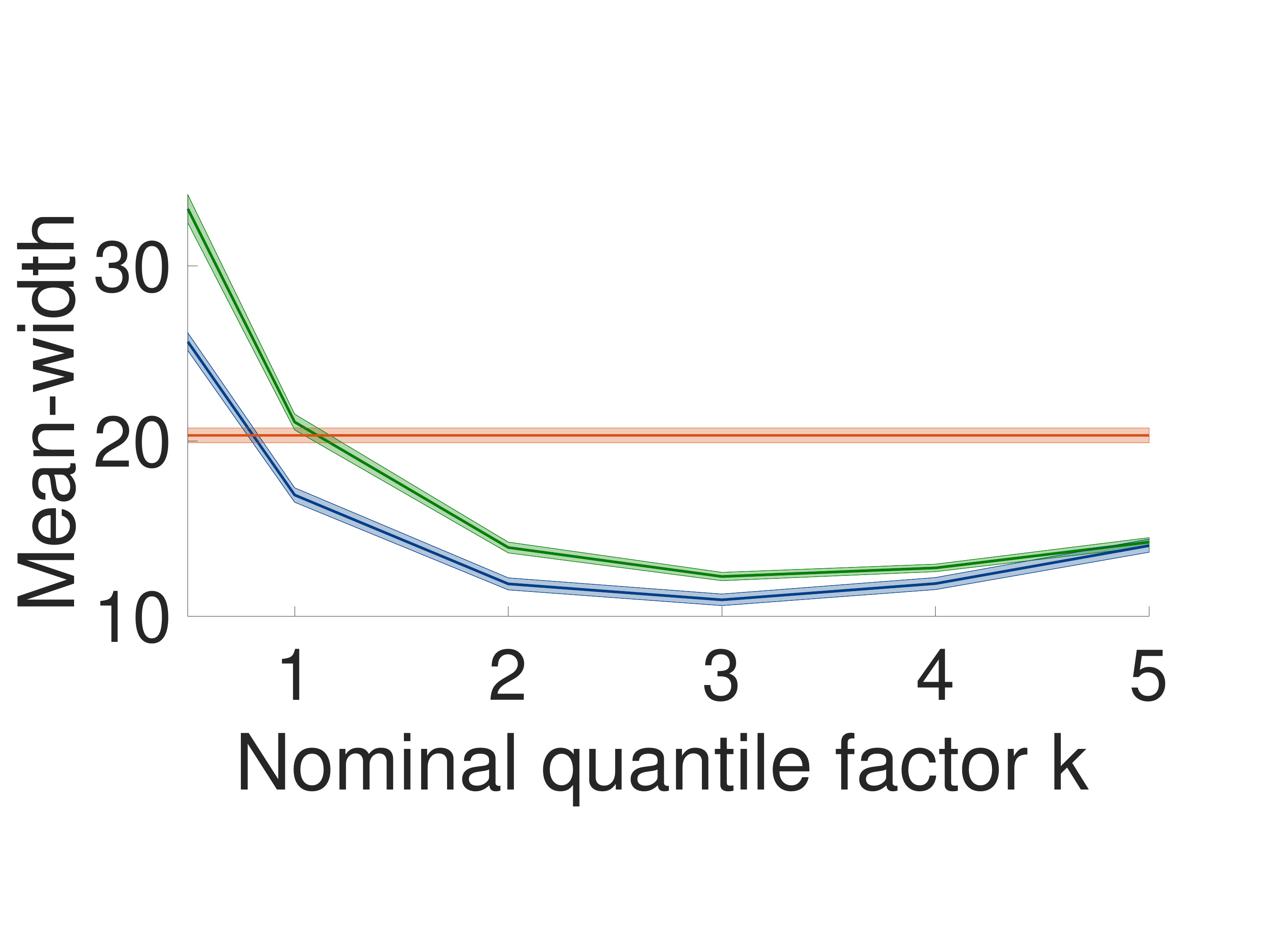}}
	\subfloat[Protein structure.]{\includegraphics[width=0.33\textwidth]{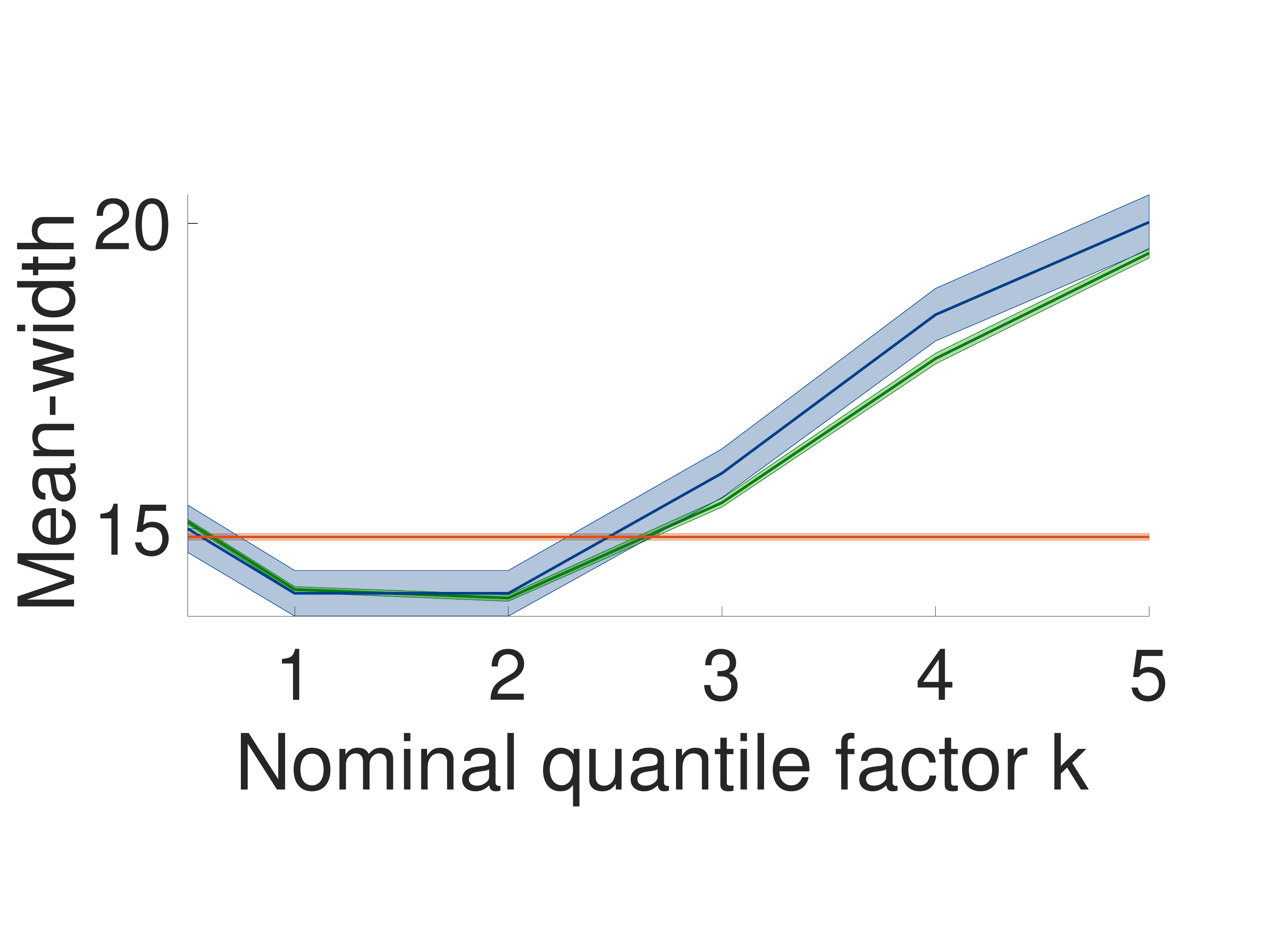}}\\
	\subfloat[Superconductivity.]{\includegraphics[width=0.33\textwidth]{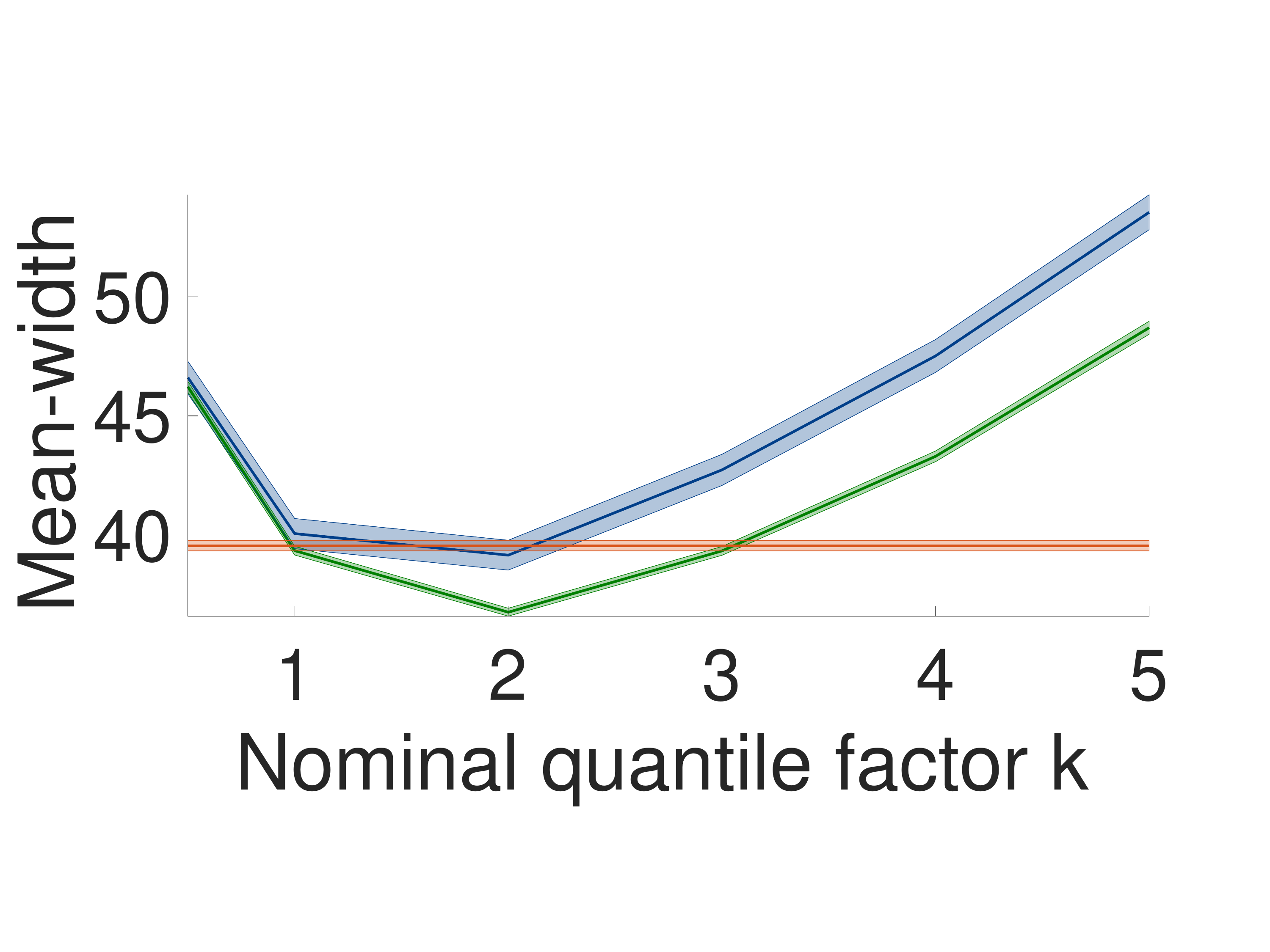}}
	\subfloat[News popularity.]{ \includegraphics[width=0.33\textwidth]{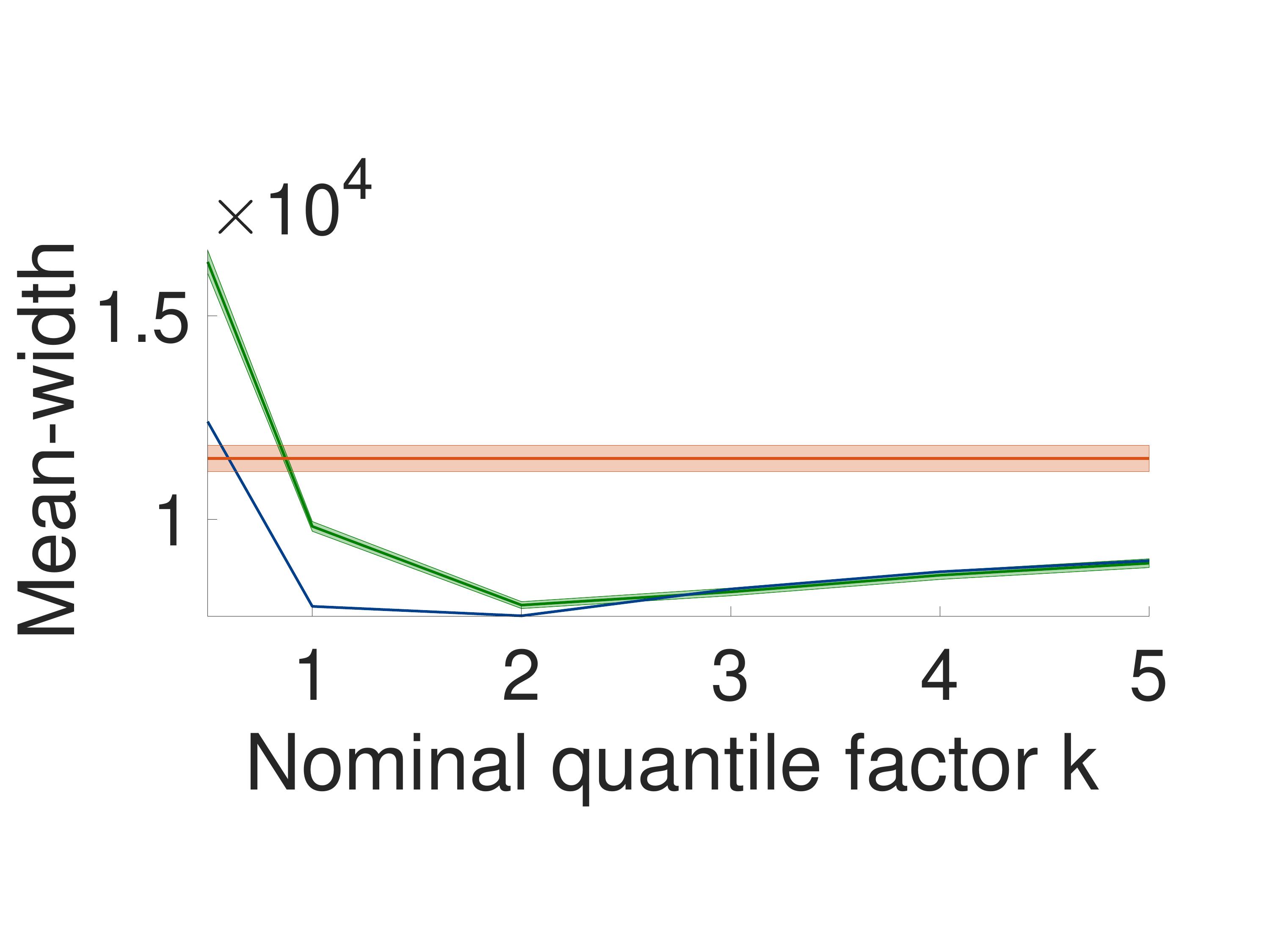}}
	\subfloat[Kernel performance.]{\includegraphics[width=0.33\textwidth]{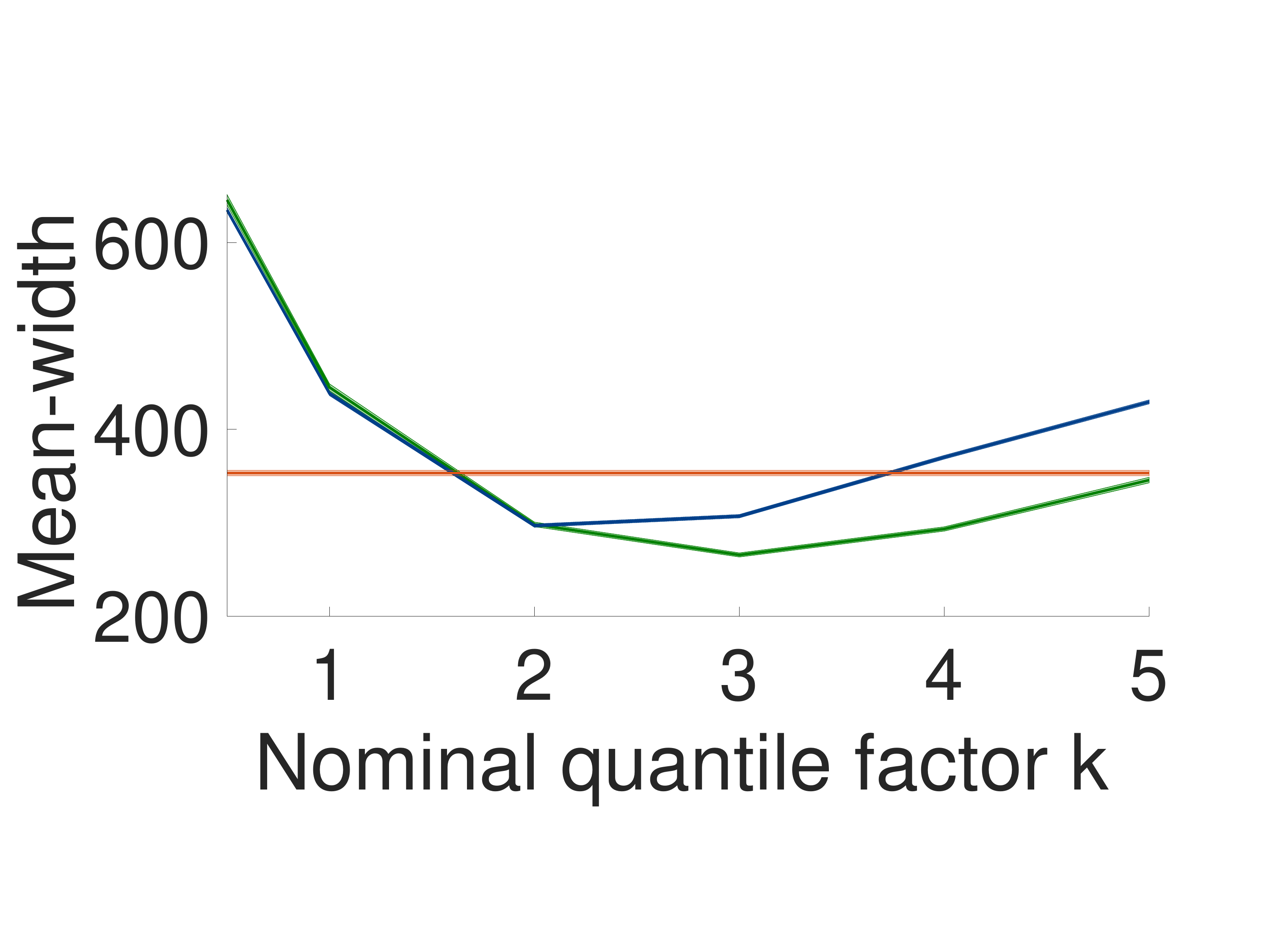}}
	\caption{QOOB and Split-CQR are sensitive to the nominal quantile level $\beta = k\alpha$. At $\beta \approx 2\alpha$, QOOB performs better than OOB-NCC for all datasets (OOB-NCC does not require nominal quantile tuning and is a constant baseline). For the plots above, $\alpha = 0.1$. All methods plotted have empirical mean-coverage at least $1-\alpha$. The mean-width values are averaged over 100 subsamples. The shaded area denotes $\pm 1$ std-dev for the average of mean-width. }
	\label{fig:width-nominal-level}
\end{figure}

\subsection{QOOB has shorter prediction intervals as we increase the number of trees}
\label{subsec:QOOB-trees}
In this experiment, we investigate the performance of the competitive conformal methods from Table~\ref{table:overall-comparison-width} as the number of trees $T$ are varied. For QOOB and Split-CQR, we fix the quantile level to $\beta = 2\alpha$. We also compare with OOB-NCC and another quantile based OOB method described as follows. Like QOOB, suppose we are given a quantile estimator $\widehat{q}_s(\cdot)$. Consider the quantile-based construction of nested sets suggested by~\citet{chernozhukov2019distributional}: 
\[\mathcal{F}_t(x) = [\widehat{q}_t(x), \widehat{q}_{1-t}(x)]_{t \in (0, 1/2)}.
\]
Using these nested sets and the OOB conformal scheme (Section~\ref{sec:OOB-conformal}) leads to the QOOB-D method (for `distributional' conformal prediction as the original authors called it). Since QOOB-D does not require nominal quantile selection, we considered this method as a possible solution to the nominal quantile problem of QOOB and Split-CQR (Section~\ref{subsec:nominal-quantile}). The results are reported in Figure~\ref{fig:width-trees} for $T$ ranging from 50 to 400.

\begin{figure}[h!]
	\centering
	\includegraphics[width=0.9\textwidth]{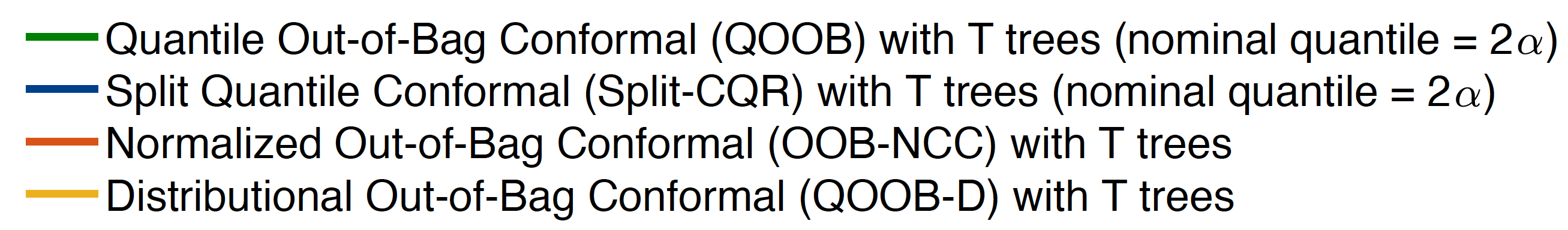}\vspace{-0.6cm}
	\subfloat[Concrete structure.]{\includegraphics[width=0.33\textwidth]{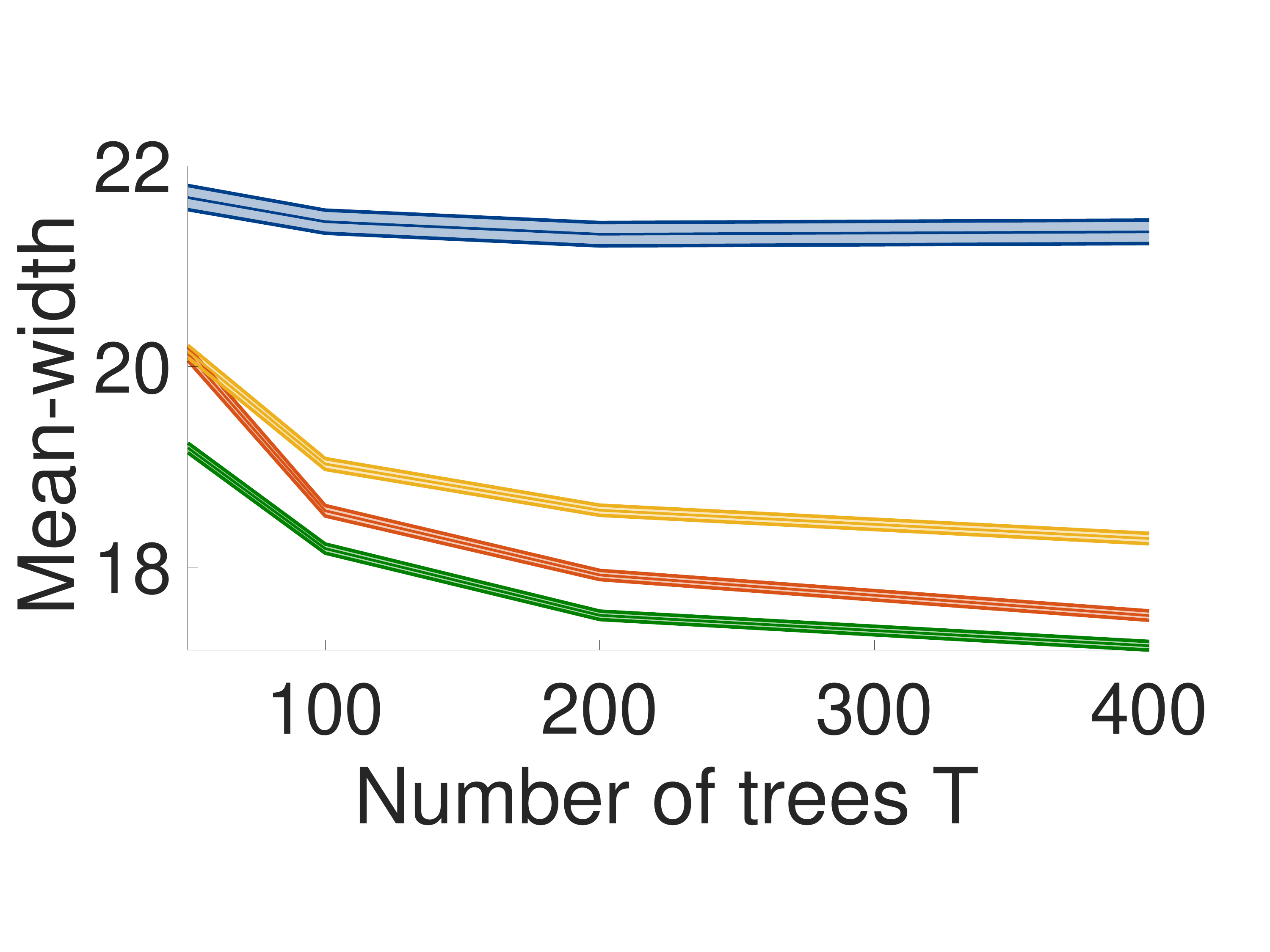}}
	\subfloat[Blog feedback.]{ \includegraphics[width=0.33\textwidth]{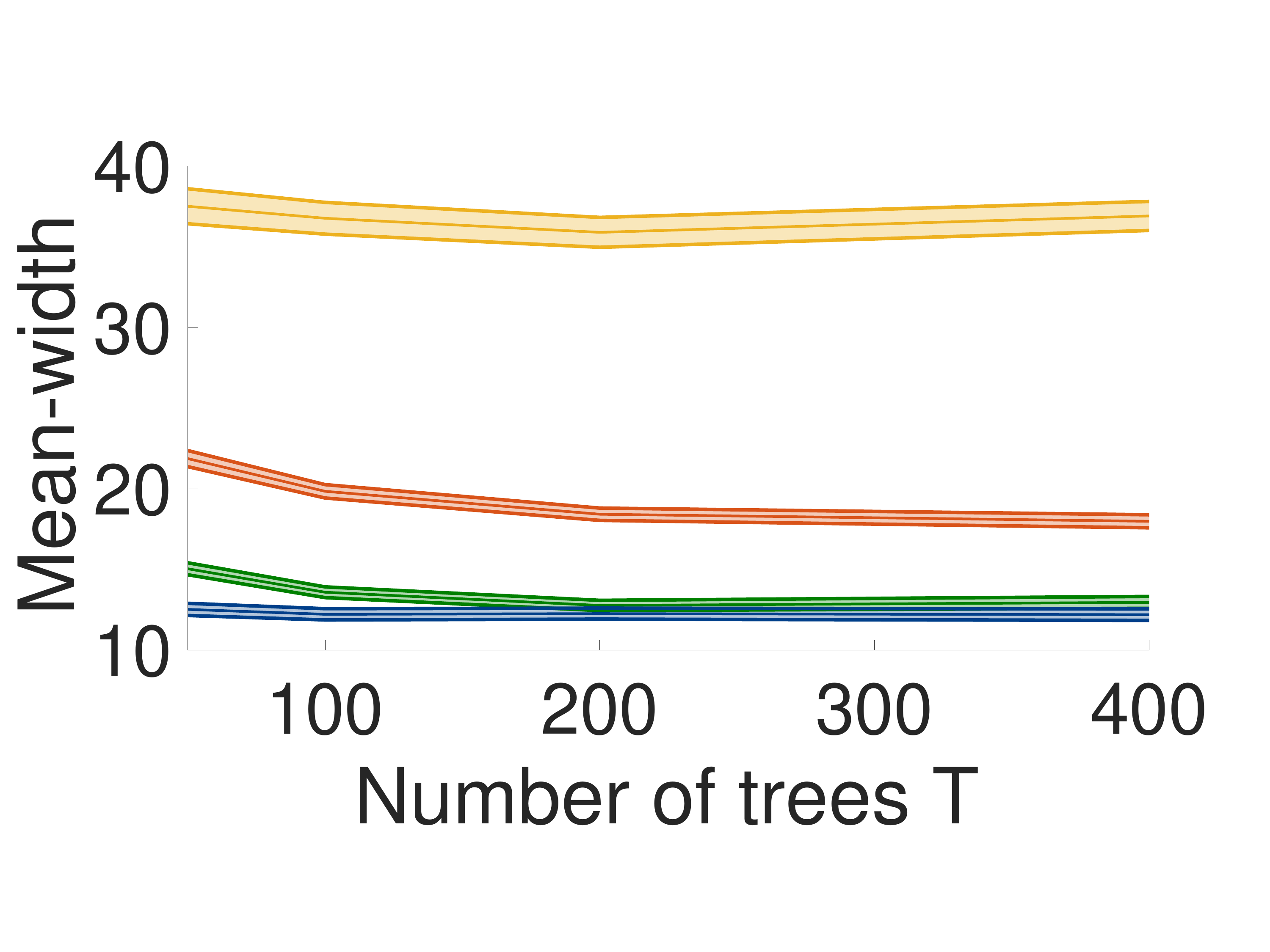}}
	\subfloat[Protein structure.]{\includegraphics[width=0.33\textwidth]{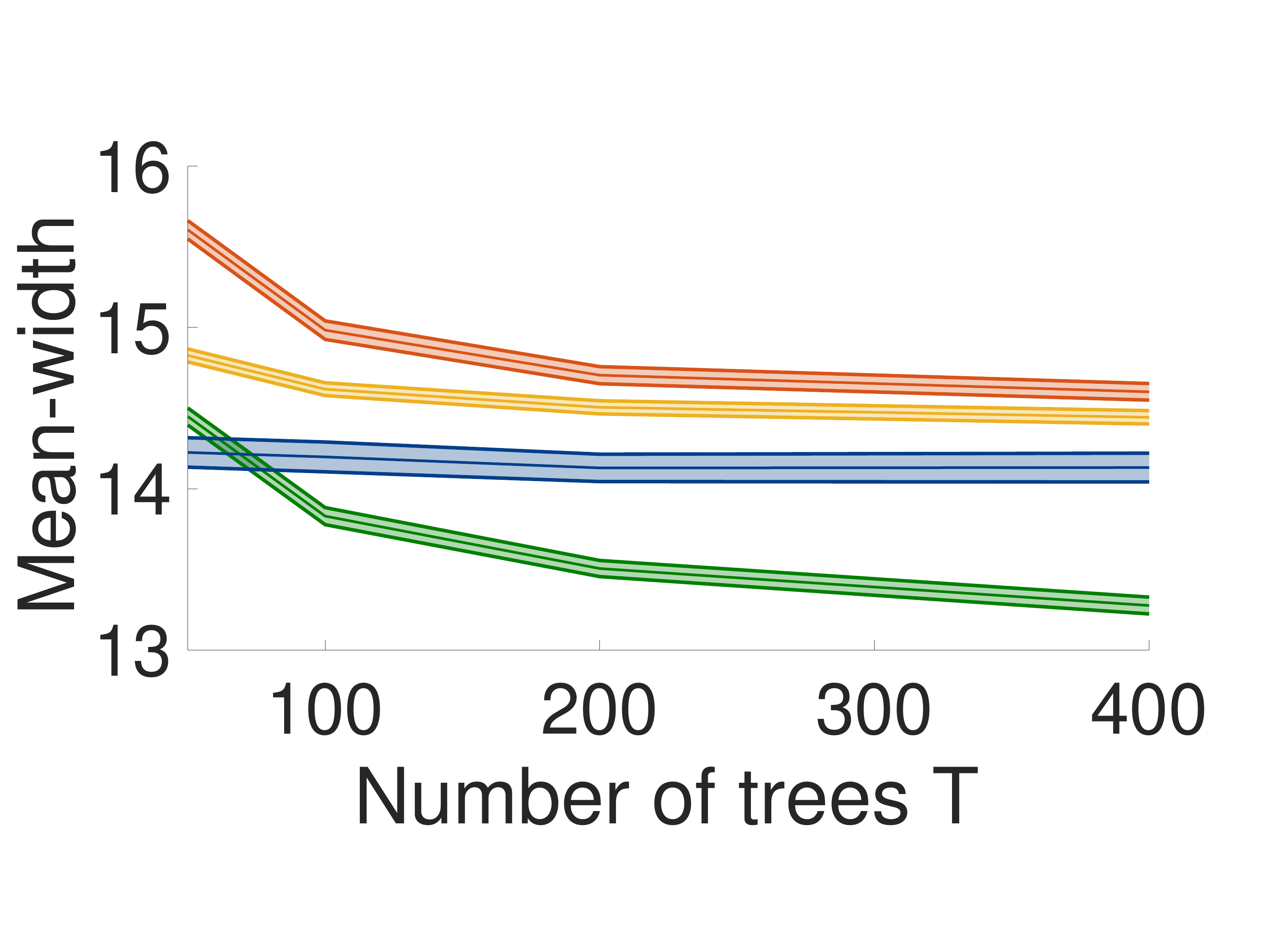}}\\
	\subfloat[Superconductivity.]{\includegraphics[width=0.33\textwidth]{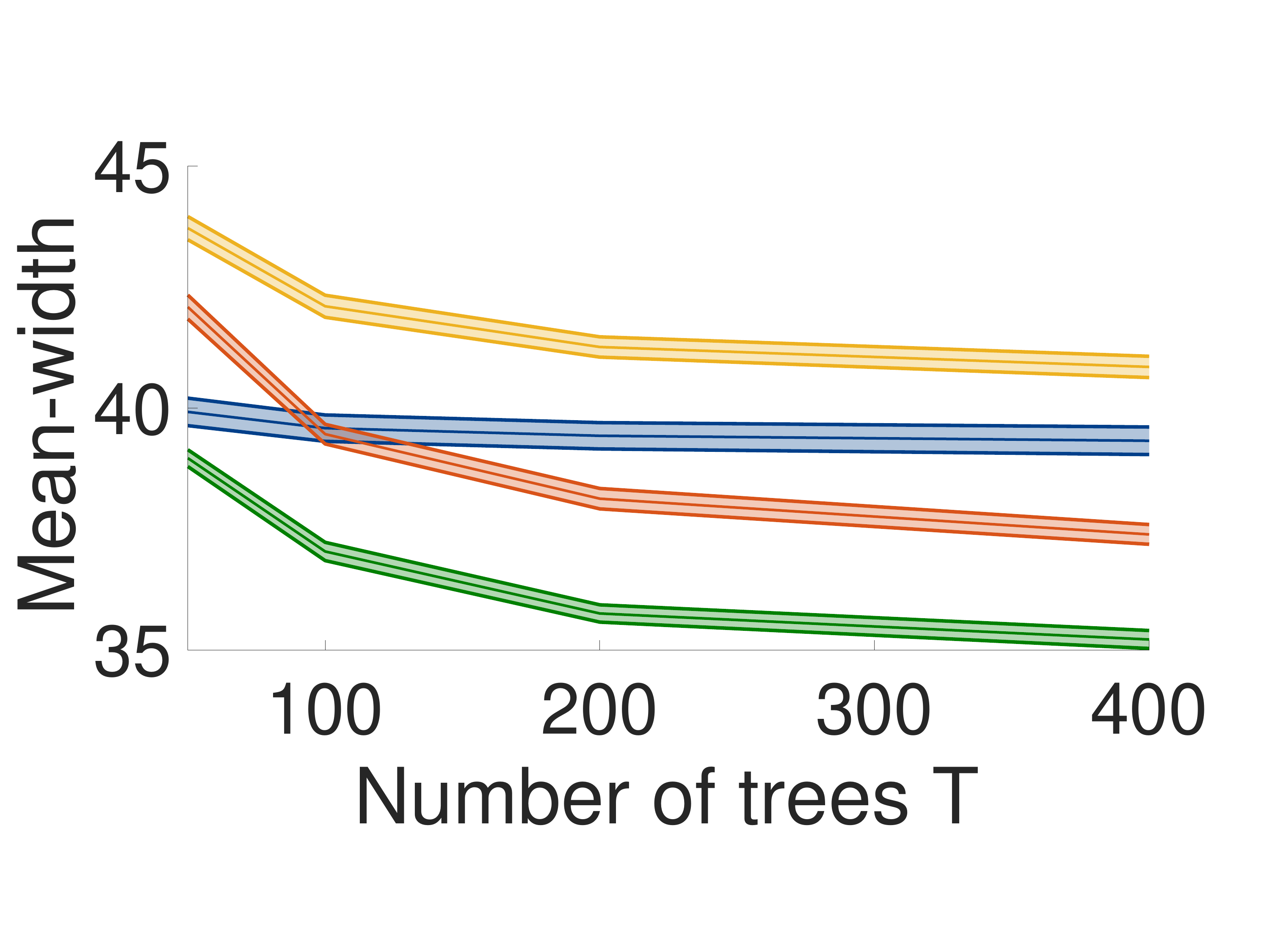}}
	\subfloat[News popularity.]{ \includegraphics[width=0.33\textwidth]{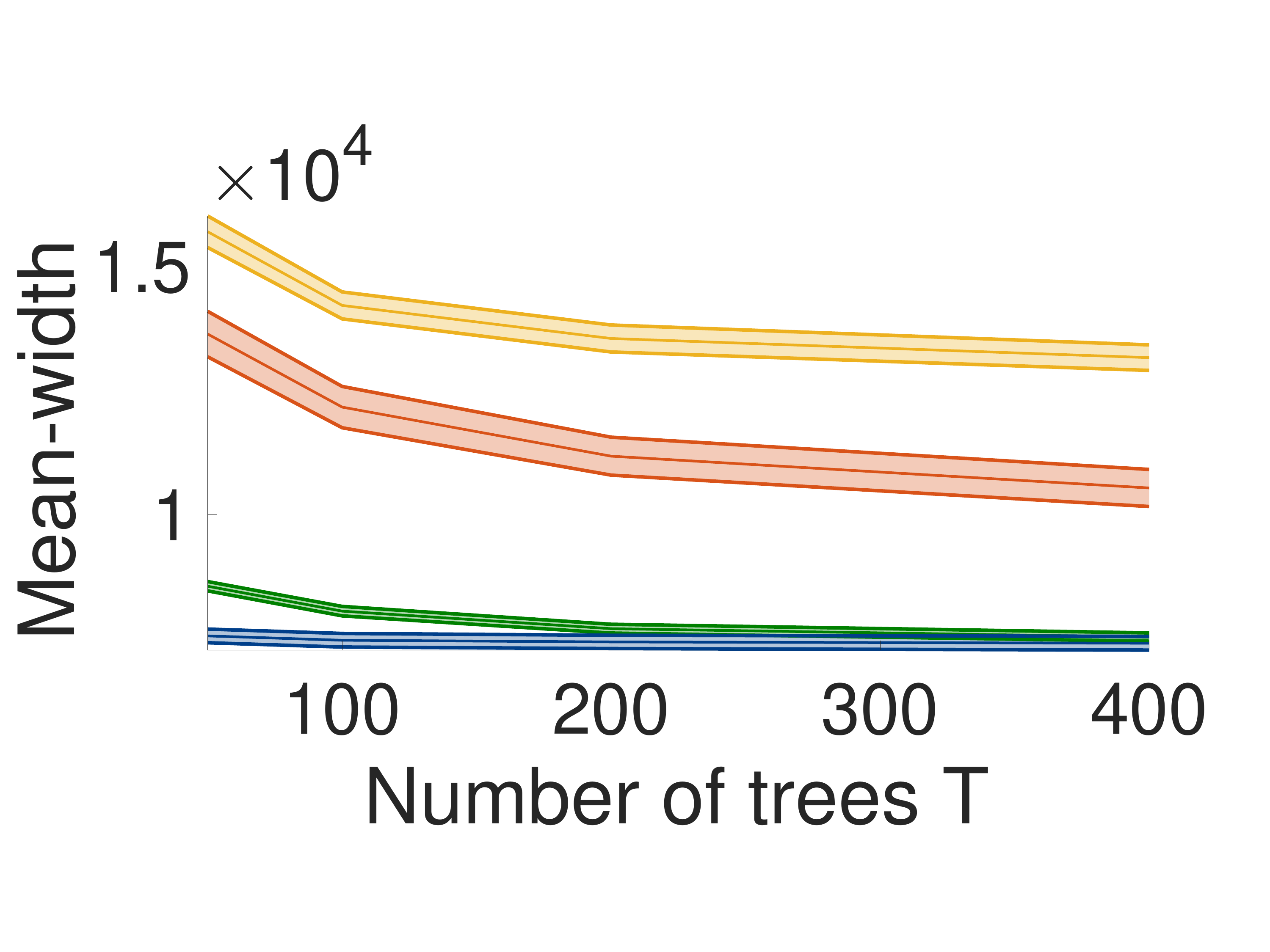}}
	\subfloat[Kernel performance.]{\includegraphics[width=0.33\textwidth]{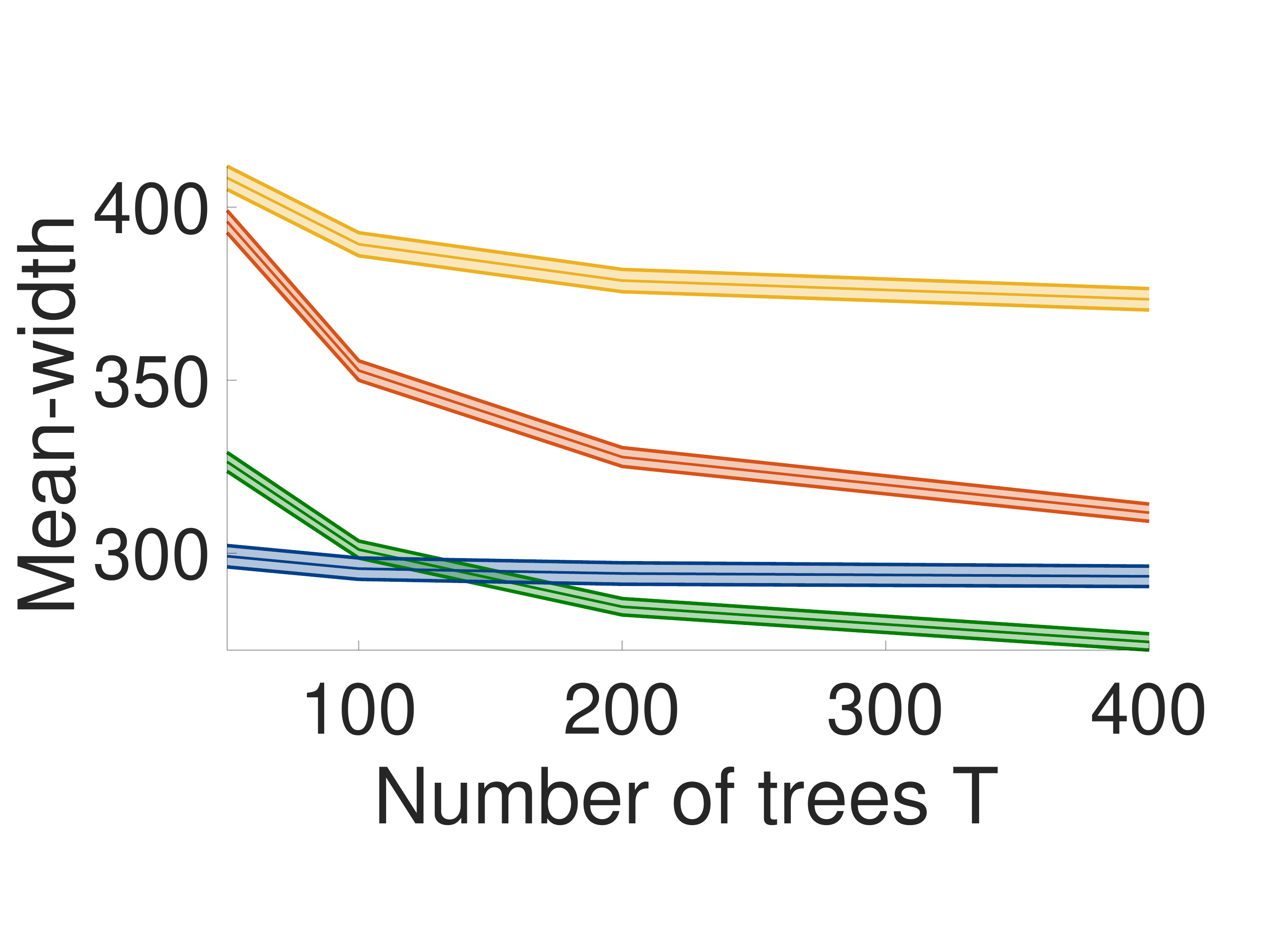}}
	\caption{The performance of QOOB ($2\alpha$) improves with increasing number of trees $T$, while the performance of Split-CQR ($2\alpha$) does not. QOOB ($2\alpha$) beats every other method except Split-CQR ($2\alpha$) for all values of $T$. %
		For the plots above, $\alpha = 0.1$ and all methods plotted have empirical mean-coverage at least $1-\alpha$. The mean-width values are averaged over 100 iterations. The shaded area denotes $\pm 1$ std-dev for the average of mean-width. }
	\label{fig:width-trees}
\end{figure}

We observe that with increasing $T$, QOOB continues to show improving performance in terms of the width of prediction intervals. Notably, this is not true for Split-CQR, which does not show improving performance beyond 100 trees. In the results reported in Table~\ref{table:overall-comparison-width}, we noted that Split-CQR-100 outperformed QOOB-100 on the blog feedback, news popularity and kernel performance datasets. 
However, from Figure~\ref{fig:width-trees} we observe that for $T = 400$, QOOB performs almost the same as Split-CQR on blog feedback and news popularity, and in fact does significantly better than Split-CQR on kernel performance. Further, QOOB shows lower values for the standard deviation of the average-mean-width. The QOOB-D method performs worse than QOOB for every dataset, and hence we did not report it in the other comparisons in this paper. %

\subsection{QOOB outperforms Split-CQR at small sample sizes}\label{subsec:QOOB-sample-size}
QOOB needs $n$ times more computation than Split-CQR to produce prediction intervals, since one needs to make $n$ individual predictions. If fast prediction time is desired, our experiments in Sections~\ref{subsec:nominal-quantile} and \ref{subsec:QOOB-trees} indicate that Split-CQR is a competitive quick alternative. However, here we demonstrate that at the small sample regime, QOOB significantly outperforms Split-CQR on all 6 datasets that we have considered.

\begin{figure}[!t]
	\centering
	\includegraphics[width=0.9\textwidth]{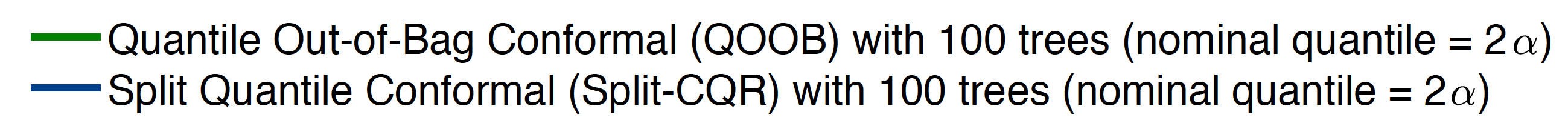}\vspace{-0.6cm}
	\centering
	\subfloat[Concrete structure.]{\includegraphics[width=0.33\textwidth]{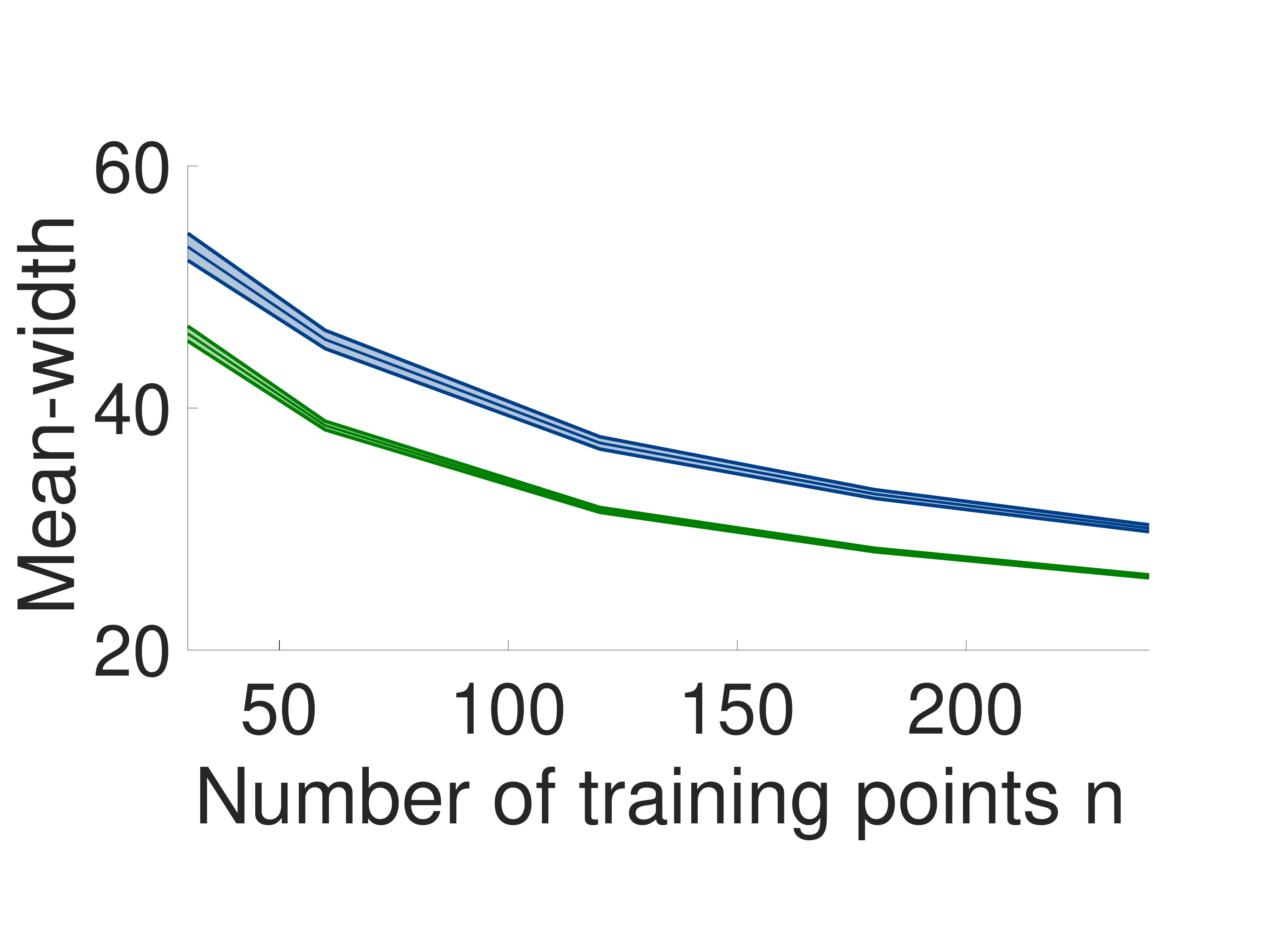}}
	\subfloat[Blog feedback.]{ \includegraphics[width=0.33\textwidth]{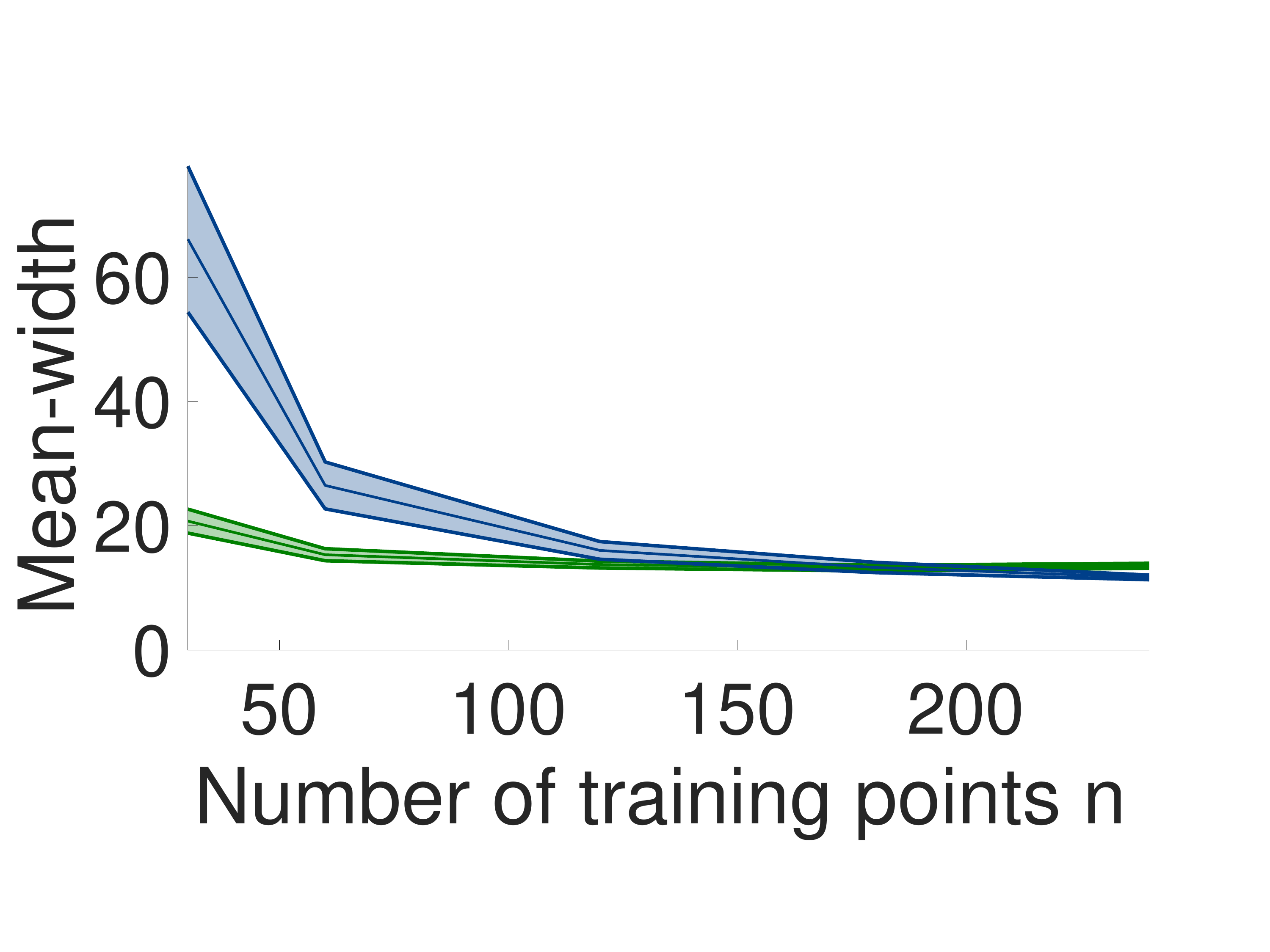}}
	\subfloat[Protein structure.]{\includegraphics[width=0.33\textwidth]{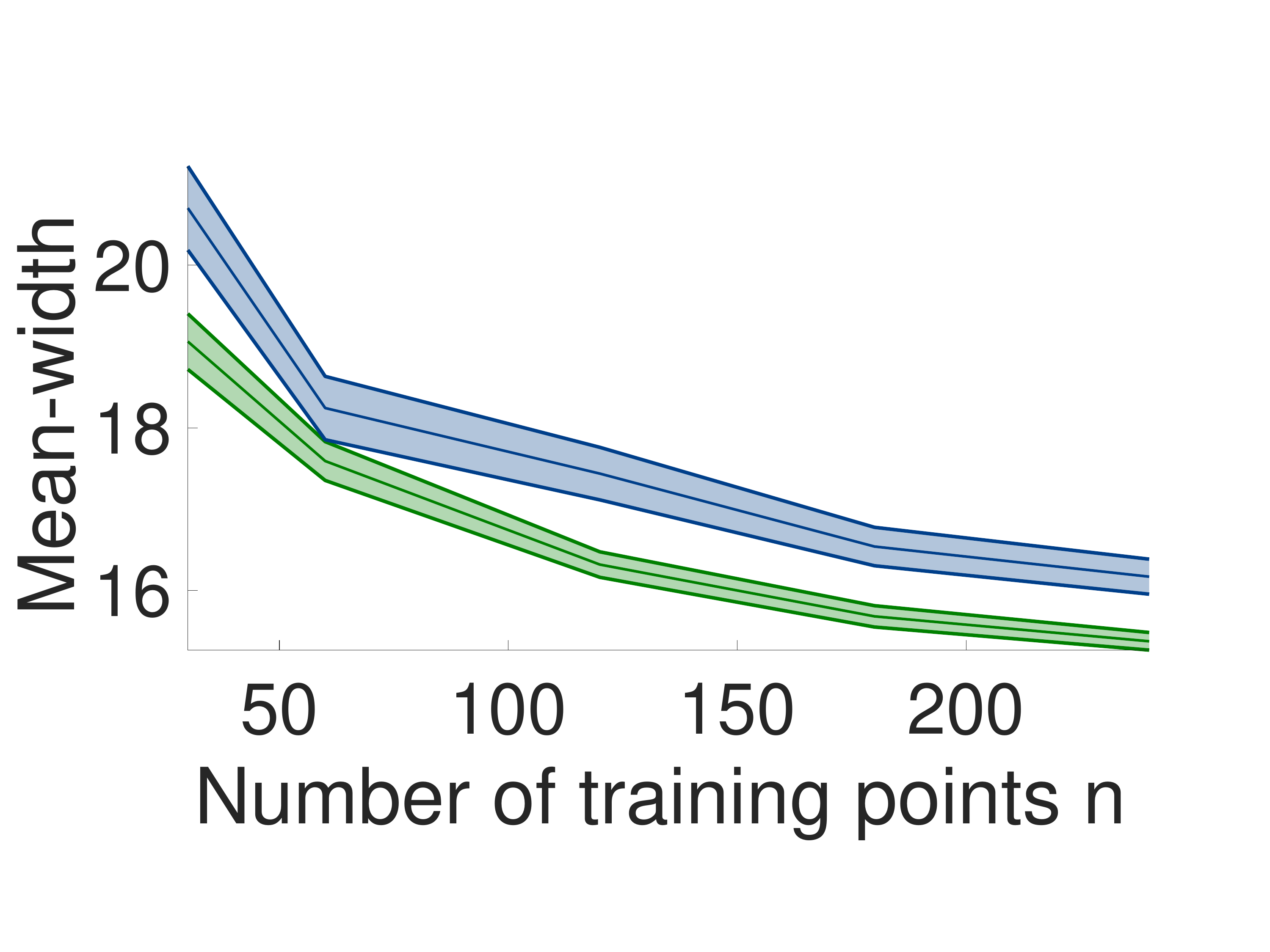}}\\
	\subfloat[Superconductivity.]{\includegraphics[width=0.33\textwidth]{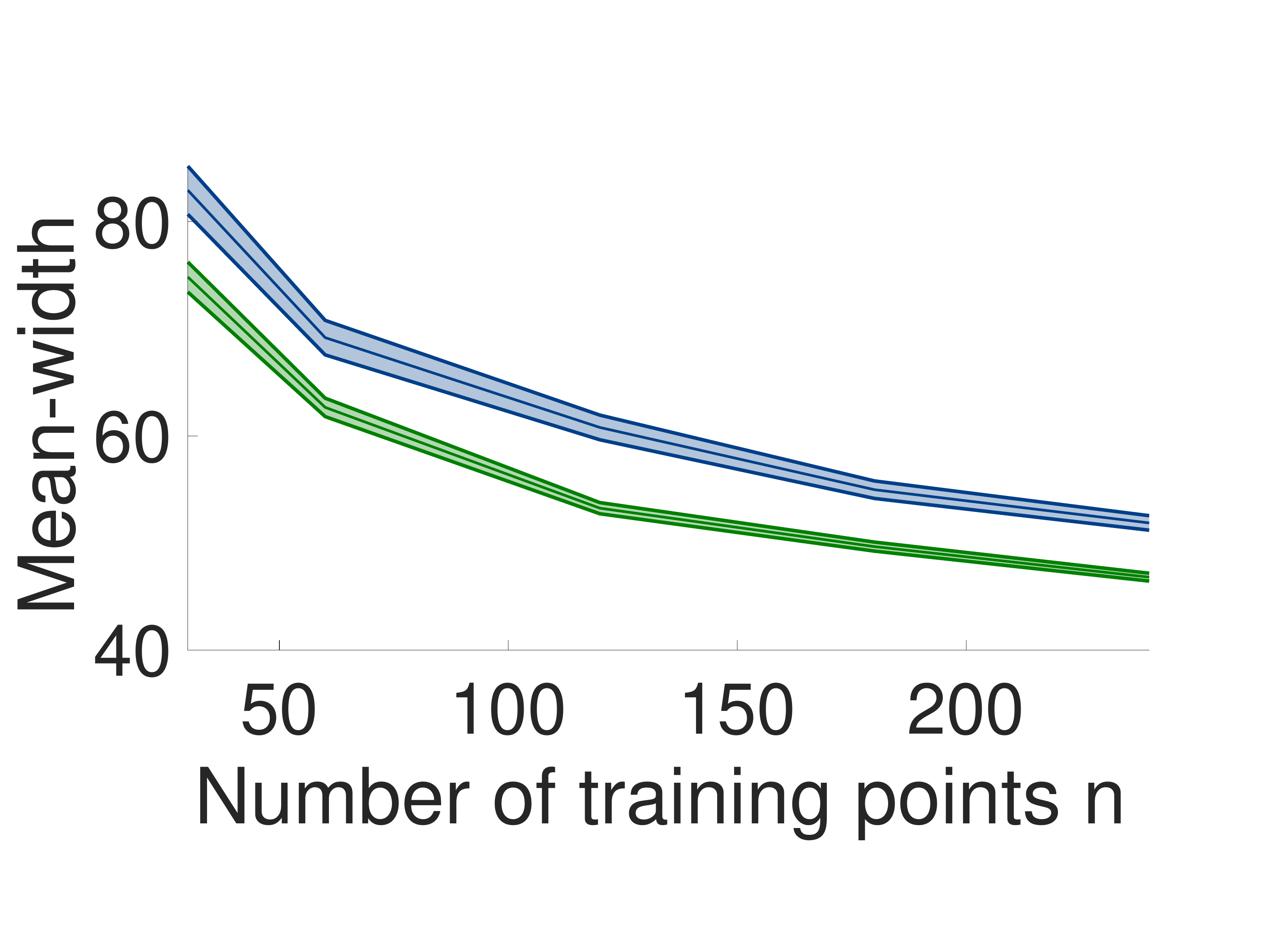}}
	\subfloat[News popularity.]{ \includegraphics[width=0.33\textwidth]{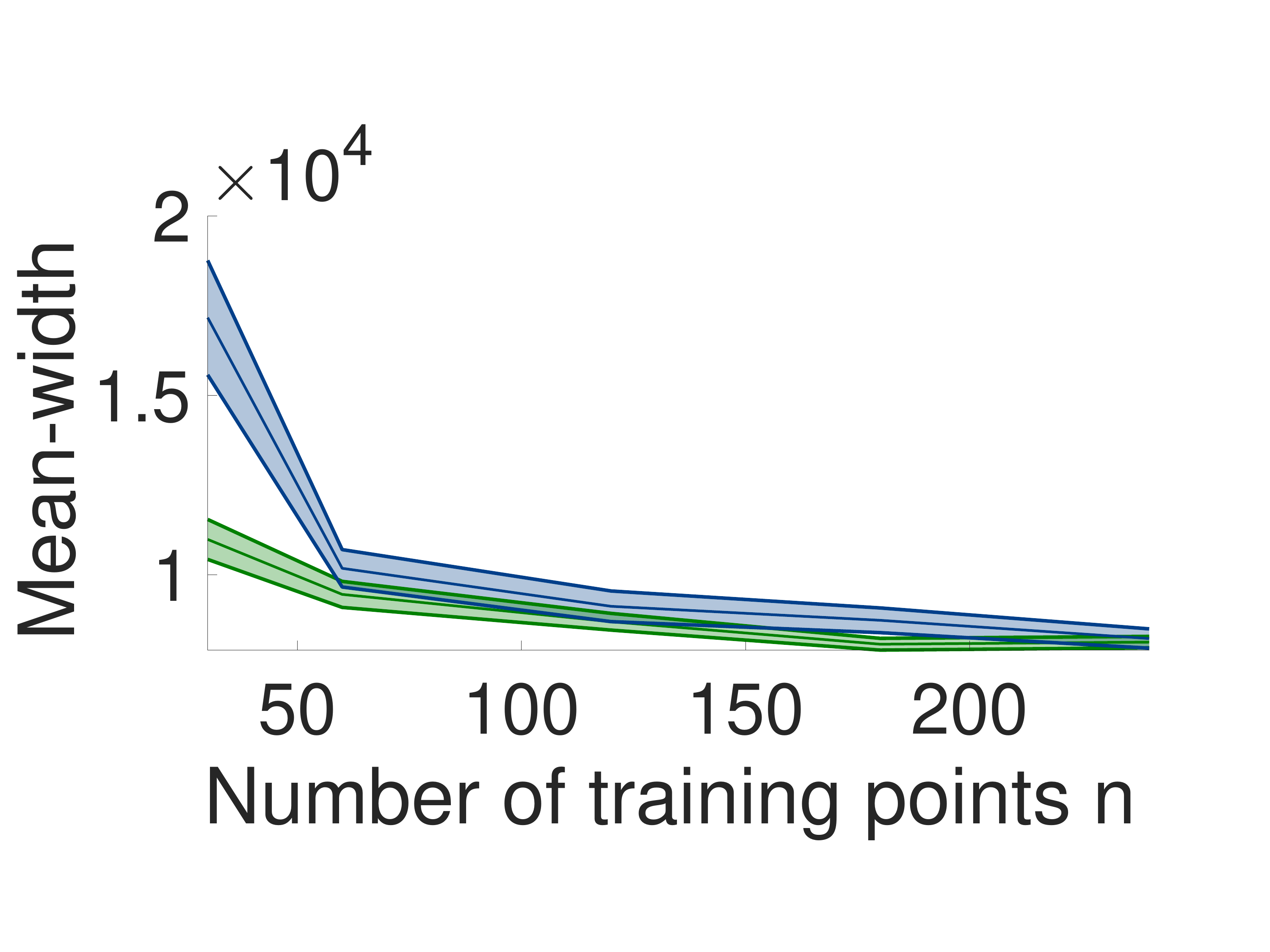}}
	\subfloat[Kernel performance.]{\includegraphics[width=0.33\textwidth]{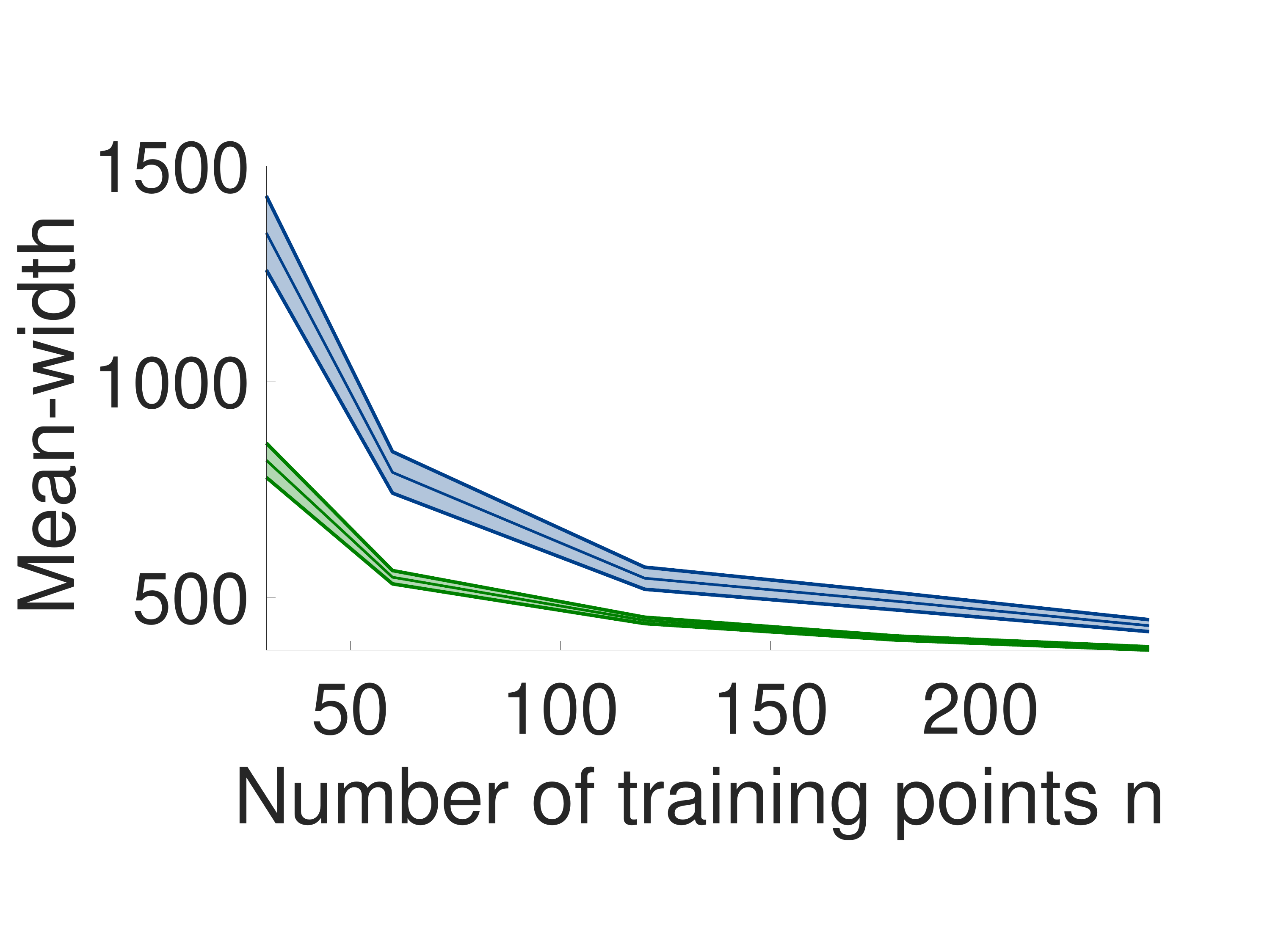}}
	\caption{The performance of QOOB and Split-CQR with varying number of training points $n$. QOOB has shorter mean-width than Split-CQR across datasets for small $n$ and also smaller standard-deviation of the average mean-width. For the plots above, $\alpha = 0.1$ and all methods plotted have empirical mean-coverage at least $1-\alpha$. The mean-width values are averaged over 100 iterations. The shaded area denotes $\pm 1$ std-dev for the average of mean-width. }
	\label{fig:width-n}
\end{figure}
To make this comparison, we subsample the datasets to a smaller sample size and consider the mean-width and mean-coverage properties of QOOB ($2\alpha$) and Split-CQR ($2\alpha$) with $T = 100$. Figure~\ref{fig:width-n} contains the results with $n$ ranging from $30$ to $240$. We observe that at small $n$, QOOB does significantly better than Split-CQR. This behavior is expected since at smaller values of $n$, the statistical loss due to sample splitting is most pronounced. Since the overall computation time decreases as $n$ decreases, QOOB is a significantly better alternative in the small sample regime on all fronts.

\subsection{Cross-conformal outperforms jackknife+}
\label{subsec:cc-jp}
Cross-conformal prediction sets are always smaller than the corresponding jackknife+ prediction sets by construction; see Section~\ref{sec:intervals-LOO} and~\eqref{eq:loo-jp}. %
However, the fact that cross-conformal may not give an interval might be of practical importance. In this subsection, we show that the jackknife+ prediction interval can sometimes be strictly larger than the smallest interval containing the cross-conformal prediction set (this is the convex hull of the cross-conformal prediction set and we call it QOOB-Conv). 

 Table~\ref{table:cc-jp} reports the performance of QOOB, QOOB-JP, and QOOB-Conv on the blog feedback dataset. Here QOOB-JP refers to the OOB-JP version~\eqref{eq:def-oob-jp}. %
 For each of these, we set the nominal quantile level to $0.5$ instead of $2\alpha$ as suggested earlier (this led to the most pronounced difference in mean-widths). 
\begin{table}[H]
	\caption{Mean-width of $C^{\texttt{OOB}}(x), \mathrm{Conv}(C^{\texttt{OOB}}(x)),$ and $C^{\texttt{OOB-JP}}(x)$ for the blog feedback dataset with QOOB method. The base quantile estimator is quantile regression forests, and $\alpha = 0.1$. Average values across 100 simulations are reported with the standard deviation in brackets . }
	\centering
	\begin{tabular}{c|c|c}
		\hline
		\textbf{Method} &\textbf{Mean-width} & \textbf{Mean-coverage} \\ \hline\hline
		QOOB-100 ($\beta = $0.5) & \textbf{14.67 (0.246)} & 0.908  (0.002)  \\%[-0.1in]&  (0.246) & (0.002)\\ 
		\hline
		QOOB-Conv-100 ($\beta = $0.5) & 14.73 (0.249) & 0.908 (0.002)  \\%[-0.1in]& (0.249) & (0.002) \\ 
		\hline
		QOOB-JP-100 ($\beta = $0.5) & 15.36 (0.248) & 0.911 (0.002)  \\%[-0.1in] & (0.248) & (0.002) \\ 
		\hline
	\end{tabular}
	\label{table:cc-jp}
\end{table}

\begin{figure}[!t]
  \includegraphics[width=0.16\textwidth]{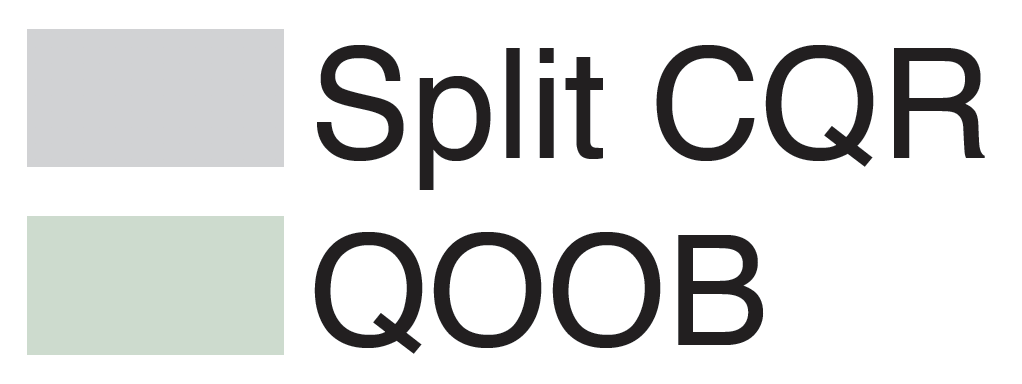}\hspace{2.6cm}\includegraphics[width=0.14\textwidth]{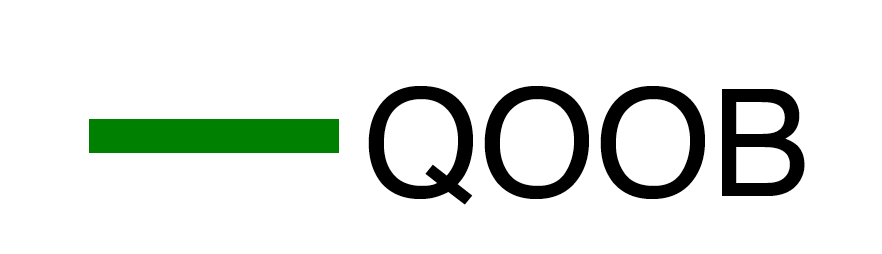}\hspace{2.5cm}\includegraphics[width=0.16\textwidth]{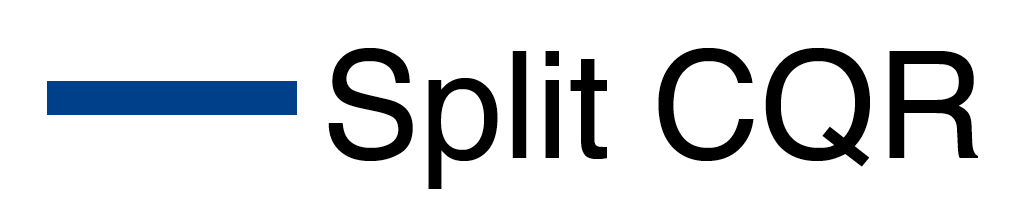}
  \vspace{-1cm}
  \centering
  \subfloat[$n = 100$. (QOOB) MW = 2.16, MC = 0.91. (Split-CQR) MW = 2.23, MC = 0.92.]{\includegraphics[width=0.33\textwidth]{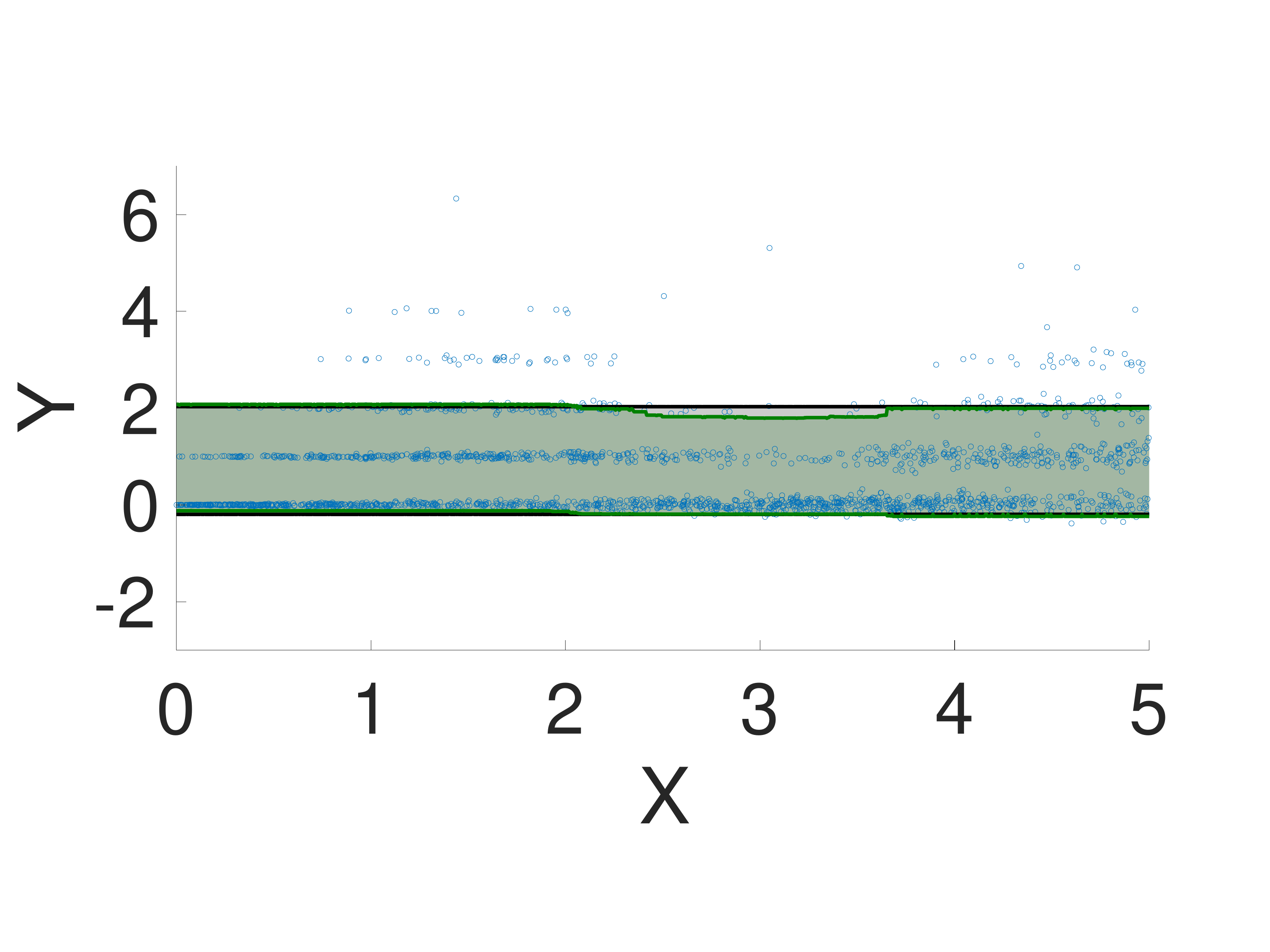}\includegraphics[width=0.33\textwidth]{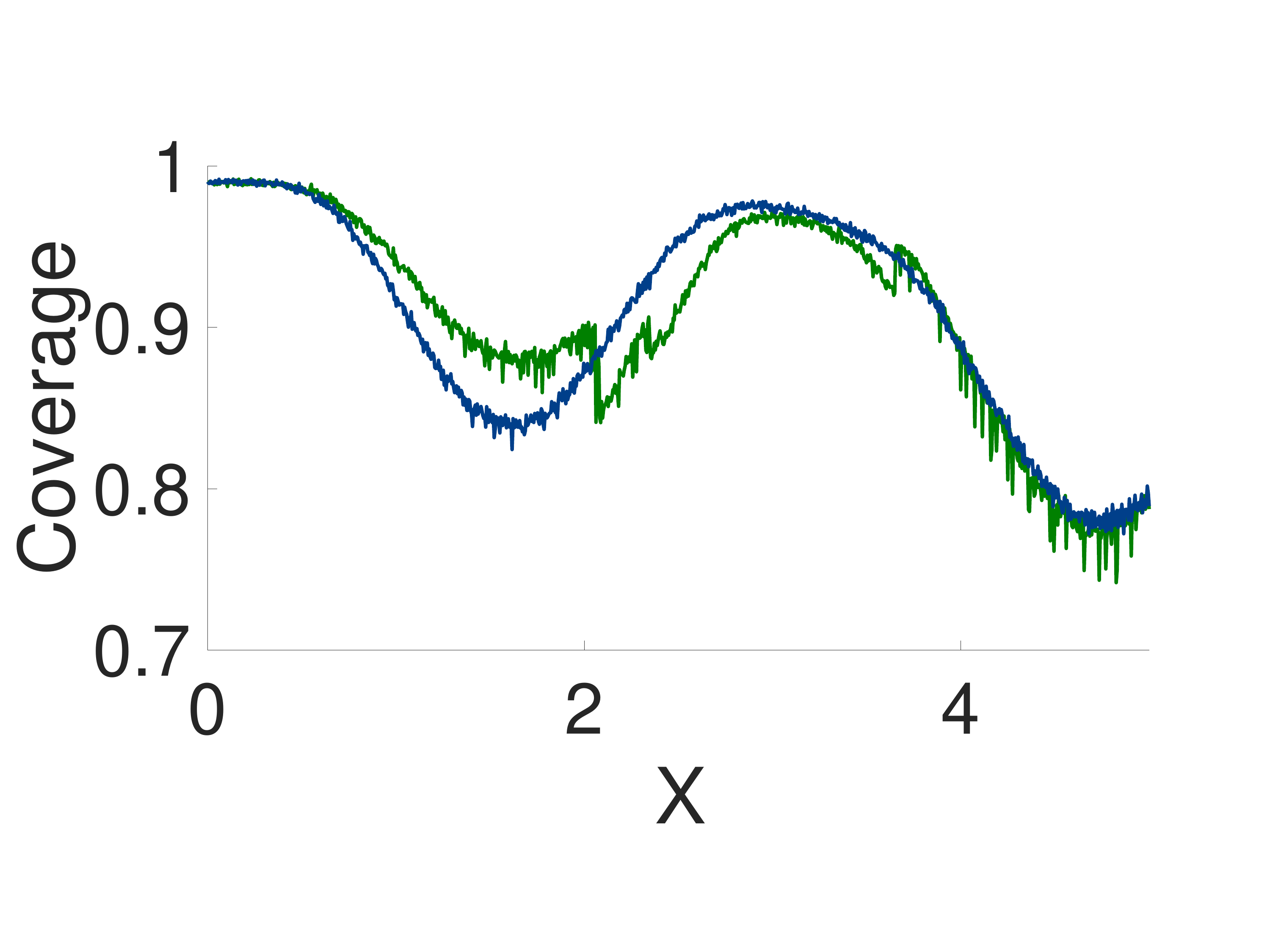}\includegraphics[width=0.33\textwidth]{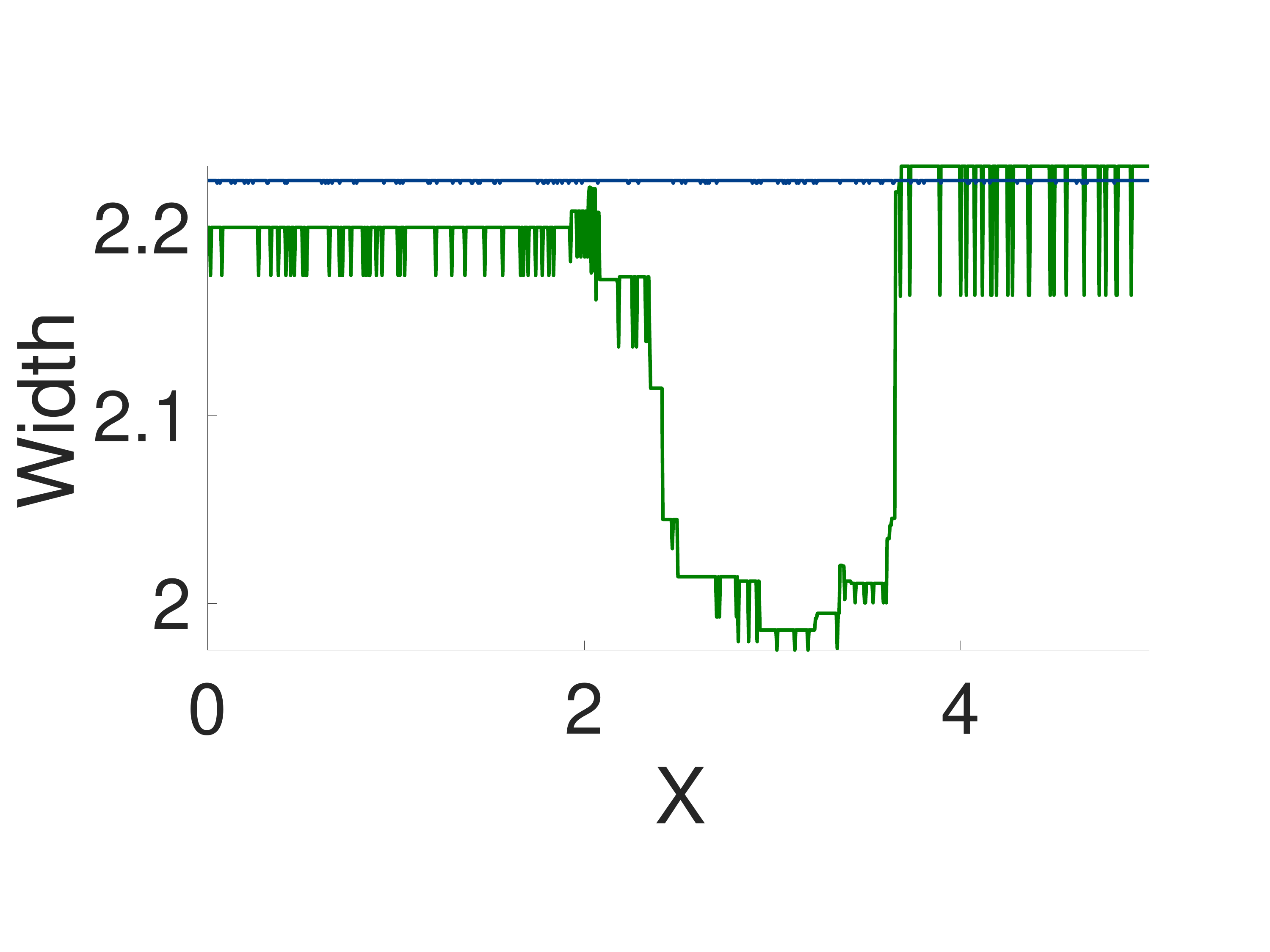}}\\\vspace{-0.5cm}
  \subfloat[$n = 200$. (QOOB) MW = 1.99, MC = 0.92. (Split-CQR) MW = 2.18, MC = 0.91.]{\includegraphics[width=0.33\textwidth]{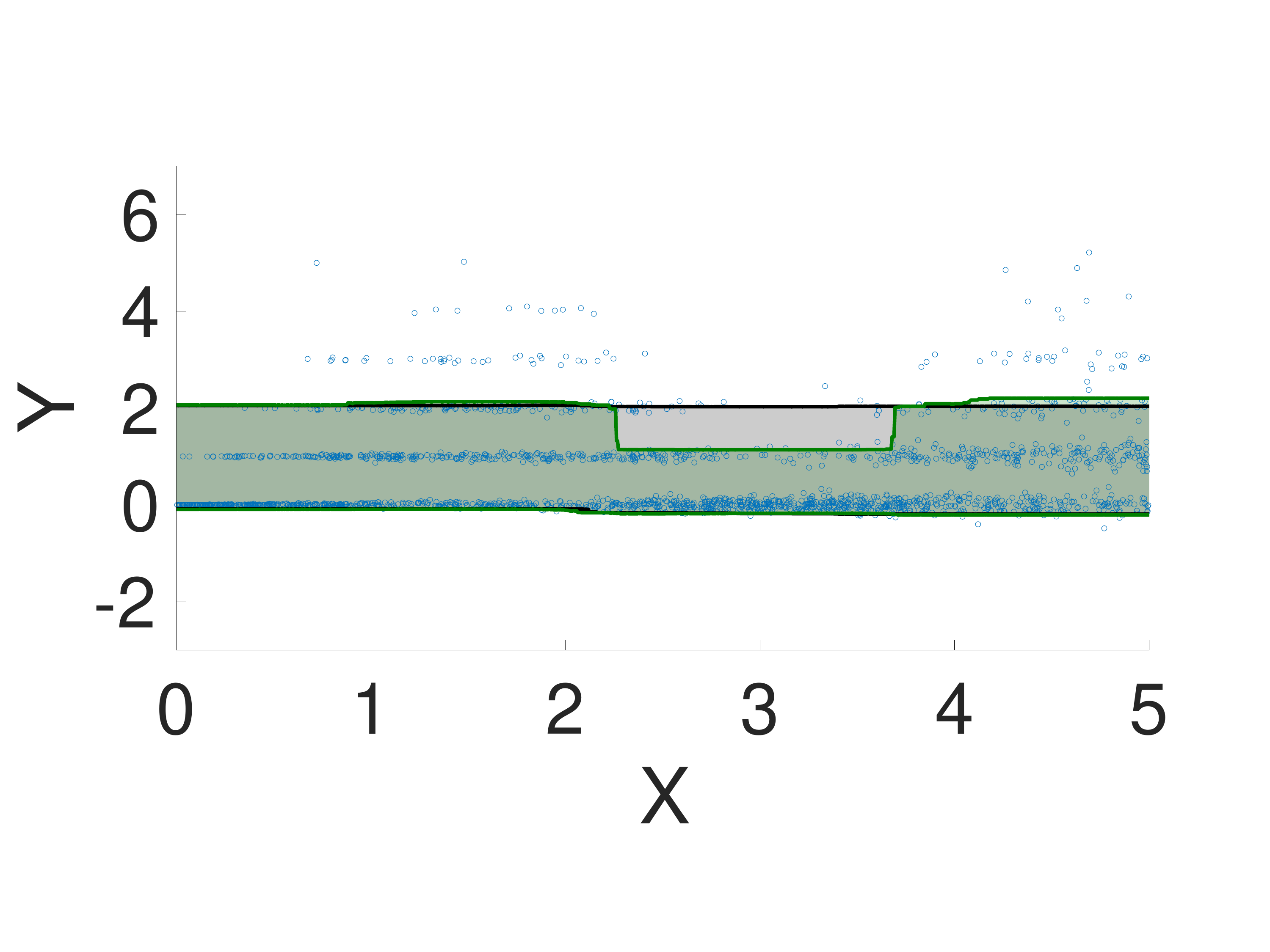}\includegraphics[width=0.33\textwidth]{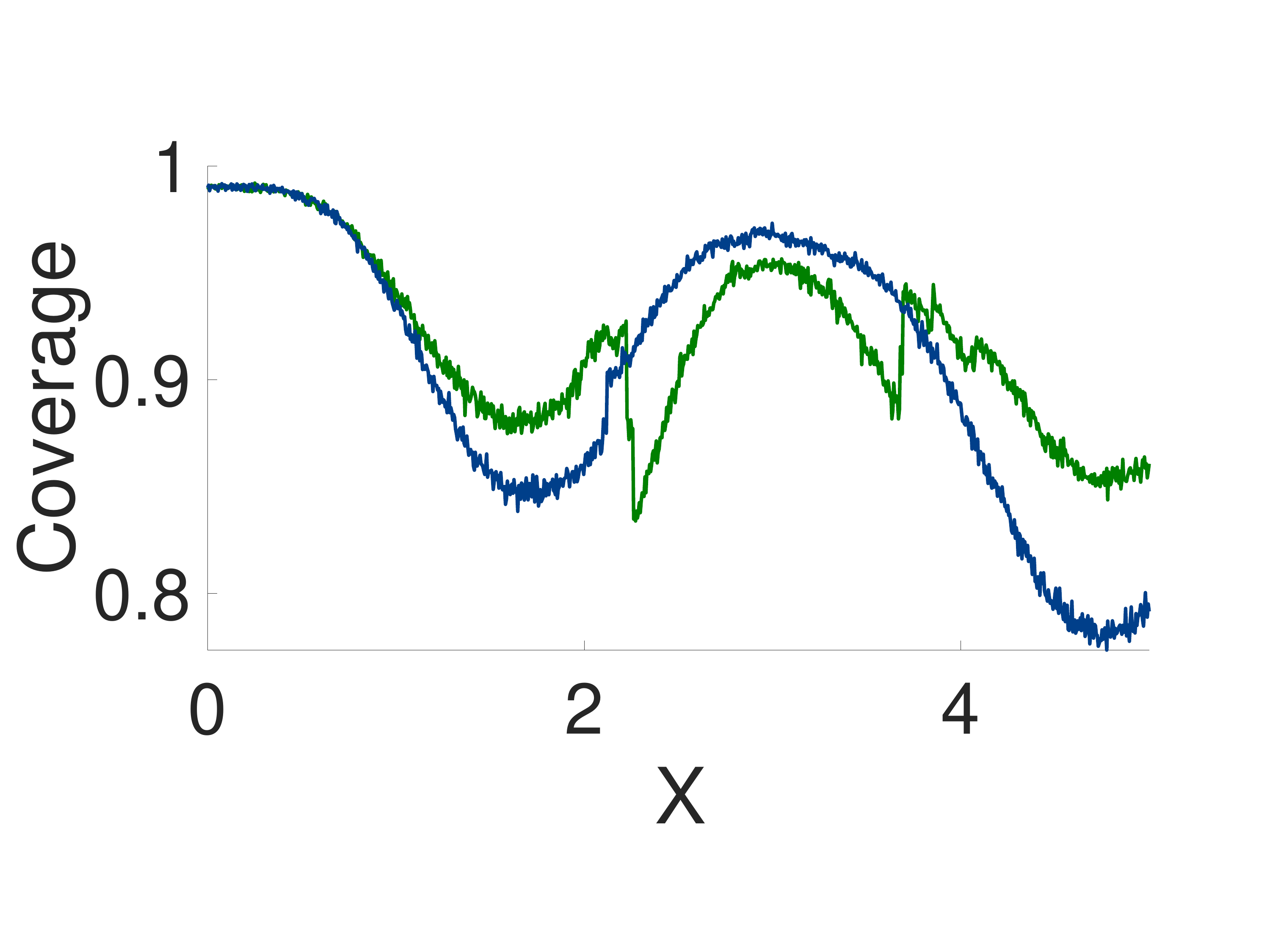}\includegraphics[width=0.33\textwidth]{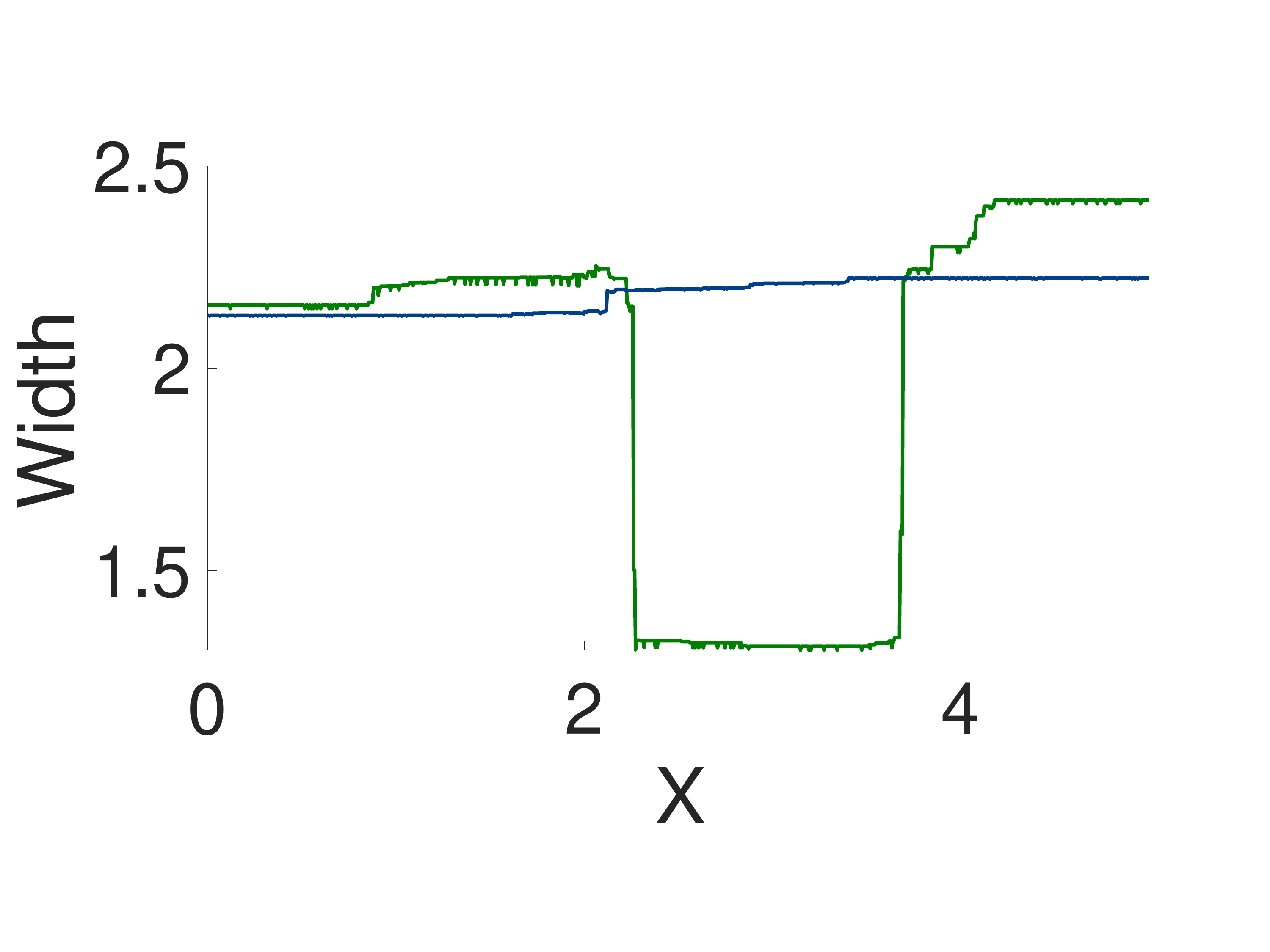}}\\\vspace{-0.5cm}
  \subfloat[$n = 300$. (QOOB) MW = 1.86, MC = 0.89. (Split-CQR) MW = 1.94, MC = 0.89.]{\includegraphics[width=0.33\textwidth]{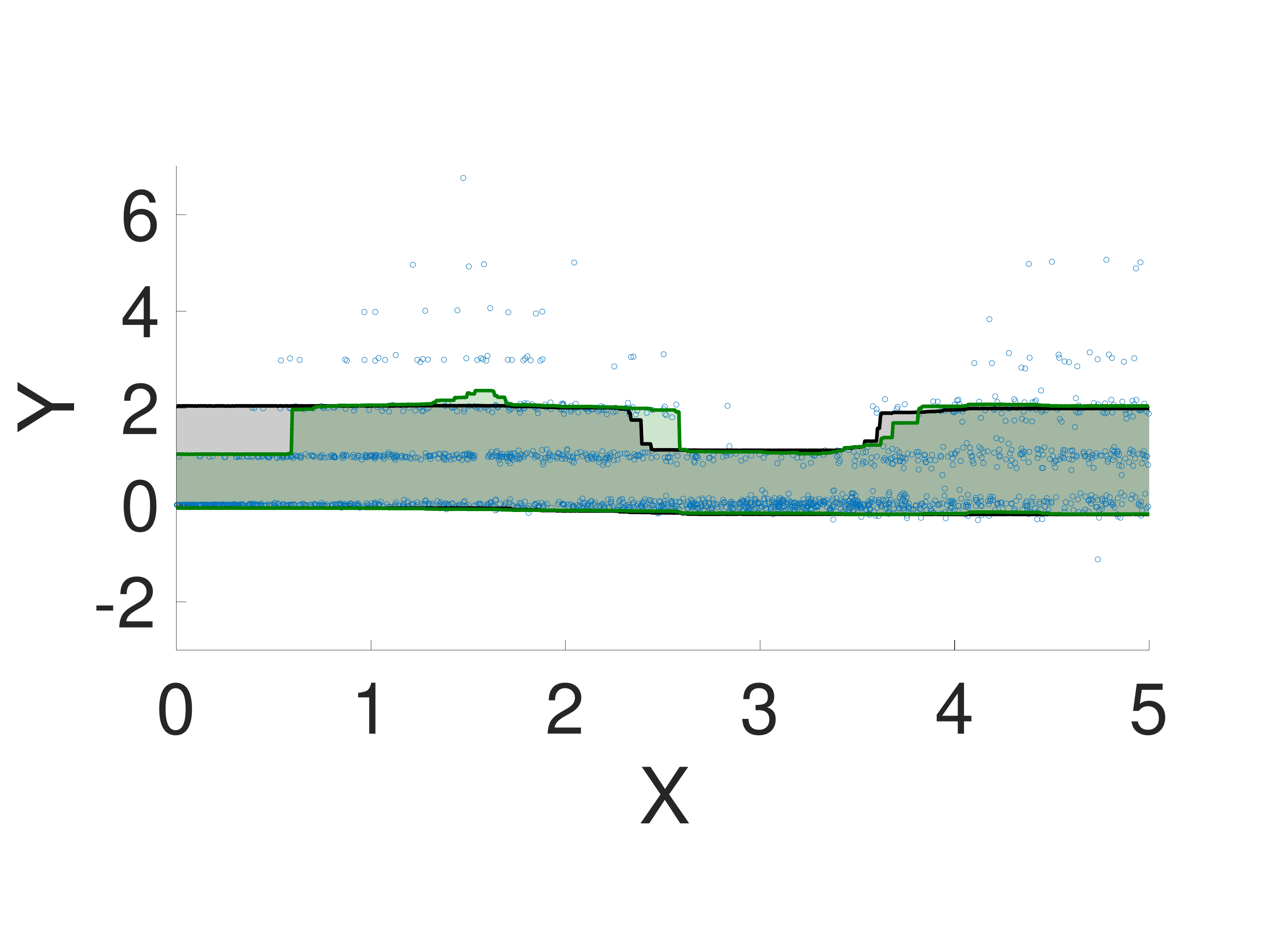}\includegraphics[width=0.33\textwidth]{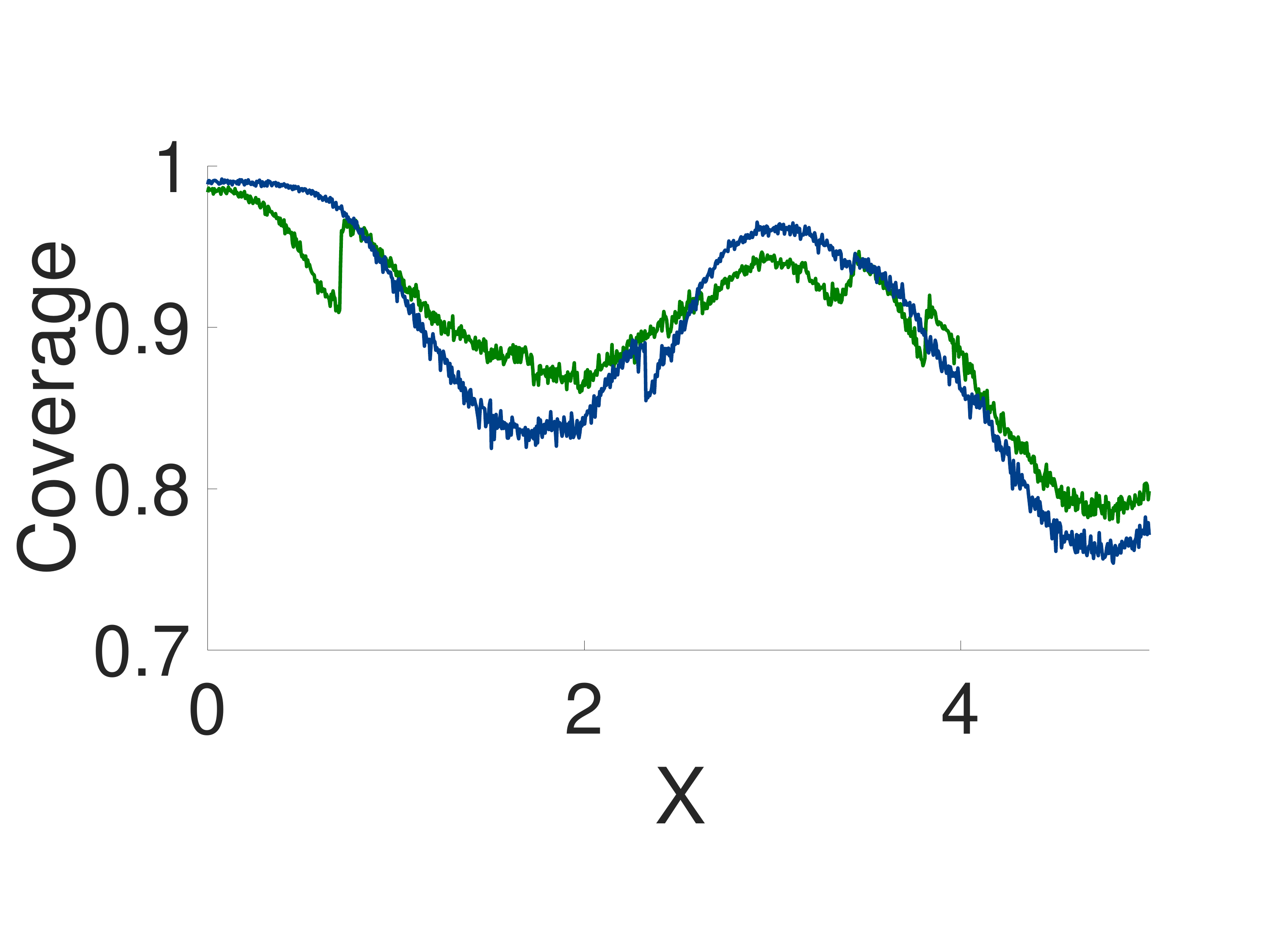}\includegraphics[width=0.33\textwidth]{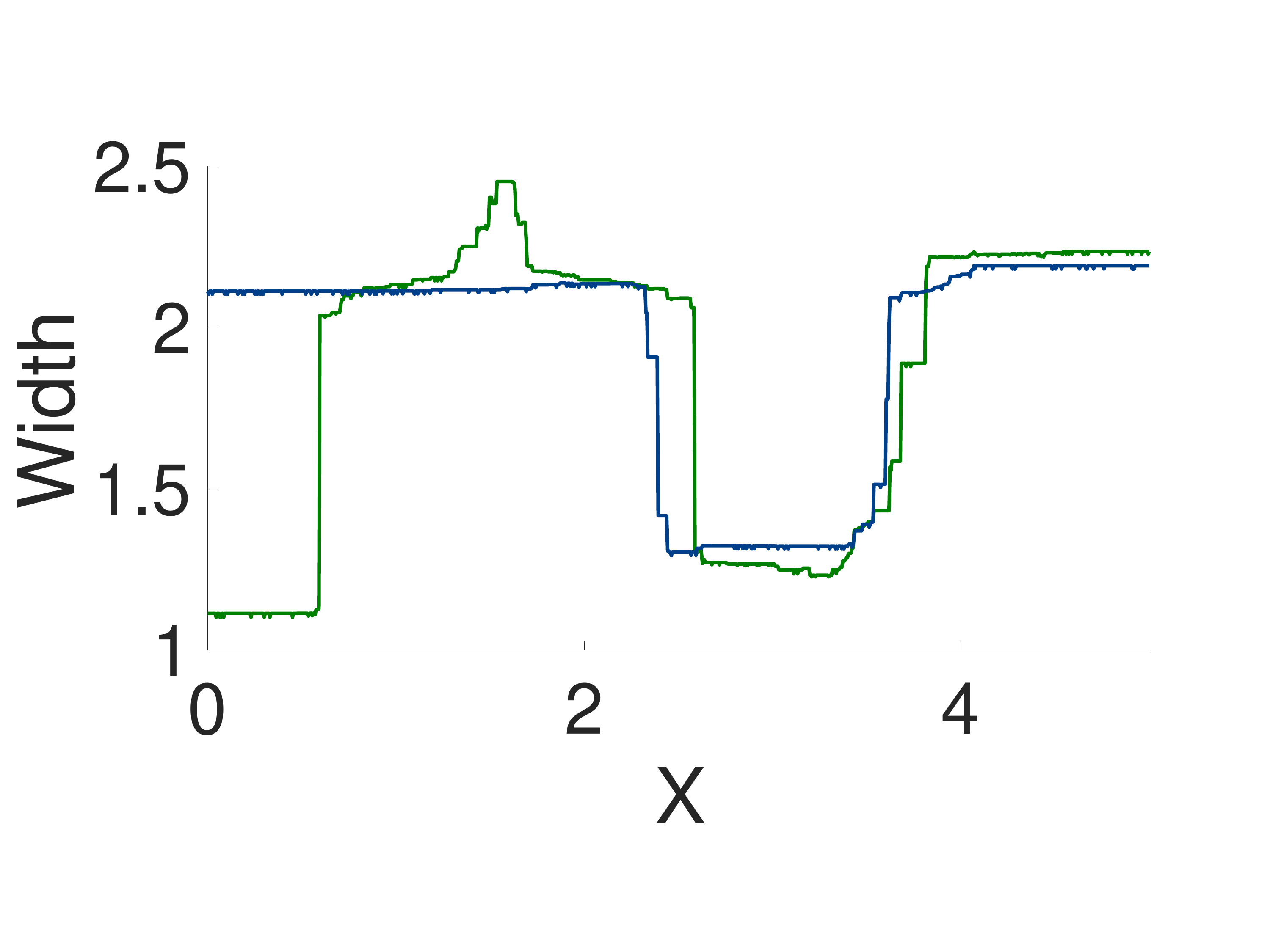}}\\\vspace{-0.5cm}
  \subfloat[$n = 400$. (QOOB) MW = 2.10, MC = 0.91. (Split-CQR) MW = 2.09, MC = 0.90.]{\includegraphics[width=0.33\textwidth]{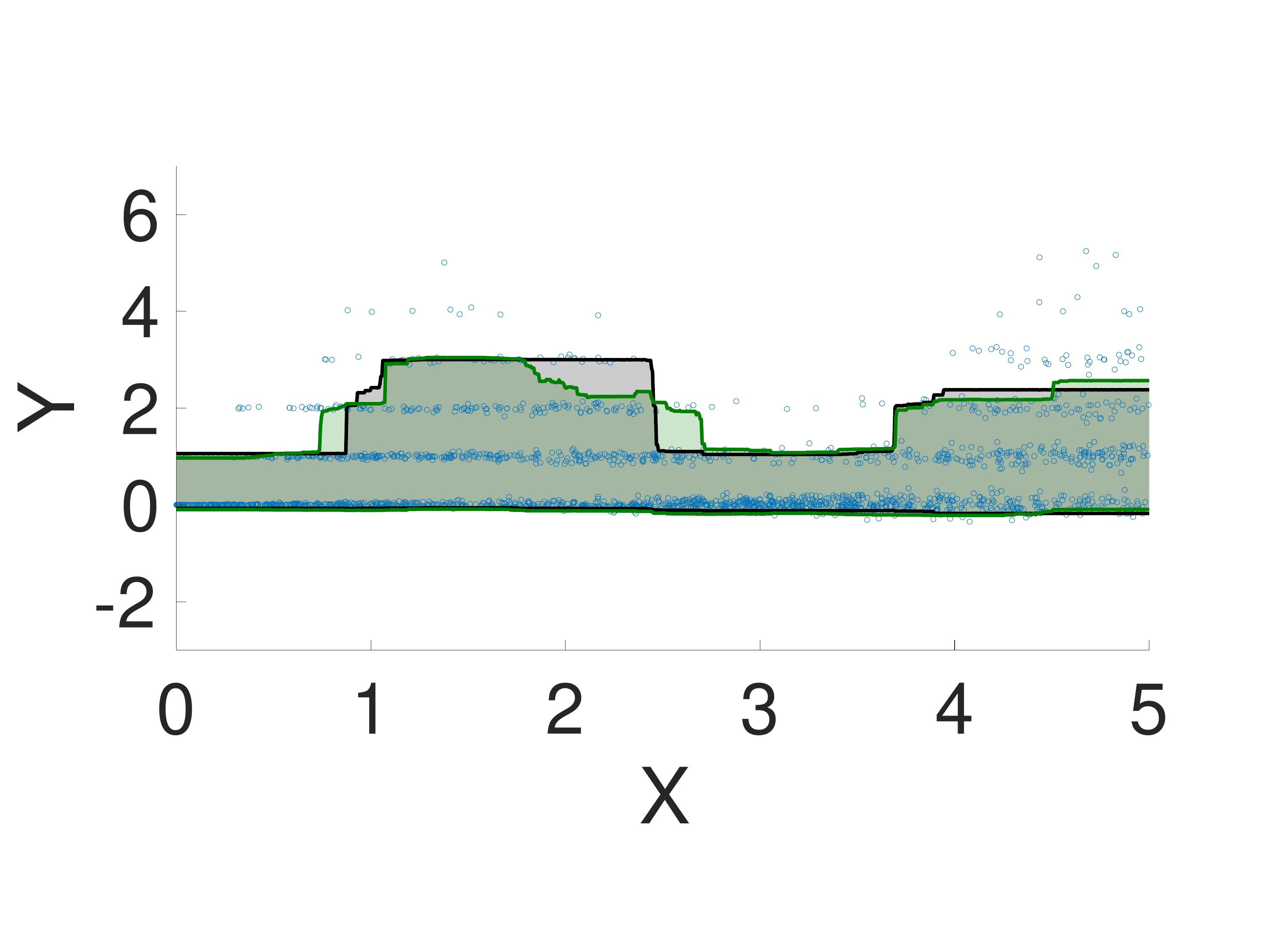}\includegraphics[width=0.33\textwidth]{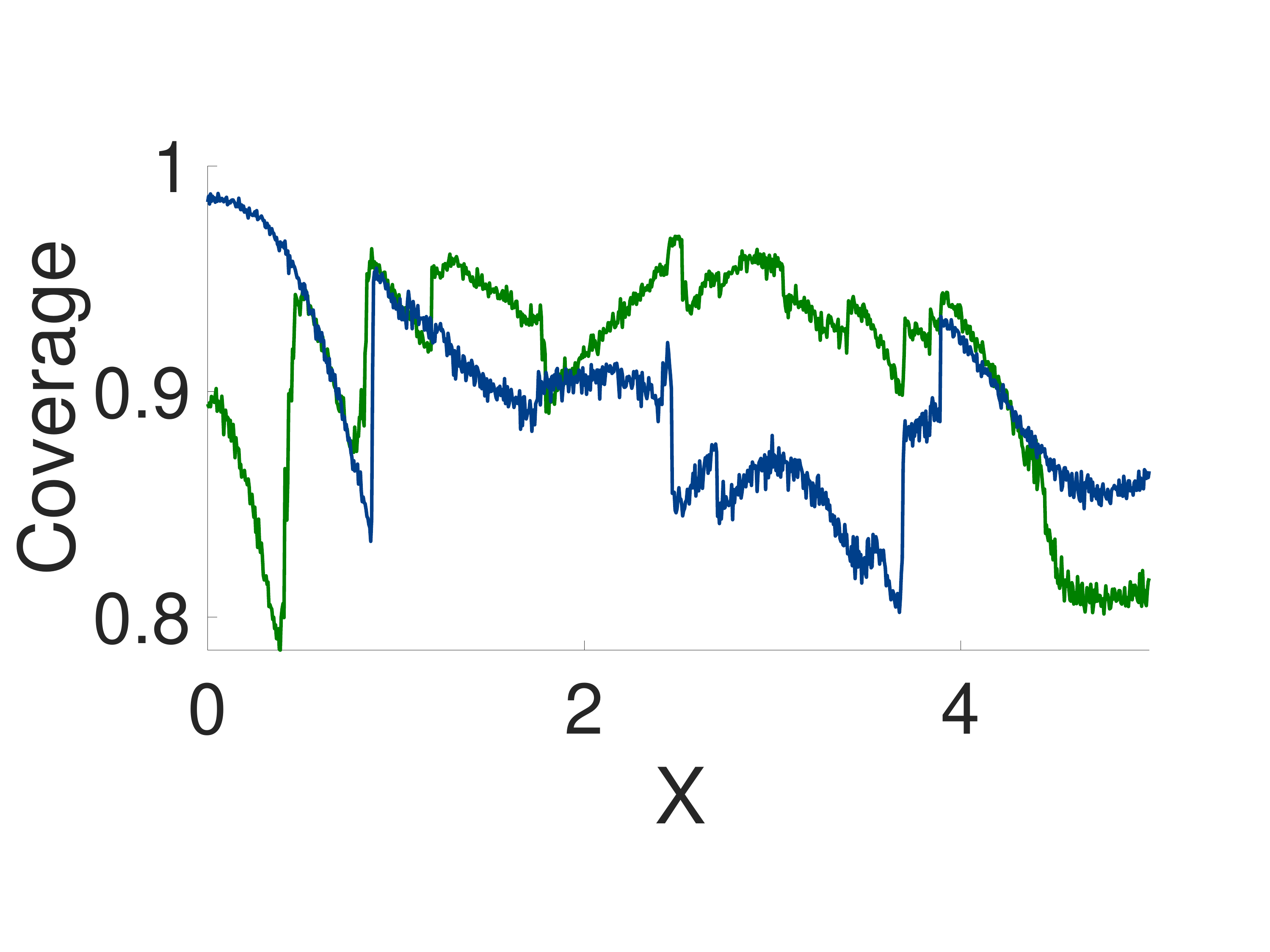}\includegraphics[width=0.33\textwidth]{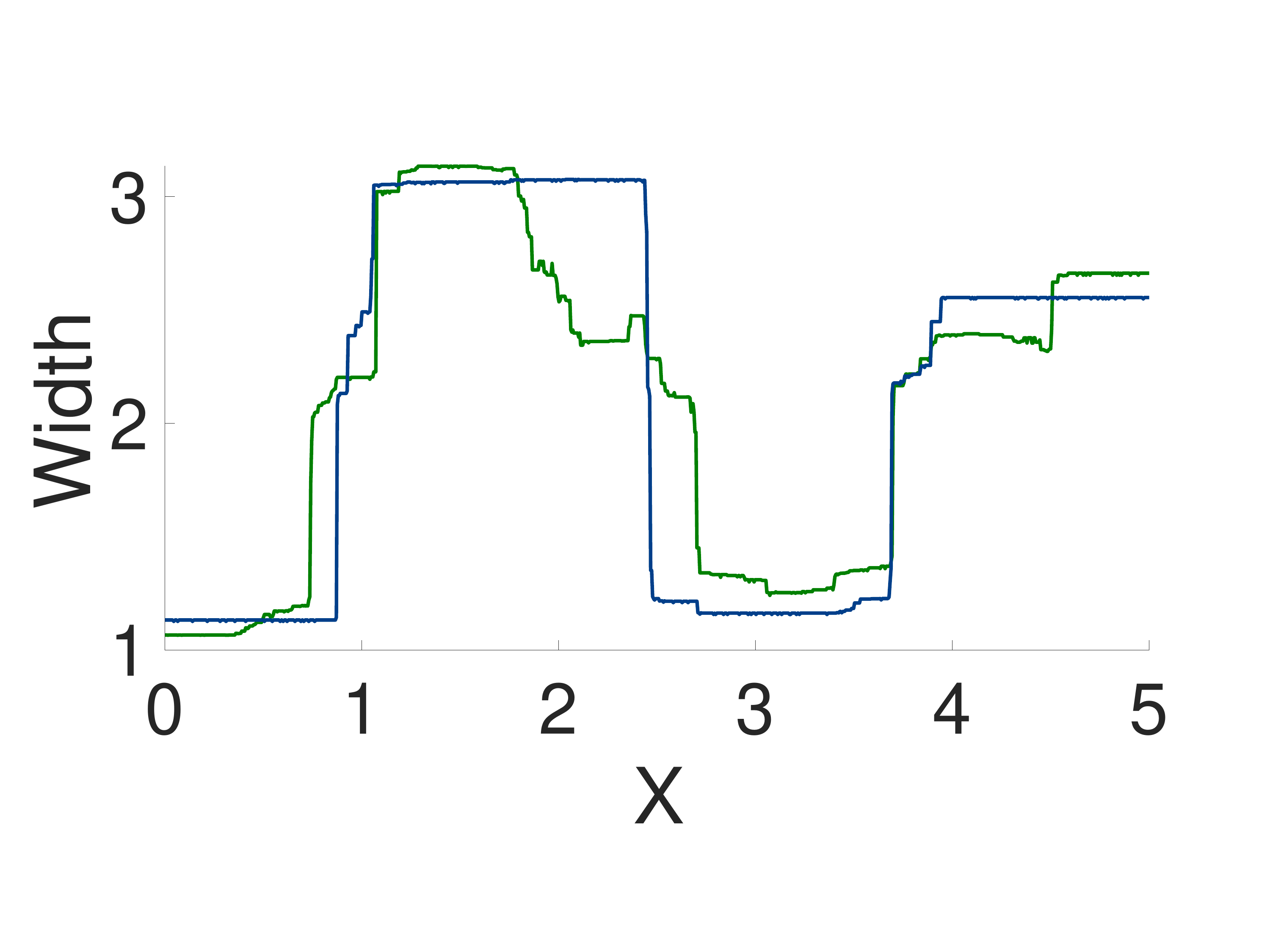}}

  \caption{The performance of QOOB-100 and Split-CQR-100 on synthetic data with varying number of training points $n$ ($\alpha = 0.1$). MW refers to mean-width and MC refers to mean-coverage. QOOB shows conditional coverage at smaller values of $n$ than Split-CQR. Section~\ref{subsec:synthetic} contains more experimental details. } 
  \label{fig:synthetic}
\end{figure}

While this is a specific setting, our goal is to provide a proof of existence. In other settings, cross-conformal style aggregation and jackknife+ style aggregation may have identical prediction sets. However, because the cross-conformal prediction set as well as its convex hull can be computed in nearly the same time (see Section~\ref{subsec:compuation-cross}) and have the same marginal validity guarantee, one should always prefer cross-conformal over jackknife+.

\subsection{QOOB demonstrates conditional coverage empirically }
\label{subsec:synthetic}
To demonstrate that Split-CQR exhibits conditional coverage, \citet[Appendix B]{romano2019conformalized} designed the following data-generating distribution for $P_{XY}$:
\begin{align*}
	\epsilon_1 &\sim N(0, 1), 
	\epsilon_2 \sim N(0, 1), 
	u \sim \text{Unif}[0, 1], \quad\mbox{and}\quad X \sim \text{Unif}[0, 1], \\
	Y &\sim \text{Pois}(\sin^2(X) + 0.1) + 0.03 X \epsilon_1 + 25 \mathbf{1}\{u < 0.01\} \epsilon_2.
\end{align*}
We use the same distribution to demonstrate the conditional coverage of QOOB. Additionally, we performed the experiments at a small sample size ($n \leq 300$) to understand the effect of sample size on both methods (the original experiments had $n = 2000$, for which QOOB and Split-CQR perform identically). Figure~\ref{fig:synthetic} summarizes the results. 

For this experiment, the number of trees $T$ are set to $100$ for both methods. To choose the nominal quantile level, we first ran the python notebook at \url{https://github.com/yromano/cqr} to reproduce the original experiments performed by \citet{romano2019conformalized}. Their code first learns a nominal quantile level for Split-CQR by cross-validating. On executing their code, we typically observed values near $0.1$ for $\alpha = 0.1$ and hence we picked this nominal quantile level for our experiments as well (for both Split-CQR and QOOB). For our simple 1-dimensional distribution, deeper trees lead to wide prediction sets. This was also observed in the original Split-CQR experiments. To rectify this, the  minimum number of training data-points in the tree leaves was set to $40$; we do this in our experiments as well.

\section{Conclusion}\label{sec:summary}
We introduced an alternative framework to score-based conformal prediction which is based on a sequence of nested prediction sets. We argued that the nested conformal prediction framework is more natural and intuitive. 
We demonstrated how to translate a variety of existing nonconformity scores into nested prediction sets. %
We showed how cross-conformal prediction, the jackknife+, and out-of-bag conformal can be described in our nested framework. The interpretation provided by nested conformal opens up new procedures
to practitioners. We propose one such procedure --- QOOB --- which uses quantile regression forests to perform out-of-bag conformal prediction. \rev{We proposed an efficient cross-conformalization algorithm (Algorithm~\ref{alg:efficientCC}) that makes cross-conformal as efficient as jackknife+. QOOB relies on this efficient cross-conformalization procedure.} We demonstrated empirically that QOOB achieves state-of-the-art performance on multiple real-world datasets. 

\rev{QOOB has certain limitations. First, without making additional assumptions, we can only guarantee $1 - 2\alpha$ coverage for QOOB. On the other hand, we observe that in practice QOOB has coverage slightly larger than $1-\alpha$. Second, QOOB is designed specifically for real-valued responses; extending it to classification and other settings would be interesting. Third, QOOB is computationally intensive compared to the competitive alternative Split-CQR (at least when sample-sizes are high; see Section~\ref{subsec:QOOB-sample-size} for more details). We leave the resolution of these limitations to future work. }

\subsection*{Acknowledgments}
This paper has been revised as per suggestions by the reviewers and editor of Pattern Recognition Special Issue on Conformal Prediction. We would like to thank Robin Dunn, Jaehyeok Shin and Aleksandr Podkopaev for comments on an initial version of this paper. AR is indebted to Rina Barber, Emmanuel Cand{\`e}s and Ryan Tibshirani for several insightful conversations on conformal prediction. This work used the Extreme Science and Engineering Discovery Environment (XSEDE)~\citep{towns2014xsede}, which is supported by National Science Foundation grant number ACI-1548562. Specifically, it used the Bridges system~\citep{Nystrom:2015:BUF:2792745.2792775}, which is supported by NSF award number ACI-1445606, at the Pittsburgh Supercomputing Center (PSC). CG thanks the PSC consultant Roberto Gomez for support with MATLAB issues on the cluster.

\bibliographystyle{apalike}
\bibliography{NestedConformal}

 \newpage
\appendix
\begin{center}
\bf\large APPENDIX
\end{center}
\rev{
\section{Equivalence between score-based conformal prediction and nested conformal}
\label{appsec:Equivalence}}
We show that every instance of score-based conformal prediction can be cast in terms of nested conformal prediction, and vice versa. Appendix~\ref{subsec:equivalence-non-transductive} argues this fact for non-transductive conformal methods which are the focus of the main paper. Appendix~\ref{subsec:equivalence-transductive} describes full transductive conformal prediction using nested sets and argues the equivalence in that setting as well.

\subsection{Equivalence for non-transductive conformal methods}
\label{subsec:equivalence-non-transductive}
Non-transductive conformal methods define a score function $r : \mathcal{X} \times \mathcal{Y} \to \mathbb{R}$ using some part of the training data (a split, a leave-one-out set, a fold, a subsample, etc). Thus unlike full (tranductive) conformal, the $r$ does not depend on the point $(x,y)$ that it is applied to (of course the value $r(x, y)$ depends on $(x, y)$ but not $r$ itself). 

Given a nested family, $\{\mathcal{F}_t(\cdot)\}_{t \in \mathbb{R}}$ learnt on some part of the data,  a nonconformity score can be constructed per equation~\eqref{eq:Score-definition}. We now argue the other direction. Given any nonconformity score $r$ and an $x \in \mathcal{X}$, consider the family of nested sets $\{\mathcal{F}_t(x)\}_{t \in \mathbb{R}}$ defined as: 
\[
\mathcal{F}_t(x) := \{y \in \mathcal{Y}: r(x, y) \leq t\}.
\]
Clearly, $y \in \mathcal{F}_t(x)$ if and only if $r(x, y) \leq t$. Hence,
\[
\inf\left\{t\in\mathcal{T}:\,y\in\mathcal{F}_t(x)\right\} = \inf\left\{t\in\mathcal{T}:\,r(x, y) \leq t\right\} = r(x, y).
\]
Thus, for any nonconformity score $r$, there exists a  family of nested sets that recovers it. 

The randomness in $r$ (through the data on which it is learnt) is included in the conformal validity guarantees of split conformal, cross-conformal, OOB conformal, etc. Similarly, the guarantees based on nested conformal implicitly include the randomness in $\{\mathcal{F}_t(\cdot)\}_{t \in \mathcal{T}}$.

\subsection{An equivalent formulation of full transductive conformal using the language of nested sets}
\label{subsec:equivalence-transductive}
For simpler exposition, in this subsection, we skip the qualification `transductive' and just say `conformal'. Each instance of conformal refers to transductive conformal. We follow the description of a conformal prediction set as defined by \citet{balasubramanian2014conformal}.  Conformal can be defined for any space $\mathcal{Z}$; in the predictive inference setting of this paper, one can think of $z=(x,y) \in \mathcal{Z} = \mathcal{X} \times \mathcal{Y}$. 

A nonconformity $N$-measure is a measurable function $A$ that assigns every sequence $(z_1,\ldots,z_N)$ of $N$ examples to a corresponding sequence $(\alpha_1,\ldots,\alpha_N)$ of $N$ real numbers that is equivariant with respect to permutations, meaning that for any permutation $\pi:[N]\to[N]$,
\[
(\alpha_1,\ldots,\alpha_N) = A(z_1,\ldots,z_N)\quad\Rightarrow\quad (\alpha_{\pi(1)}, \ldots, \alpha_{\pi(N)}) = A(z_{\pi(1)}, \ldots, z_{\pi(N)}).
\]

For some training set $z_1, z_2, \ldots, z_n$ and a candidate point $z$, define the nonconformity score as 
\[
(\alpha_1^z, \ldots, \alpha_{n+1}^z) := A(z_1, \ldots, z_n, z).
\]
For each $z$, define 
\[
p^z := \frac{|\{i \in [n+1]:\,\alpha_i^z \ge \alpha_{n+1}^z\}|}{n + 1}.
\]
Then the \emph{conformal prediction set} determined by $A$ as a nonconformity measure is defined by
\begin{equation}\label{eq:Gamma-alpha}
\Gamma^{\alpha}(z_1, \ldots, z_n) := \{z:\,p^z > \alpha\}.
\end{equation}
In predictive inference, we have a fixed $x$ and wish to learn a prediction set for $y$. This set takes the form 
\[
\{y:\,p^{(x, y)} > \alpha\}.
\]
If the training and test-data are exchangeable, it can be shown that the above prediction set is marginally valid at level $\alpha$ (Proposition 1.2 \citep{balasubramanian2014conformal}). 

The nested (transductive) conformal predictor starts with a sequence of nested sets. For any $N\ge 1$ and $(z_1, \ldots, z_N) \in \mathcal{Z}^N$, let $\{\mathcal{F}_t(z_1, \ldots, z_N)\}_{t\in\mathcal{T}}$ be a collection of sets, with each $\mathcal{F}_t : \mathcal{Z}^N \to 2^{\mathcal{Z}}$ being invariant to permutations of indice (the codomain of $\mathcal{F}_t$ corresponds to all subsets of $\mathcal{Z}$). The nested condition reads that for every $t_1 \leq t_2 \in \mathcal{T}$, $\mathcal{F}_{t_1} \subseteq \mathcal{F}_{t_2}$. 

Unlike the non-transductive setting considered in the main paper, the codomain of $\mathcal{F}_t$ is not a function from $\mathcal{X} \mapsto 2^{\mathcal{Y}}$ but a subset of $\mathcal{X} \times \mathcal{Y}$. The transductive version is more general and includes the non-transductive version; every $g : \mathcal{X} \to 2^\mathcal{Y}$ can be represented by $\{(x, y) : y \in g(x)\} \subseteq \mathcal{X} \times \mathcal{Y}$. 
For observations $Z_1, \ldots, Z_n$ and a possible future $z$, define the scores
\begin{align*}
\alpha_i^z &:= \inf\{t\in\mathcal{T}:\,Z_i\in\mathcal{F}_t(Z_1, \ldots, Z_n, z)\},\quad i\in[n],\\
\alpha_{n+1}^z &:= \inf\{t\in\mathcal{T}:\,z\in\mathcal{F}_t(Z_1,\ldots,Z_n, z)\}.
\end{align*}

Once nonconformity scores are produced, nested conformal reduces to the usual transductive conformal inference.The final prediction set is produced in the usual way:
\[
C_{\alpha}(Z_1, \ldots, Z_n) := \{z:\,p^z > \alpha\},\quad\mbox{where}\quad p^z := \frac{|\{i\in[n+1]:\,\alpha_i^z \ge \alpha_{n+1}^z\}|}{n+1}.
\]

Below, we prove that the converse also holds, that is every instance of standard conformal inference can be viewed as an instance of nested conformal.

\begin{prop}\label{prop:any-conformal-is-nested}
Suppose $A$ is a given nonconformity measure. There exists a nested sequence such that the nested conformal set matches with the conformal prediction set.
\end{prop}
\begin{proof}[Proof of Proposition~\ref{prop:any-conformal-is-nested}]

Consider any $(z_1, \ldots, z_N) \in \mathcal{Z}^{N}$ and let $(\alpha_1, \ldots, \alpha_N) = A(z_1, \ldots, z_N)$. Let $\mathcal{T} =  \mathbb{R}$. If $t = \alpha_i$ for some $i$, define $\mathcal{F}_t(z_1, \ldots, z_N) := \{z_i : \alpha_i \leq t\}$. Also, define $\mathcal{F}_\infty = \mathcal{Z}$ and $\mathcal{F}_{-\infty} = \emptyset$. For $t \notin \{\alpha_1, \ldots, \alpha_N\}$, let $\widehat{t} = \max\{t' \in \{-\infty, \alpha_1, \ldots, \alpha_N, \infty\} : t' < t\}$. Then, set $\mathcal{F}_t(z_1, \ldots, z_N) = \mathcal{F}_{\widehat{t}}(z_1, \ldots, z_N)$. The nested condition is true by construction. It is easily verified that the conformal prediction set induced by this nested sequence is equivalent to the one produced by the original $A$: we have for $i \in [N]$, 
\begin{align*}
\alpha_i^{\text{nested}} &= \inf\{t \in \mathcal{T} : z_i \in \mathcal{F}_t(z_1, \ldots, z_N)\} = \alpha_i.
\end{align*}
In other words, the nonconformity scores are recovered, and consequently the conformal prediction set will be identical. 
\end{proof}

Proposition 1.3 of~\cite{balasubramanian2014conformal} shows that conformal prediction is universal in a particular sense (informally, any valid scheme for producing assumption-free confidence sets can be replaced by a conformal prediction scheme that is at least as efficient). Since everything that can be accomplished via nested conformal prediction can also be done via conformal prediction and vice versa, nested conformal prediction is also universal in the same sense.

\section{K-fold cross-conformal and CV+ using nested sets}
\label{app:k-fold}
In Section~\ref{sec:Jackknife-plus} we rephrased leave-one-out cross-conformal and jackknife+ in the nested framework. In this section, we will now describe their K-fold versions. 
 \subsection{Extending K-fold cross-conformal using nested sets}\label{subsec:Cross-conformal}

 Suppose $S_1, \ldots, S_K$ denotes a disjoint partition of $\{1,2,\ldots,n\}$ such that $|S_1| = |S_2| = \cdots = |S_K|$.
 For exchangeability, this equality of sizes is very important.
 Let $m = n/K$ (assume $m$ is an integer).
 Let $\{\mathcal{F}_t^{-S_k}(x)\}_{t\in\mathcal{T}}$ be a sequence of nested sets computed based on $\{1,2,\ldots,n\}\setminus S_k$.
 Define the score
 \[
 r_i(x, y) ~:=~ \inf\left\{t\in\mathcal{T}:\,y\in\mathcal{F}_t^{-S_{k(i)}}(x)\right\},
 \] 
 where $k(i)\in[K]$ is such that $i\in S_{k(i)}$.
 The cross-conformal prediction set is now defined as
 \[
 C^{\texttt{cross}}_K(x) ~:=~ \left\{y:\,\sum_{i=1}^n \mathbbm{1}\{r_i(X_i, Y_i) < r_i(x, y)\} < (1-\alpha)(n+1)\right\}.
 \]
 It is clear that if $K = n$ then $C_K^{\texttt{cross}}(x) = C^{\texttt{LOO}}(x)$ for every $x$. The following result proves the validity of $C^{\texttt{cross}}_K(\cdot)$ as an extension of Theorem 4 of~\citet{barber2019predictive}. This clearly reduces to Theorem~\ref{thm:validity-nested-conformal-jackknife-plus} if $K = n$. 
 \begin{thm}\label{thm:K-fold-validity}
 If $(X_i, Y_i), i\in[n]\cup\{n+1\}$ are exchangeable and sets $\mathcal{F}_t^{-S_k}(x)$ constructed based on $\{(X_i,Y_i):\,i\in[n]\setminus S_k\}$ are invariant to their ordering, then 
 \[
 \mathbb{P}(Y_{n+1}\in  C^{\texttt{cross}}_K(X_{n+1})) ~\ge~ 1- 2\alpha - \min\left\{\frac{1-K/n}{K+1}, \frac{2(K-1)(1-\alpha)}{n+K}\right\}.
 \]
 \end{thm}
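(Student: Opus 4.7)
The plan is to adapt the proof of Theorem 4 of \citet{barber2019predictive} to our nested framework. Their argument invokes only two properties of the leave-one-out residuals: (i) joint exchangeability of the residuals within a fold together with that of the test point, and (ii) the miscoverage event is determined entirely by pairwise comparisons among these scores. Both properties transfer after replacing the absolute-deviation residual $|y-\widehat{\mu}^{-S_k}(x)|$ by the generic nested score $r_i(x,y) = \inf\{t \in \mathcal{T}:\, y \in \mathcal{F}_t^{-S_{k(i)}}(x)\}$; the only assumption about the score that the counting portion of the proof uses is the permutation invariance of the nested-set construction, which we have assumed.

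First, I would augment the setup by artificially assigning the test index $n+1$ to some fold $S_{k^*}$, and correspondingly define $r_{n+1}(x,y)$ using the nested sets trained on $\{(X_i,Y_i) : i \in [n+1] \setminus S_{k^*}\}$. Set $R_i := r_i(X_i, Y_i)$ for $i \in [n+1]$. Combining exchangeability of the data with the permutation invariance of the fold-wise nested-set construction, the scores $\{R_i : i \in S_{k^*} \cup \{n+1\}\}$ are jointly exchangeable, and symmetrically so under any reassignment of $n+1$ to a different fold.

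Second, I would translate the miscoverage event $\{Y_{n+1} \notin C^{\texttt{cross}}_K(X_{n+1})\}$, namely
\[
\sum_{i=1}^n \mathbbm{1}\{R_i < r_i(X_{n+1}, Y_{n+1})\} \ge \lceil(1-\alpha)(n+1)\rceil,
\]
into a ``strange pair'' count as in Barber et al.\ and bound it in two complementary ways: (a) a union bound that separates the test point's fold from the other $K-1$ folds and applies fold-wise exchangeability (yielding the $(1-K/n)/(K+1)$ correction), and (b) a global double-counting of strange pairs across folds based on the full residual-comparison matrix (yielding the $2(K-1)(1-\alpha)/(n+K)$ correction). Taking the minimum of these two gives the claimed bound, and setting $K=n$ recovers Theorem~\ref{thm:validity-nested-conformal-jackknife-plus}.

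The main obstacle is the bookkeeping around the artificial assignment of $n+1$ to a fold: since the partition $\{S_1,\dots,S_K\}$ is fixed in advance, one must verify that the redefined $r_{n+1}$ is consistent with the unchanged $r_i$ for $i \in [n]$ and that the desired exchangeability still holds after this modification. Beyond this, the nontrivial step is deriving the second bound $2(K-1)(1-\alpha)/(n+K)$, which requires a careful cross-fold counting argument; the first bound $(1-K/n)/(K+1)$ follows from a more direct extension of the jackknife+ counting lemma.
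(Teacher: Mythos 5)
Your proposal is correct and takes essentially the same route as the paper: the paper explicitly presents Theorem~\ref{thm:K-fold-validity} as an extension of Theorem 4 of \citet{barber2019predictive} obtained by substituting the nested-set score $r_i(x,y)=\inf\{t\in\mathcal{T}: y\in\mathcal{F}_t^{-S_{k(i)}}(x)\}$ and invoking permutation invariance of the fold-wise construction, which is precisely your plan. The only difference is one of detail rather than method --- the paper writes out the comparison-matrix argument (the matrix $D$ and the set $\mathcal{I}(D)$ of Appendix~\ref{appsec:Auxiliary-results}) only for the $K=n$ case and otherwise defers the fold-assignment bookkeeping and the two counting bounds to Barber et al., whereas you sketch them explicitly.
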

 See Appendix~\ref{appsec:Jackknife-plus} for a proof.
Although the construction of $C^{\texttt{cross}}_K(x)$ is based on a $K$ fold split of the data. The form is exactly the same as that of $C^{\texttt{LOO}}(x)$ in~\eqref{eq:LOO}. Hence the computation of $C^{\texttt{cross}}_K(x)$ can be done based on the discussion in Section~\ref{subsec:compuation-cross}. In particular, if each of the nested sets $\mathcal{F}_t(x)$ are either intervals or empty sets, the $C^{\texttt{cross}}_K(x)$ aggregation step (after computing the residuals) can be performed in time $O(n\log n)$.

 \subsection{Extending CV+ using nested sets}
 The prediction sets ${C}^{\texttt{cross}}(x)$ and $C^{\texttt{cross}}_K(x)$ are defined implicitly and are in general not intervals. The sets ${C}^{\texttt{cross}}(x)$ and $C^{\texttt{cross}}_K(x)$ can be written in terms of nested sets as
 \[
 C^{\texttt{cross}}_K(x) ~:=~ \left\{y:\,\sum_{i=1}^n \mathbbm{1}\big\{y \notin \mathcal{F}_{r_i(X_i, Y_i)}^{-S_{k(i)}}(x)\big\} < (1-\alpha)(n+1)\right\}.
 \]
 In this subsection, we show that there exists an explicit interval that always contains $C_{K}^{\texttt{cross}}(x)$ whenever $\{\mathcal{F}_t(x)\}_{t\in\mathcal{T}}$ is a collection of nested intervals (instead of just nested sets). This is a generalization of the CV+ interval defined by~\citet{barber2019predictive}. The discussion of this subsection can be extended to the case whenever the nested sets are either intervals or the empty set just like we did for leave-one-out cross-conformal and jackknife+ in Appendix~\ref{app:empty-case}. 

 If each $\mathcal{F}_t(x)$ is an interval, we can write $\mathcal{F}_{r_i(X_i,Y_i)}^{-S_{k(i)}}(x) = [\ell_i(x), u_i(x)]$ for some $\ell_i(\cdot), u_i(\cdot)$. Using this notation, we can write $C^{\texttt{cross}}_K(x)$ as
 \[
 C^{\texttt{cross}}_K(x) ~:=~ \left\{y:\,\sum_{i=1}^n \mathbbm{1}\{y\notin [\ell_i(x), u_i(x)]\} < (1-\alpha)(n+1)\right\}.
 \] 

Following the same arguments as in Section~\ref{subsec:jackknife+} we can define the CV+ prediction 
 \begin{equation}
C^{\texttt{CV+}}_K(x) 	  ~:=~  [q^{-}_{n,\alpha}(\ell_i(x)), q^{+}_{n,\alpha}(u_i(x))] ~\supseteq~ C^{\texttt{cross}}_K(x)  ,
 \end{equation}
 where $q^{-}_{n,\alpha}(\ell_i(x))$ denotes the $\lfloor\alpha(n+1)\rfloor$-th smallest value of  $\{\ell_i(x)\}_{i=1}^n$. and $q^{+}_{n,\alpha}(u_i(x))$ denotes the $\lceil(1-\alpha)(n+1)\rceil$-th smallest value of  $\{u_i(x)\}_{i=1}^n$.
 For $K = n$, $C^{\texttt{CV+}}_K(x)$ reduces to the jackknife+ prediction interval  $C^{\texttt{JP}}(x)$. %
$C^{\texttt{CV+}}(x)$ and $C^{\texttt{JP}}(x)$ are always non-empty intervals if each of the $\mathcal{F}_t(x)$ are non-empty intervals. Because $C^{\texttt{cross}}_K(x) \subseteq C^{\texttt{CV+}}_K(x)$ for all $x$, we obtain a validity guarantee from Theorem~\ref{thm:K-fold-validity}: for all $2\le K\le n$,
 \[
 \mathbb{P}\left(Y_{n+1}\in C^{\texttt{CV+}}_K(X_{n+1})\right) \ge 1 - 2\alpha - \min\left\{\frac{1-K/n}{K + 1},\,\frac{2(K-1)(1-\alpha)}{n + K}\right\}.
 \]
Note that the convex hull of the K-fold cross-conformal prediction set is also an interval smaller than the CV+ interval
 \begin{equation*}
C^{\texttt{cross}}_K(x)  \subseteq \text{Conv}(C^{\texttt{cross}}_K(x)) \subseteq C^{\texttt{CV+}}_K(x) 	.
\end{equation*}
In some cases, the containment above can be strict, and hence we recommend the K-fold cross conformal or its convex hull over CV+. 

\section{Nested conformal based on multiple repetitions of splits}\label{sec:sampling-nested}
In Section~\ref{sec:split-conformal}, we described the nested conformal version of split conformal which is based on one particular split of the data into two parts, and in Appendix~\ref{app:k-fold} we discussed partitions of the data into $2 \leq K \leq n$ parts. In practice, however, to reduce the additional variance due to randomization, one might wish to consider several (say $M$) different splits of data into two parts combine these predictions. \citet{lei2018distribution} discuss a combination of $M$ split conformal prediction sets based on Bonferroni correction and in this section, we consider an alternative combination method that we call \emph{subsampling conformal}. The same idea can also be used for cross-conformal version where the partition of the data into $K$ folds can be repeatedly performed $M$ times.
The methods to be discussed are related to those proposed by~\citet{carlsson2014aggregated}, \citet{vovk2015cross}, and~\citet{linusson2017calibration, linusson2019efficient}, but these papers do not provide validity results for their methods.

\subsection{Subsampling conformal based on nested prediction sets}
Fix a number $K\ge1$ of subsamples. Let $M_1$, $M_2$, $\ldots$, $M_K$ denote independent and identically distributed random sets drawn uniformly from $\{M: M\subset[n]\}$; one can also restrict to $\{M:M\subset[n], |M| = m\}$ for some $m\ge1$. For each set $M_k$, define the $p$-value for the new prediction $y$ at $X_{n+1}$ as
\[
p_k^y(x) := \frac{|\{i\in M_k^c:\,r_k(x, y) \le r_k(X_i, Y_i)\}| + 1}{|M_k^c| + 1},
\]
where the scores $r_k(X_i, Y_i)$ and $r_k(x, y)$ are computed as
\begin{align*}
r_k(X_i, Y_i) &:= \inf\{t\in\mathcal{T}:\,Y_i \in \mathcal{F}_t^{M_k}(X_i)\},\quad i\in M_k^c,\\
r_k(x, y) &:= \inf\{t\in\mathcal{T}:\,y\in\mathcal{F}_t^{M_k}(x)\},
\end{align*}
based on nested sets $\{\mathcal{F}_t^{M_k}(x)\}_{t\in\mathcal{T}}$ computed based on observations in $M_k$.
Define the prediction set as
\[
C_{K}^{\texttt{subsamp}}(x) := \left\{y:\frac{1}{K}\sum_{k=1}^K \frac{|\{i\in M_k^c:\,r_k(x, y) \le r_k(X_i, Y_i)\}| + 1}{|M_k^c| +1} > \alpha\right\}.
\]
It is clear that for $K = 1$, $C^{\texttt{subsamp}}_K(x)$ is same as the split conformal prediction set discussed in Section~\ref{sec:split-conformal}.
The following results proves the validity of $C_K^{\texttt{subsamp}}(\cdot)$.
\begin{thm}\label{thm:nested-subsampling}
	If $(X_i,Y_i),\,i\in[n]\cup\{n+1\}$ are exchangeable, then for any $\alpha\in[0,1]$ and $K\ge1$,
	$\mathbb{P}\left(Y_{n+1}\notin C_{K}^{\texttt{subsamp}}(X_{n+1})\right) \le \min\{2,K\}\alpha.$
\end{thm}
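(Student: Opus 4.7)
My plan is as follows. First, I would argue that for each $k \in [K]$, the quantity $p_k := p_k^{Y_{n+1}}(X_{n+1})$ is a super-uniform p-value. Fix $k$ and condition on the pair $(M_k, \{(X_j, Y_j) : j \in M_k\})$; since $M_k$ is independent of the data and $(X_i, Y_i)_{i \in [n+1]}$ is exchangeable, the remaining points $\{(X_i, Y_i) : i \in M_k^c \cup \{n+1\}\}$ are exchangeable under this conditioning, and the nested family $\{\mathcal{F}_t^{M_k}\}_{t \in \mathcal{T}}$ is deterministic. Consequently, $\{R_k(X_i, Y_i) : i \in M_k^c \cup \{n+1\}\}$ is an exchangeable collection of scalars, and the rank of $R_k(X_{n+1}, Y_{n+1})$ within it is uniform on $\{1, \ldots, |M_k^c|+1\}$ (with random tie-breaking if needed), exactly as in the proof of Proposition~\ref{thm:Conformal-main-result}. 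This yields $\mathbb{P}(p_k \leq t) \leq t$ for every $t \in [0,1]$ after marginalizing.

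The $K = 1$ case is then immediate: $\{Y_{n+1} \notin C_1^{\texttt{subsamp}}(X_{n+1})\} = \{p_1 \leq \alpha\}$, whose probability is at most $\alpha$ by super-uniformity.

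The heart of the argument will be the $K \geq 2$ case, where the miscoverage event is $\{\bar{p} \leq \alpha\}$ with $\bar{p} := K^{-1}\sum_{k=1}^K p_k$. I would prove the averaging bound $\mathbb{P}(\bar{p} \leq \alpha) \leq 2\alpha$ via a convex majorant of the indicator. Set
\[
\phi(u) := \max\!\bigl(0,\; 2 - u/\alpha\bigr),
\]
which is non-negative, non-increasing, convex, and satisfies $\phi(u) \geq \mathbbm{1}\{u \leq \alpha\}$ for every $u \geq 0$. Jensen's inequality gives
\[
\mathbbm{1}\{\bar{p} \leq \alpha\} \;\leq\; \phi(\bar{p}) \;=\; \phi\bigl(K^{-1}\textstyle\sum_{k} p_k\bigr) \;\leq\; K^{-1}\textstyle\sum_{k} \phi(p_k).
\]
Taking expectations and using that $\phi$ is non-increasing while each $p_k$ stochastically dominates $U \sim \mathrm{Uniform}(0,1)$ (a consequence of super-uniformity),
\[
\mathbb{P}(\bar{p} \leq \alpha) \;\leq\; K^{-1}\textstyle\sum_{k} \mathbb{E}[\phi(p_k)] \;\leq\; \int_0^1 \phi(u)\,du \;=\; 2\alpha,
\]
where the last equality is a direct computation in the nontrivial range $2\alpha \leq 1$ (for $\alpha > 1/2$ the bound $\leq 2\alpha$ is vacuous). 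Combining with the $K = 1$ bound yields the unified $\min\{2, K\}\alpha$.

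The main obstacle will be securing the tight constant $2$ rather than a weaker $4$ or $1/(2(1-\alpha))$. A naive Markov bound on $1 - \bar{p}$ only gives $1/(2(1-\alpha))$, and a pigeonhole-style Markov argument (observing that at least $K/2$ of the $p_k$ must lie below $2\alpha$ whenever $\bar{p} \leq \alpha$) gives $4\alpha$. The specific majorant $\phi(u) = (2 - u/\alpha)_+$ is calibrated so that $\int_0^1 \phi(u)\,du = 2\alpha$ exactly, mirroring the optimization underlying the Vovk--Wang merging result that $2\bar{p}$ is itself a valid p-value for arbitrarily dependent p-values. Dependence among the $p_k$'s does not pose a problem because Jensen is applied pathwise and stochastic dominance only requires marginal super-uniformity of each $p_k$.
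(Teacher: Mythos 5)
Your proof is correct, and its skeleton is the same as the paper's: show each $p_k := p_k^{Y_{n+1}}(X_{n+1})$ is a valid (super-uniform) p-value via the split-conformal exchangeability argument, then invoke the fact that twice the arithmetic mean of arbitrarily dependent valid p-values is again a valid p-value. The difference is in how that second step is handled: the paper simply cites Proposition 18 of Vovk and Wang's work on combining p-values (applied conditionally on $M_1,\dots,M_K$), whereas you prove the merging inequality from scratch via the convex majorant $\phi(u) = (2 - u/\alpha)_+$ of $\mathbbm{1}\{u \le \alpha\}$, Jensen's inequality applied pathwise to the average, and first-order stochastic dominance of each $p_k$ over $\mathrm{Uniform}(0,1)$, with $\int_0^1 \phi(u)\,du = 2\alpha$ giving exactly the constant $2$. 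Your route buys a self-contained and arguably cleaner argument: it needs only the \emph{marginal} super-uniformity of each $p_k$ (so you can marginalize over $M_k$ up front rather than track what is being conditioned on when the merging lemma is applied, a point the paper glosses over slightly), and it makes transparent why the constant is $2$ and not $4$. The paper's citation-based route is shorter and makes the connection to the general p-value-merging literature explicit. One cosmetic remark: your parenthetical about random tie-breaking is unnecessary for the one-sided bound $\mathbb{P}(p_k \le t) \le t$ --- ties only make the conformal p-value more conservative --- but this does not affect correctness.
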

See Appendix~\ref{appsec:sampling-nested} for a proof.
Note that we can write $p_k^y(x)$ as $p^y(x; M_k)$ by adding the argument for observations used in computing the nested sets. Using this notation, we can write for $K$ large
\begin{equation}\label{eq:approx-average}
\frac{1}{K}\sum_{k=1}^K p^y(x; M_k) ~\approx~ \mathbb{E}_{M}[p^y(x; M)],
\end{equation}
where the expectation is taken with respect to the random ``variable'' $M$ drawn uniformly from a collection of subsets of $[n]$ such as $\{S:\,S\subset[n]\}$ or $\{S:\,S\subset[n], |S| = m\}$ for some $m\ge1$. Because any uniformly drawn element in $\{S:\,S\subset[n]\}$ can be obtained by sampling from $[n]$ without replacement (subsampling), the above combination of prediction intervals can be thought as subbagging introduced in~\citep{buhlmann2002analyzing}.

\citet[Section 2.3]{lei2018distribution} combine the $p$-values $p_k^y(x)$ by taking the minimum. They define the set
\[
C_K^{\texttt{split}}(x) := \bigcap_{k=1}^K \{y:\,p_k^y(x) > \alpha/K\} ~=~ \left\{y:\,K\min_{1\le k\le K}p_k^y(x) > \alpha\right\}.
\]
Because $p_1^y(x), p_2^y(x), \ldots, p_K^y(x)$ are independent and identically distributed (conditional on the data), averaging is a natural stabilizer than the minimum; all the $p$-values should get equal contribution towards the stabilizer but the minimum places all its weight on one $p$-value.  

\citet[Appendix B]{vovk2015cross} describes a version of $C_{K}^{\texttt{subsamp}}(\cdot)$ using bootstrap samples instead of subsamples and this corresponds to bagging. We consider this version in the following subsection. 

\subsection{Bootstrap conformal based on nested prediction sets}\label{sec:Bootstrap-nested}
The subsampling prediction set $C_K^{\texttt{subsamp}}(x)$ is based on sets $M_k$ obtained by sampling without replacement. Statistically a more popular alternative is to form sets $M_k$ by sampling with replacement, which corresponds to bootstrap. 

Let $B_1, \ldots, B_K$ denote independent and identically distributed bags (of size $m$) obtained by random sampling with replacement from $[n] = \{1,2,\ldots,n\}$. For each $1\le k\le K$, consider scores
\begin{align*}
r_{k}(X_i, Y_i) &:= \inf\{t\in\mathcal{T}:\,Y_i\in\mathcal{F}_t^{B_k}(X_i)\},\quad i\in B_k^c,\\
r_k(x, y) &:= \inf\{t\in\mathcal{T}:\,y\in \mathcal{F}_t^{B_k}(x)\},
\end{align*}
based on nested sets $\{\mathcal{F}_t^{B_k}(x)\}_{t\in\mathcal{T}}$ computed based on observations $(X_i, Y_i), i\in B_k$; $B_k$ should be thought of as a bag rather than a set of observations because of repititions of indices. Consider the prediction interval
\[
C_{\alpha,K}^{\texttt{boot}}(x) := \left\{y:\,\frac{1}{K}\sum_{k=1}^K \frac{|\{i\in [n]\setminus B_k:\,r_k(x, y) \le r_k(X_i, Y_i)\}| + 1}{|[n]\setminus B_k| + 1} > \alpha\right\}.
\]
This combination of prediction interval based on bootstrap sampling is a version of bagging and was considered in~\citet[Appendix B]{vovk2015cross}. The following result proves a validity bound for $C_{\alpha,K}^{\texttt{boot}}(X_{n+1})$. %
\begin{thm}\label{thm:bootstrap-conformal}
	If $(X_i, Y_i), i\in[n]\cup\{n+1\}$ are exchangeable, then for any $\alpha\in[0,1]$ and $K \ge 1$, $\mathbb{P}(Y_{n+1}\notin C_{\alpha, K}^{\texttt{boot}}(X_{n+1})) \le \min\{2,K\}\alpha$.
\end{thm}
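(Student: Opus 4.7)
The plan is to follow the same two-step strategy used for Theorem~\ref{thm:nested-subsampling}, replacing subsample sets with bootstrap bags and carefully accounting for index multiplicities. For each $k\in[K]$, let $P_k := p_k^{Y_{n+1}}(X_{n+1})$, so that the miscoverage event $\{Y_{n+1}\notin C_{\alpha,K}^{\texttt{boot}}(X_{n+1})\}$ is exactly $\{\bar P \le \alpha\}$ with $\bar P := K^{-1}\sum_k P_k$.

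The first step is to show that each $P_k$ is a super-uniform p-value, i.e.\ $\mathbb{P}(P_k \le \alpha) \le \alpha$. The key observation is that $B_k$ is drawn from indices in $[n]$, so $n+1$ is never in-bag. Conditioning on the bag $B_k$ (as a multiset of indices) together with the in-bag values $(X_i,Y_i)_{i\in B_k}$ fixes the nested sequence $\{\mathcal{F}_t^{B_k}(\cdot)\}_{t\in\mathcal{T}}$ and hence the score function $R_k(\cdot,\cdot)$. By exchangeability of $(X_i,Y_i)_{i\in [n+1]}$ and the independence of $B_k$ from the data values, the out-of-bag points $\{(X_j, Y_j):\,j\in [n]\setminus B_k\}$ together with the test point $(X_{n+1}, Y_{n+1})$ remain jointly exchangeable under this conditioning. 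Applying the (now fixed) function $R_k$ produces an exchangeable collection of scores, so the rank of $R_k(X_{n+1},Y_{n+1})$ among them is uniformly distributed. This yields the standard conformal p-value guarantee $\mathbb{P}(P_k \le \alpha) \le \alpha$, exactly as in the subsampling proof.

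The second step is to aggregate the $K$ (dependent) super-uniform p-values into a bound on $\mathbb{P}(\bar P \le \alpha)$. For $K=1$ the bound is just $\alpha$. For $K\ge 2$, I would invoke the classical averaging-of-p-values inequality (R\"uschendorf 1982; see also Vovk and Wang 2020): any average of arbitrarily dependent super-uniform random variables satisfies $\mathbb{P}(\bar P \le \alpha) \le 2\alpha$. Combining both cases gives $\mathbb{P}(\bar P \le \alpha) \le \min\{2,K\}\alpha$, which is exactly the stated miscoverage bound. The main obstacle is the careful handling of bootstrap multiplicities in Step~1: because $B_k$ contains repeated indices, conditional exchangeability must be invoked with $S$ taken to be the \emph{set} of distinct indices appearing in $B_k$, and one uses the standard fact that for exchangeable $(Z_i)_{i\in[n+1]}$ and any $S\subseteq[n+1]$, the collection $(Z_j)_{j\notin S}$ remains exchangeable conditional on $(Z_i)_{i\in S}$. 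Once this is verified, the rest of the argument parallels Theorem~\ref{thm:nested-subsampling} without change.
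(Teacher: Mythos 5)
Your proposal is correct and follows essentially the same route as the paper: condition on the bag so that the out-of-bag points together with the test point are exchangeable, conclude each $p_k^{Y_{n+1}}(X_{n+1})$ is a conditionally valid $p$-value, and then invoke the averaging-of-$p$-values bound (the paper cites Proposition 18 of Vovk and Wang, the same result you attribute to R\"uschendorf) to get the factor $\min\{2,K\}$. Your extra care about bootstrap multiplicities (conditioning on the set of distinct in-bag indices) is a welcome clarification of a point the paper passes over, but it does not change the argument.
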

	See Appendix~\ref{appsec:sampling-nested} for a proof.
\citet[Proposition 1]{carlsson2014aggregated} provide a similar result in the context of aggregated conformal prediction but require an additional consistent sampling assumption. 
The computation of the subsampling and the bootstrap conformal prediction sets is no different from that of cross-conformal and the techniques discussed in susbection~\ref{subsec:compuation-cross} are still applicable. 

\rev{\citet{linusson2019efficient} demonstrated that aggregated conformal methods tend to be conservative. We also observed this in our simulations. Because of this, we did not present these methods in our experiments. }

\section{Cross-conformal and Jackknife+ if $\mathcal{F}_t(x)$ could be empty}\label{app:empty-case}
The definition of cross-conformal~\eqref{eq:LOO} is agnostic to the interval interpretation through $\mathcal{F}_{r_i(X_i, Y_i)}^{-i}(x)$ since $r_i(X_i, Y_i)$ and $r_i(x, y)$ are well-defined irrespective of whether $\mathcal{F}_t(x)$ is an interval or not. However, the discussion in Sections~\ref{subsec:jackknife+} and \ref{subsec:compuation-cross} indicates that the interval interpretation is useful for interpretability as well as to be able to compute $C^\texttt{LOO}(x)$ efficiently. %
In these subsections, we assumed that $\mathcal{F}_t(x)$ is always an interval. However there exist realistic scenarios in which $\mathcal{F}_t(x)$ is always either an interval or an empty set. Fortunately, it turns out that the discussion about jackknife+ and efficient cross-conformal computation can be generalized to this scenario as well. 

\subsection{When can $\mathcal{F}_t(x)$ be empty?}
Consider the quantile estimate based set entailed by the CQR formulation of \citet{romano2019conformalized}: $\mathcal{F}_t(x) = [\widehat{q}_{\alpha/2}(x) - t, \widehat{q}_{1-\alpha/2}(x) + t]$. $\mathcal{F}_t(x)$ is implicitly defined as the empty set if $t < 0.5(\widehat{q}_{\alpha/2}(x) - \widehat{q}_{1-\alpha/2}(x))$. Notice that since $t$ can be negative, the problem we are considering is different from the quantile crossing problem which has separately been discussed by \citet[Section 6]{romano2019conformalized}, and may occur even if the quantile estimates satisfy $\widehat{q}_{\alpha/2}(x) \leq \widehat{q}_{1-\alpha/2}(x)$ for every $x$.  In the cross-conformal or jackknife+ setting, $\mathcal{F}_{r_i(X_i, Y_i)}^{-i}(x) $ is empty for a test point $x$ if 
\[
r_i(X_i, Y_i) < 0.5(\widehat{q}^{-i}_{\alpha/2}(x) - \widehat{q}^{-i}_{1-\alpha/2}(x)),
\]
where $\widehat{q}^{-i}$ are the quantile estimates learnt leaving $(X_i, Y_i)$ out. If the above is true, it implies that $r_i(X_i, Y_i) < r_i(x, y)$ for every possible $y$. From the conformal perspective, the interpretation is that $(x, y)$ is more `non-conforming' than $(X_i, Y_i)$ for every $y \in \mathcal{Y}$. In our experiments, we observe this does occur occasionally for cross-conformal (or out-of-bag conformal described in Section~\ref{sec:OOB-conformal}) with quantile-based nested sets. In hindsight, it seems reasonable that this would happen at least once across multiple training and test points. 

\subsection{Jackknife+ and efficiently computing $C^\texttt{LOO}(x)$ in the presence of empty sets}
Suppose $\mathcal{F}_{t}(x)$ is an interval whenever it is non-empty. Define \[\Lambda_x := \{i : \mathcal{F}_{r_i(X_i, Y_i)}^{-i}(x) \text{ is not empty}\},\]
Equivalently we can write $\Lambda_x := \{i : \exists y, y \in [\ell_i(x), u_i(x)]\}$. The key observation of this section is that for jackknife+ and the computation, only the points in $\Lambda_x$ need to be considered. To see this, we re-write the interval definition of the cross-conformal prediction~\eqref{eq:loo-interval-definition}:
\begin{align}
C^{\texttt{LOO}}(x) &=  \left\{y:\,  \alpha(n+1) - 1 < \sum_{i =1}^n \mathbbm{1}\{y \in [\ell_i(x), r_i(x)]\} \right\} \nonumber \\
&=  \left\{y:\,  \alpha(n+1) - 1 < \sum_{i \in \Lambda_x} \mathbbm{1}\{y \in [\ell_i(x), r_i(x)]\} \right\}, \label{eq:loo-interval-definition-reinterpret}
\end{align}
since if $i \notin \Lambda_x$, no $y$ satisfies $y \in [\ell_i(x), u_i(x)]$.
Following the same discussion as in Section~\ref{subsec:jackknife+}, we can define the jackknife+ prediction interval as
\[
C^{\texttt{JP}}(x) ~:=~ [q_{n,\alpha}^{-}(\{\ell_i(x)\}_{i \in \Lambda_x}),\, -q_{n,\alpha}^{-}(\{-u_i(x)\}_{i \in \Lambda_x})]\]
where $q_{n,\alpha}^{-}(\{\ell_i(x)\}_{i \in \Lambda_x})$ denotes the 
$\lfloor\alpha(n+1)\rfloor$-th smallest value of  $\{\ell_i(x)\}_{i \in \Lambda_x}$ and $q_{n,\alpha}^{-}(\{-u_i(x)\}_{i \in \Lambda_x})$ denotes the $\lfloor\alpha(n+1)\rfloor$-th smallest value of  $\{-u_i(x)\}_{i \in \Lambda_x}$ (if $|\Lambda_x| = n$ the above definition can be verified to be exactly the same as the one provided in equation~\eqref{eq:JP-definition}). It may be possible that $\lfloor\alpha(n+1)\rfloor > |\Lambda_x|$ in which case the jackknife+ interval (and the cross-conformal prediction set) should be defined to be empty. The $1-2\alpha$ coverage guarantee continues to hold marginally even though we may sometimes return empty sets. 

To understand the computational aspect, 
we note that the discussion of Section~\ref{subsec:compuation-cross} continues to hold with %
$\mathcal{Y}^x$ (equation~\eqref{eq:y-k-main}) redefined to only include the intervals end-points for intervals which are defined:
\begin{equation}
\mathcal{Y}^x := \bigcup_{i\in \Lambda_x} \Lbag \ell_i(x), u_i(x)\Rbag,
\end{equation}
and $\mathcal{S}^x$ defined only for these points. This also follows from the re-definition of $C^\texttt{LOO}(x)$ in~\eqref{eq:loo-interval-definition-reinterpret}. With the definition above, Algorithm~\ref{alg:efficientCC} %
works generally for the case where $\mathcal{F}_t(x)$ could be an interval or an empty set.

\section{Proofs}\label{appsec:proofs}

\subsection{Proofs of results in Section~\ref{sec:split-conformal}}\label{appsec:split-conformal}

\begin{proof}[Proof of Proposition~\ref{thm:Conformal-main-result}]
Set $r_{n+1} := r(X_{n+1}, Y_{n+1})$. By the construction of the prediction interval, we have
\[
Y_{n+1}\in C(X_{n+1})\quad\mbox{if and only if}\quad r_{n+1}\le Q_{1-\alpha}(r, \mathcal{I}_2).
\]
Hence
\[
\mathbb{P}(Y_{n+1}\in C(X_{n+1})\big|\{(X_i, Y_i):i\in\mathcal{I}_1\}) = \mathbb{P}(r_{n+1}\le Q_{1-\alpha}(r,\mathcal{I}_2)\big|\{(X_i,Y_i):i\in\mathcal{I}_1\}).
\]
Exchangeability of $(X_i, Y_i), i\in[n]\cup\{n+1\}$ implies the exchangeability of $(X_i, Y_i), i\in\mathcal{I}_2\cup\{n+1\}$ conditional on $(X_i, Y_i), i\in\mathcal{I}_1$. This in turn implies that $r_i, i\in\mathcal{I}_2\cup\{n+1\}$ are also exchangeable (conditional on the first split of the training data) and thus Lemma 2 of~\citet{romano2019conformalized} yields
\[
\mathbb{P}(r_{n+1} \le Q_{1-\alpha}(r, \mathcal{I}_2)|\{(X_i, Y_i):\,i\in\mathcal{I}_1\}) \ge 1-\alpha,
\]
and the assumption of almost sure distinctness of $r_1,\ldots,r_n$ implies~\eqref{eq:Upper-bound}. 
\end{proof}

\subsection{Proof of Theorem~\ref{thm:validity-nested-conformal-jackknife-plus}}\label{appsec:Jackknife-plus}
Define the matrix $D\in\mathbb{R}^{(n+1)\times(n+1)}$ with entries
\[
D_{i,j} := \begin{cases}+\infty,&\mbox{if }i= j,\\r_{(i,j)}(X_i, Y_i),&\mbox{if }i\neq j,\end{cases}
\]
where
$r_{(i,j)}(x, y) := \inf\{t\in\mathcal{T}:\,y\in\mathcal{F}_t^{-(i,j)}(x)\},$
with $\mathcal{F}_t^{-(i,j)}(x)$ defined analogues to $\mathcal{F}_t^{-i}(x)$ computed based on $\{(X_k, Y_k):\,k\in[n+1]\setminus\{i,j\}\}$. It is clear that $D_{i,n+1} = r_{i}(X_i, Y_i)$, and $D_{n+1,i} = r_i(X_{n+1}, Y_{n+1})$. Therefore,
\[
Y_{n+1}\notin{C}^{\texttt{LOO}}(X_{n+1})\quad\mbox{if and only if}\quad(1-\alpha)(n+1)\le \sum_{i=1}^{n+1} \mathbbm{1}\{D_{i,n+1} < D_{n+1,i}\},
\]
which holds if and only if $n+1\in\mathcal{I}(D)$, with $\mathcal{I}(D)$ is defined as in~\eqref{eq:Definition-I-of-A-set}. Hence from Theorem~\ref{thm:fundamental-jackknife-result} the result is proved, if its assumption is verified. This assumption follows from the fact that $\mathcal{F}_t^{-(i,j)}(x)$ treats its training data symmetrically.
\qed 

\subsection{Proof of Theorem~\ref{thm:OOB-cross-validity}}\label{appsec:OOB-conformal}
	The OOB-conformal procedure treats the training data $\{(X_i, Y_i)\}_{i\in [n]}$ exchangeably but not the test point $(X_{n+1}, Y_{n+1})$. To prove a validity guarantee, we first lift the OOB-conformal procedure to one that treats all points $\{(X_i, Y_i)\}_{i\in [n+1]}$ exchangeably. This is the reason we require a random value of $K$, as will be evident shortly. 
	
	The lifted OOB-conformal method is described as follows. Construct a collection of sets $\{\widetilde{M}_i\}_{i=1}^{\widetilde{K}}$, where each $\widetilde{M}_i$ is independently drawn using bagging or subsampling $m$ samples from $[n+1]$ (instead of $[n]$). Following this, for every $(i, j) \in [n+1]\times [n+1]$ with $i \neq j$ define $\{\mathcal{F}_t^{-(i, j)}\}_{t \in \mathcal{T}}$ as the sequence of nested sets learnt by ensembling samples $M_{-(i,j)} := \{M_k: i, j \notin M_k\}$. The nested sets $\{\mathcal{F}_t^{-(i, j)}\}_{t \in \mathcal{T}}$ are then used to compute residuals on $(X_i, Y_i)$ and $(X_j, Y_j)$: we define a matrix $D\in\mathbb{R}^{(n+1)\times(n+1)}$ with entries
	\[
	D_{i,j} := \begin{cases}+\infty,&\mbox{if }i= j,\\r_{(i,j)}(X_i, Y_i),&\mbox{if }i\neq j,\end{cases}
	\]
	where
	$r_{(i,j)}(x, y) := \inf\{t\in\mathcal{T}:\,y\in\mathcal{F}_t^{-(i,j)}(x)\}.$
	
	 We will now invoke Theorem~\ref{thm:fundamental-jackknife-result} for the exchangeable random variables $\{Z_i = (X_i, Y_i)\}_{i=1}^{n+1}$. The assumption of Theorem~\ref{thm:fundamental-jackknife-result} holds for the elements $D_{i,j}$ since the ensemble method we use to learn $\{\mathcal{F}_t^{-(i,j)}\}_{t\in\mathcal{T}}$ treats all random variables apart from $Z_i, Z_j$ symmetrically. Thus for every $j \in [n+1]$, 
	\[
	\Pr(j \notin \mathcal{I}(D)) \leq 2\alpha - \frac{1}{n+1} \leq 2\alpha,
	\]
	for $\mathcal{I}(D)$ defined in~\eqref{eq:Definition-I-of-A-set}. We will now argue that for $i \in [n]$, $D_{i,n+1} = r_{i}(X_i, Y_i)$, and $D_{n+1,i} = r_i(X_{n+1}, Y_{n+1})$. Notice that for every $j \in [\widetilde{K}]$,
	 \begin{align*}
	 	\text{if we use bagging: }	& \Pr(n+1 \notin \widetilde{M}_j) = \left(1 - \frac{1}{n+1}\right)^m ;\\ 
	 	\text{if we use subsampling: }& \Pr(n+1 \notin \widetilde{M}_j) = \left(1 - \frac{m}{n+1}\right).
	 \end{align*}
	 and so $K = |\{j: n+1 \notin \widetilde{M}_j\}| \sim \text{Bin}(\widetilde{K}, (1 - \frac{1}{n+1})^m)$ for bagging and $K  \sim \text{Bin}(\widetilde{K}, 1 - \frac{m}{n+1})$ for subsampling. Evidently, we can conclude that conditioned on the set $\{j: n+1 \notin \widetilde{M}_j\}$, $\{M_j\}_{j=1}^K \overset{d}{=} \{\widetilde{M}_j: n+1 \notin \widetilde{M}_j\}  $. In other words, the OOB-conformal bagging or subsampling procedure is embedded in its lifted version. Therefore,
	 \begin{align*}
	 	\Pr(Y_{n+1}\notin{C}^{\texttt{OOB}}(X_{n+1})) &= \Pr((1-\alpha)(n+1)\le \sum_{i=1}^{n+1} \mathbbm{1}\{D_{i,n+1} < D_{n+1,i}\}) \\
	 	& = \Pr(n+1\in\mathcal{I}(D)) \qquad\qquad\qquad\text{(per definition~\eqref{eq:Definition-I-of-A-set}),}
	 \end{align*}
	 which as we have shown happens with probability at most $2\alpha$. This completes the proof.
\qed 

\subsection{Proof of Proposition~\ref{prop:efficientCC-correctness}}
\label{appsec:alg-proof}
To see why Algorithm~\ref{alg:efficientCC} works, we describe it step by step along with variable definitions. To simplify understanding, assume that $\mathcal{Y}^x$ does not contain repeated elements (Algorithm~\ref{alg:efficientCC} continues to correctly compute the cross-conformal prediction set even if this is not true; the requirement mentioned on the ordering of elements in $\mathcal{Y}^x$ before definition~\eqref{eq:s-k-main} is crucial for Algorithm~\ref{alg:efficientCC} to remain correct with repeated elements). As we make a single pass over $\mathcal{Y}^x$ in sorted order, at every iteration $i$, when we are on line \ref{line:condition-2} or line \ref{line:condition-3}, the variable $count$ stores the number of training points that are more nonconforming than $(x, y^x_i)$; $count$ is increased by $1$ whenever a left end-point is seen (line \ref{line:increase-count}) and is decreased by $1$ after a right end-point is seen (line \ref{line:decrease-count}). Thus $count$ correctly computes the left hand side of condition~\eqref{eq:y-condition} for the current value of $y \in \mathcal{Y}^x$. The rest of the algorithm compares the value in $count$ to the right hand side of condition~\eqref{eq:y-condition}, which is stored in $threshold$, to compute the prediction set $C^x$. %

If $count$ is strictly larger than $\alpha(n+1)-1$, then by \eqref{eq:y-condition}, $y^x_i \in C^\texttt{LOO}(x)$. If this were not true for $y^x_{i-1}$ (as checked in line \ref{line:condition-2}), then for every $y \in (y^x_{i-1}, y^x_{i})$, we have $\ y \notin C^\texttt{LOO}(x)$. Hence $y^x_i$ is a left end-point for one of the intervals in $C^\texttt{LOO}(x)$. We store this value of $y$ in $left\_{endpoint}$ until the right end-point is discovered (line \ref{line:update-left-endpoint}). Next, if we are at a right end-point, if the current value of $count$ is larger than $\alpha(n+1)-1$, and if $count$ is at most $\alpha(n+1)-1$ after the interval ends (condition on line \ref{line:condition-3}), then $y _i^x\in C^\texttt{LOO}(x)$ and for every $y \in (y_i^x, y_{i+1}^x)$, $y \notin C^\texttt{LOO}(x)$. %
Thus $y _i^x$ is a right end-point for some interval in $C^x$, with the left end-point given by the current value of $left\_{endpoint}$. We update $C^x$ accordingly in line \ref{line:update-prediction-set}. %
\qed

\subsection{Auxiliary lemmas used in Appendix~\ref{appsec:Jackknife-plus}}\label{appsec:Auxiliary-results}
For any matrix $A\in\mathbb{R}^{N\times N}$ and $\alpha\in[0,1]$, define
\begin{equation}\label{eq:Definition-I-of-A-set}
\mathcal{I}(A) := \left\{i\in[N]:\,\sum_{j=1}^N \mathbbm{1}\{A_{ji} < A_{ij}\} \ge (1-\alpha)N\right\}.
\end{equation}
\begin{lem}[Section 5.3 of~\citet{barber2019predictive}]\label{lem:Cardinality}
For any matrix $A$,
\[
\frac{|\mathcal{I}(A)|}{N} \le 2\alpha - \frac{1}{N}.
\]
\end{lem}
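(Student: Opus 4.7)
The plan is to prove the bound by a double-counting argument on the entries of $A$ restricted to $\mathcal{I}(A)$. Let $K := |\mathcal{I}(A)|$. By definition, every row indexed by $i \in \mathcal{I}(A)$ satisfies
\[
\sum_{j=1}^N \mathbbm{1}\{A_{ji} < A_{ij}\} \ge (1-\alpha)N,
\]
so summing these row inequalities over $i \in \mathcal{I}(A)$ yields a lower bound of $(1-\alpha)NK$ on the total count $S := \sum_{i \in \mathcal{I}(A)}\sum_{j=1}^N \mathbbm{1}\{A_{ji} < A_{ij}\}$.

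Next I would produce a matching upper bound on $S$ by splitting the inner sum over $j$ according to whether $j \in \mathcal{I}(A)$ or not. For the ``outside'' contribution, I just bound each indicator by $1$, giving at most $K(N-K)$. For the ``inside'' contribution, I exploit the antisymmetry of the strict inequality: for any distinct $i,j$, one cannot have both $A_{ji} < A_{ij}$ and $A_{ij} < A_{ji}$, and when $i=j$ both are false. Summing over ordered pairs in $\mathcal{I}(A)$ and using the symmetry obtained by relabeling $i \leftrightarrow j$, the two sums $\sum_{i,j \in \mathcal{I}(A)}\mathbbm{1}\{A_{ji}<A_{ij}\}$ and $\sum_{i,j \in \mathcal{I}(A)}\mathbbm{1}\{A_{ij}<A_{ji}\}$ are equal, and together they are at most $K(K-1)$. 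Hence the ``inside'' contribution is at most $K(K-1)/2$.

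Combining the two inequalities,
\[
(1-\alpha)NK \;\le\; S \;\le\; \tfrac{1}{2}K(K-1) + K(N-K).
\]
Assuming $K \ge 1$ (the claim is trivial otherwise), I would divide by $K$ and rearrange to obtain $K/2 \le \alpha N - 1/2$, i.e.\ $K \le 2\alpha N - 1$, which gives the desired $|\mathcal{I}(A)|/N \le 2\alpha - 1/N$.

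The only subtle step is the antisymmetry argument for pairs inside $\mathcal{I}(A)$: one must be careful to use the \emph{strict} inequality (so that the diagonal $i=j$ contributes zero) and to invoke the symmetry of the double sum under relabeling to convert the bound on the union of the two indicators into a factor of $1/2$. Once that is correctly handled, the rest is routine algebra.
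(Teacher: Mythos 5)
Your proof is correct: the double-counting bound $(1-\alpha)NK \le \tfrac{1}{2}K(K-1) + K(N-K)$ rearranges to $K \le 2\alpha N - 1$ exactly as you claim, and the antisymmetry of the strict inequality is handled properly. This is essentially the same tournament-style counting argument given in Section~5.3 of \citet{barber2019predictive}, which is the proof the paper defers to.
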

\begin{lem}[Section 5.2 of~\citet{barber2019predictive}]\label{lem:Exchangebility-implication}
If $A$ is a matrix of random variables such that for any permutation matrix $\Pi$, $A \overset{d}{=} \Pi A \Pi^{\top},$
then for all $1\le j\le N$,
\[
\mathbb{P}(j\in \mathcal{I}(A)) = \frac{\mathbb{E}[|\mathcal{I}(A)|]}{N} \le 2\alpha - \frac{1}{N}. 
\]
\end{lem}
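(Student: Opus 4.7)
The plan is to split the proof into two parts: first verify the equality $\mathbb{P}(j \in \mathcal{I}(A)) = \mathbb{E}[|\mathcal{I}(A)|]/N$ by a symmetry argument, then obtain the inequality as an immediate consequence of the previous lemma.

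For the equality, I would exploit the permutation equivariance of the map $A \mapsto \mathcal{I}(A)$. For any permutation $\pi$ of $[N]$ with matrix $\Pi_{ik} = \mathbbm{1}\{k = \pi(i)\}$, a direct index computation gives $(\Pi A \Pi^\top)_{ij} = A_{\pi(i), \pi(j)}$. Plugging into \eqref{eq:Definition-I-of-A-set} and reindexing $k = \pi(j)$ inside the sum, one finds $i \in \mathcal{I}(\Pi A \Pi^\top)$ if and only if $\pi(i) \in \mathcal{I}(A)$. In particular, $|\mathcal{I}(\Pi A \Pi^\top)| = |\mathcal{I}(A)|$ as integers, and for every index $j$,
\[
\mathbb{P}(j \in \mathcal{I}(\Pi A \Pi^\top)) = \mathbb{P}(\pi(j) \in \mathcal{I}(A)).
\]
Combining this identity with the distributional invariance $A \overset{d}{=} \Pi A \Pi^\top$ gives $\mathbb{P}(j \in \mathcal{I}(A)) = \mathbb{P}(\pi(j) \in \mathcal{I}(A))$ for every permutation $\pi$. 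Since $\pi(j)$ can be chosen to equal any index in $[N]$, the quantity $p_i := \mathbb{P}(i \in \mathcal{I}(A))$ does not depend on $i$. Averaging yields
\[
\mathbb{P}(j \in \mathcal{I}(A)) = \frac{1}{N}\sum_{i=1}^N \mathbb{P}(i \in \mathcal{I}(A)) = \frac{1}{N}\,\mathbb{E}\!\left[\sum_{i=1}^N \mathbbm{1}\{i \in \mathcal{I}(A)\}\right] = \frac{\mathbb{E}[|\mathcal{I}(A)|]}{N},
\]
which is the first assertion.

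For the inequality, Lemma \ref{lem:Cardinality} asserts the deterministic bound $|\mathcal{I}(A)|/N \le 2\alpha - 1/N$ for every matrix $A$. Taking expectations preserves the inequality and yields $\mathbb{E}[|\mathcal{I}(A)|]/N \le 2\alpha - 1/N$, which combined with the equality above completes the proof.

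The one step requiring care is the index bookkeeping in verifying $\mathcal{I}(\Pi A \Pi^\top) = \pi^{-1}(\mathcal{I}(A))$: the definition of $\mathcal{I}$ involves the asymmetric comparison $A_{ji} < A_{ij}$, so one must check that conjugation by $\Pi$ (which acts on rows and columns by the same permutation $\pi$) maps pairs $(i,j)$ to $(\pi(i), \pi(j))$ and hence preserves the directional count defining membership in $\mathcal{I}$. Everything else is a routine application of exchangeability plus the already-proved deterministic bound.
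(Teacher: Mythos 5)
Your proof is correct and follows essentially the same route as the source this lemma is quoted from (the paper itself states it without proof, deferring to Section 5.2 of \citet{barber2019predictive}): the equivariance $\mathcal{I}(\Pi A \Pi^{\top}) = \pi^{-1}(\mathcal{I}(A))$ together with $A \overset{d}{=} \Pi A \Pi^{\top}$ forces all marginal probabilities $\mathbb{P}(i \in \mathcal{I}(A))$ to coincide with $\mathbb{E}[|\mathcal{I}(A)|]/N$, and the bound then follows by taking expectations in the deterministic inequality of Lemma~\ref{lem:Cardinality}. The index bookkeeping you flag as the delicate step is handled correctly, so there is nothing to add.
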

\begin{rem}
The condition $A \overset{d}{=} \Pi A \Pi^{\top}$ (for any permutation matrix $\Pi$) is equivalent to $(A_{i,j}) \overset{d}{=} (A_{\pi(i),\pi(j)})$ for any permutation $\pi:[N]\to[N]$
\end{rem}
Consider the following form of matrices:
\begin{equation}\label{eq:Matrix-definition}
A_{i,j} = \begin{cases}+\infty,&\mbox{if }i=j,\\
\mathbb{G}(Z_i, \{Z_1,\ldots,Z_N\}\setminus\{Z_i, Z_j\}),&\mbox{if }i\neq j,
\end{cases}
\end{equation}
for exchangeable random variables $Z_1,\ldots,Z_N$.
\begin{lem}\label{lem:Exchangeable-proof}
If $Z_1,\ldots,Z_N$ are exchangeable and $\mathbb{G}(\cdot, \cdot)$ treats the elements of its second argument symmetrically, then the matrix $A$ defined by~\eqref{eq:Matrix-definition} satisfies
\[
A \overset{d}{=} \Pi A\Pi^{\top}, 
\]
for any permutation matrix $\Pi$.
\end{lem}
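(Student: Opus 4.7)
The plan is to show the distributional identity $A \overset{d}{=} \Pi A \Pi^{\top}$ by rewriting the right-hand side in terms of a permuted tuple of the $Z_i$'s, and then invoking exchangeability to conclude.

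First I would fix a permutation $\pi:[N]\to[N]$ with permutation matrix $\Pi$ (so $\Pi_{i,j} = \mathbbm{1}\{j = \pi(i)\}$) and compute the $(i,j)$-entry of $\Pi A \Pi^{\top}$, which is simply $A_{\pi(i),\pi(j)}$. On the diagonal, $\pi(i)=\pi(j)$ forces the entry to be $+\infty$, matching $A_{i,i}$. For $i\neq j$, the definition \eqref{eq:Matrix-definition} gives
\[
(\Pi A \Pi^{\top})_{i,j} \;=\; \mathbb{G}\bigl(Z_{\pi(i)},\,\{Z_1,\ldots,Z_N\}\setminus\{Z_{\pi(i)},Z_{\pi(j)}\}\bigr).
\]

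Next I would rewrite the second argument of $\mathbb{G}$ as $\{Z_{\pi(k)} : k\in[N],\, k\neq i,\, k\neq j\}$, which is valid because $\pi$ is a bijection. Defining the permuted sequence $W_k := Z_{\pi(k)}$, and invoking the assumption that $\mathbb{G}(\cdot,\cdot)$ treats its second argument symmetrically, I obtain
\[
(\Pi A \Pi^{\top})_{i,j} \;=\; \mathbb{G}\bigl(W_i,\,\{W_1,\ldots,W_N\}\setminus\{W_i,W_j\}\bigr),
\]
so that $\Pi A \Pi^{\top}$ has exactly the same functional form as $A$ but with the $Z_k$'s replaced by the $W_k$'s.

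Finally I would apply exchangeability of $(Z_1,\ldots,Z_N)$, which gives $(W_1,\ldots,W_N) = (Z_{\pi(1)},\ldots,Z_{\pi(N)}) \overset{d}{=} (Z_1,\ldots,Z_N)$. Since $\Pi A \Pi^{\top}$ and $A$ are the same measurable function applied to sequences with identical joint distribution, the two matrices are equal in distribution, completing the proof.

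I do not anticipate any serious obstacle: the entire argument is bookkeeping around the definition of a permutation matrix together with the two hypotheses (symmetry of $\mathbb{G}$ in its second argument, and exchangeability of the $Z_i$'s). The only delicate point is to verify carefully that the permutation of indices inside the second argument of $\mathbb{G}$ can be absorbed by the symmetry assumption; this is where it is essential that the second argument is treated as an unordered collection rather than an ordered tuple.
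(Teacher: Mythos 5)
Your proof is correct and follows essentially the same route as the paper: compute $(\Pi A\Pi^{\top})_{i,j}=A_{\pi(i),\pi(j)}$, absorb the index permutation inside the second argument of $\mathbb{G}$ using its symmetry, and conclude via exchangeability that the permuted tuple has the same joint law. The paper phrases the last step as $F(Z_i,Z_j)\overset{d}{=}F(Z_{\pi(i)},Z_{\pi(j)})$ for a symmetric $F$, while you make the (slightly cleaner) observation that the entire matrix is one fixed measurable function of the tuple; these are the same argument.
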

\begin{proof}
Observe that for any $i$, $A_{i,i} = A_{\pi(i),\pi(i)}$ deterministically. For any $i\neq j$, and $\pi(i) = k\neq \pi(j) = \ell$,
\begin{align*}
A_{i,j} &:= \mathbb{G}(Z_i, \{Z_1,\ldots,Z_N\}\setminus\{Z_i, Z_j\}),\\
A_{k,\ell} &:= \mathbb{G}(Z_{k}, \{Z_1,\ldots,Z_N\}\setminus\{Z_k, Z_{\ell}\}).
\end{align*}
Exchangeability of $Z_1,\ldots,Z_N$ implies that for any permutation $\pi:[N]\to[N]$ and any function $F$ that depends symmetrically on $Z_1,\ldots,Z_N$
\[
F(Z_i, Z_j) \overset{d}{=} F(Z_{\pi(i)}, Z_{\pi(j)}).
\]
The result follows by taking $F(Z_i, Z_j) := \mathbb{G}(Z_i, \{Z_1,\ldots,Z_N\}\setminus\{Z_i, Z_j\})$.
\end{proof}
\begin{thm}\label{thm:fundamental-jackknife-result}
If $\mathbb{G}(\cdot, \cdot)$ is a function that treats the elements of its second argument symmetrically, then for any set of exchangeable random variables $Z_1,\ldots,Z_{N}$, and matrix $A$ defined via~\eqref{eq:Matrix-definition}, we have
\[
\mathbb{P}(j\in\mathcal{I}(A)) \le 2\alpha - \frac{1}{N}\quad\mbox{for all}\quad j\in[N].
\]
\end{thm}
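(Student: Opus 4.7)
The plan is to observe that Theorem~\ref{thm:fundamental-jackknife-result} is essentially a direct corollary that chains together the three auxiliary lemmas stated in Appendix~\ref{appsec:Auxiliary-results}. There is no genuinely hard step; all of the substantive combinatorial work is already packaged into Lemma~\ref{lem:Cardinality} (which gives a deterministic bound on $|\mathcal{I}(A)|/N$ for any matrix $A$ with $+\infty$ on the diagonal), and the bridge from deterministic cardinality to pointwise probability is handled by Lemma~\ref{lem:Exchangebility-implication}.

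My first step would be to verify the distributional symmetry hypothesis $A \stackrel{d}{=} \Pi A \Pi^{\top}$ needed to invoke Lemma~\ref{lem:Exchangebility-implication}. This is precisely the content of Lemma~\ref{lem:Exchangeable-proof}: the matrix $A$ defined in \eqref{eq:Matrix-definition} uses a function $\mathbb{G}(\cdot,\cdot)$ that treats its second argument symmetrically, and the entries $Z_1,\ldots,Z_N$ are exchangeable by assumption. Hence the hypothesis is met, and the joint law of $A$ is invariant under conjugation by any permutation matrix.

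Second, I would apply Lemma~\ref{lem:Exchangebility-implication} directly. By permutation symmetry of the law of $A$, the events $\{j \in \mathcal{I}(A)\}$ have a common probability that equals $\mathbb{E}[|\mathcal{I}(A)|]/N$. Combining this with the deterministic bound $|\mathcal{I}(A)|/N \le 2\alpha - 1/N$ from Lemma~\ref{lem:Cardinality} yields
\[
\mathbb{P}(j \in \mathcal{I}(A)) \;=\; \frac{\mathbb{E}[|\mathcal{I}(A)|]}{N} \;\le\; 2\alpha - \frac{1}{N}
\]
for every $j \in [N]$, which is exactly the claim.

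The only thing one might call an obstacle is bookkeeping: making sure that the convention $A_{i,i} = +\infty$ is compatible with the counting argument implicit in Lemma~\ref{lem:Cardinality} (the diagonal never contributes to $\mathbbm{1}\{A_{ji} < A_{ij}\}$, so this is harmless), and checking that the symmetry argument in Lemma~\ref{lem:Exchangeable-proof} genuinely uses only the symmetry of $\mathbb{G}$ in its second argument. Both are routine, so the entire proof is a two-line chain of references to the auxiliary lemmas.
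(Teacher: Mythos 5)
Your proposal is correct and matches the paper's own proof, which likewise obtains the result by chaining Lemma~\ref{lem:Exchangeable-proof} (to verify $A \overset{d}{=} \Pi A \Pi^{\top}$), Lemma~\ref{lem:Exchangebility-implication}, and Lemma~\ref{lem:Cardinality}. No further comment is needed.
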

\begin{proof}
The proof follows by combining Lemmas~\ref{lem:Cardinality},~\ref{lem:Exchangebility-implication},~\ref{lem:Exchangeable-proof}.
\end{proof}
\subsection{Proofs of results in Appendix~\ref{sec:sampling-nested}}\label{appsec:sampling-nested}
\begin{proof}[Proof of Theorem~\ref{thm:nested-subsampling}]
Conditional on the randomness of $M_k$, $p_k^{Y_{n+1}}(X_{n+1})$ is a valid $p$-value and hence conditional on $M_1,\ldots,M_K$, $(2/K)\sum_{k=1}^K p_k^{Y_{n+1}}(X_{n+1})$ is a valid $p$-value, by Proposition 18 of~\citet{vovkcombining}. Therefore, for any $\alpha\in[0,1]$,
\begin{equation}\label{eq:p-value-average}
\mathbb{P}\left(\frac{1}{K}\sum_{k=1}^K p_k^{Y_{n+1}}(X_{n+1}) \le \alpha\right) \le \min\{2,K\}\alpha,
\end{equation}
which implies the result. (The factor $2$ follows from~\citep{vovkcombining} for $K \ge 2$ and for $K = 1$ the factor $2$ is not necessary because $p_1^{Y_{n+1}}(X_{n+1})$ is a valid $p$-value.)
\end{proof}
\begin{proof}[Proof of Theorem~\ref{thm:bootstrap-conformal}]
Fix $1\le k\le K$. Conditional on $B_k$, the scores $r_{k}(X_i, Y_i), i\in B_k^c, r_k(X_{n+1}, Y_{n+1})$ are exchangeable because $\{(X_i, Y_i):i\in B_k^c\cup\{n+1\}\}$ are exchangeable by the assumption. Therefore, $p_k^{Y_{n+1}}(X_{n+1})$ is a valid $p$-value conditional on the bag $(X_i, Y_i), i\in B_k$, where
\[
p_k^{y}(x) := \frac{|\{i\in [n]\setminus B_k:\,r_k(x, y) \le r_{k}(X_i, Y_i)\}| + 1}{|[n]\setminus B_k| + 1}.
\]
This is similar to the conclusion of Proposition~\ref{thm:Conformal-main-result} where we only need exchangeability of $(X_i, Y_i), i\in\mathcal{I}_2\cup\{n+1\}$. The result now follows from Proposition 18 of~\citet{vovkcombining}.
\end{proof}

\rev{\section{Imitating the optimal conditionally-valid prediction set}
 \label{appsec:optimal-prediction-set}}
   Prediction sets that are intervals may not be suitable (or well-defined) unless $\mathcal{Y}$ is a totally ordered set. For example, $\mathcal{Y}$ is not totally ordered for classification problems. Furthermore, even if $\mathcal{Y}$ is ordered, it is well known (see introduction of \citet{lei2013distribution}) that the optimal conditionally-valid prediction regions are level sets of conditional densities (with respect to an appropriate underlying measure), which need not be intervals. Formally, suppose $P_{Y|X}$ has a density $p(y|x)$ with respect to some measure $\mu$ on $\mathcal{Y}$. For a given miscoverage level $\alpha \in [0, 1]$ we wish to identify an optimal set $C \subseteq \mathcal{Y}$ that satisfies $1-\alpha$ coverage for $Y \mid X = x$, ie 
   \[
   \int_{y \in C} p(y|x)dy \ge 1 - \alpha.
   \]
   For simplicity, suppose that the conditional density $p(\cdot|x)$ is injective. Then, it is easy to see that the smallest set (with respect to the measure $\mu$) that satisfies coverage $1-\alpha$ must correspond to an upper level set $\{y\in\mathcal{Y}:\,p(y|x) \ge t\}$
 for some $t \geq 0$.
   In particular, the appropriate value of $t$ depends on $x$ and $\alpha$, and is given by  %
    \begin{equation}\label{eq:t-alpha}
    t_{\alpha}(x) := \sup\left\{t\ge 0 : \int_{p(y|x) \ge t} p(y|x)dy \ge 1 - \alpha\right\}.
    \end{equation}
   Clearly the set $\{{y \in \mathcal{Y}: p(y|x) \geq t_\alpha(x)}\}$ satisfies $1-\alpha$ coverage and is the smallest set to do so. 
   If an oracle provides us access to $p(y|x)$ for every $x,y$, we may thus compute the optimal prediction set at level $\alpha$ as 
    \begin{equation}\label{eq:optimal-pred-set}
   C_{\alpha}^{\texttt{oracle}}(x) ~:=~ \{y\in\mathcal{Y}:\,p(y|x) \ge t_{\alpha}(x)\}.
   \end{equation}
   Note that the prediction regions 
   $
   \left\{C_{\delta}^{\texttt{oracle}}(x)\right\}_{\delta\in[0,1]}
   $
   form a sequence of nested sets. 
   This motivates us to imitate/approximate $C^{\texttt{oracle}}_{\alpha}(x)$ through the nested framework as follows.  
  Let $\widehat{p}(\cdot|x)$ be any estimator of the conditional density, and let $\widehat{g}_x(t)$ be defined as
  \[
  \widehat{g}_x(t) ~:=~ \int_{y:\,\widehat{p}(y|x) \ge t}\widehat{p}(y|x)dy,
  \]
  which represents the estimated conditional probability of a level set with threshold $t$.
  Now, in our effort to mimic \eqref{eq:t-alpha}, for any $\delta \in [0,1]$, define $\widehat t_{\delta}(x)$ as the plugin threshold estimator: 
  \begin{equation}\label{eq:plugin-t}
  \widehat{t}_{\delta}(x) ~:=~ \sup\left\{t\ge0:\, \widehat{g}_x(t) \ge 1 - \delta\right\}.
  \end{equation}
   Last, define the nested sets $\{\mathcal{F}_{\delta}(x)\}_{\delta\in[0,1]}$ as
   \[
   \mathcal{F}_{\delta}(x) := \{y:\,\widehat{p}(y|x) \ge \widehat{t}_{\delta}(x)\}.
   \]
   It is clear that for any $x$, the function $\delta\mapsto \widehat{t}_{\delta}(x)$ is monotonically
   decreasing and hence for any $x$, the sets $\{\mathcal{F}_{\delta}(x)\}_{\delta\in[0, 1]}$ are nested. 
   It is also clear that if $\widehat{p}(\cdot|x) = p(\cdot|x)$, then $\mathcal{F}_{\alpha}(x) = C^{\texttt{oracle}}_{\alpha}(x)$. 
   Following this, we conjecture that
   if $\widehat{p}(\cdot|x)$ is consistent for $p(\cdot|x)$ in supremum norm, then $\mu(\mathcal{F}_{\alpha}(x)\Delta C_{\alpha}^{\texttt{oracle}}(x)) \to 0$ as $n \to \infty$. (The notation $A \Delta B$  represents symmetric difference.)

   An important distinguishing fact about nested sets~$\mathcal{F}_{\delta}(x)$, in comparison with the examples in Section~\ref{sec:split-conformal}, is that these are not intervals in general, and are also useful for prediction in the context of classification. 

   Applying nested split-conformal method from \eqref{eq:Nested-prediction-set} to the nested sets above yields the prediction set
   \[
   \widehat{C}_{\alpha}^{\texttt{oracle}}(x) ~:=~ \mathcal{F}_{\widehat{\delta}(\alpha)}(x),
   \] 
   where $\widehat{\delta}(\alpha)$ is obtained from the second split $\mathcal{I}_2$ of the data.
   Unlike the definition in Section~\ref{sec:split-conformal}, $\widehat{\delta}(\alpha)$ here is given by the equation
   \[
   1 - \widehat{\delta}(\alpha) = \lceil(1-\alpha)(1 + 1/|\mathcal{I}_2|)\rceil-\mbox{th quantile of }1 - \delta(X_i, Y_i), i\in\mathcal{I}_2,
   \]
   where $\delta(X_i, Y_i) := \sup\{\delta\in[0,1]:\,\widehat{p}(Y_i|X_i) \ge \widehat{t}_{\delta}(X_i)\}.$ This difference is because the nested sets here are decreasing in $\delta$ instead of increasing as in Section~\ref{sec:split-conformal}.
   Proposition~\ref{thm:Conformal-main-result} readily yields the validity guarantee
   \[
   \mathbb{P}(Y_{n+1}\in\widehat{C}_{\alpha}^{\texttt{oracle}}(X_{n+1})) \ge 1 - \alpha.
   \]
   It is important to realize that the prediction set $\widehat{C}_{\alpha}^{\texttt{oracle}}(X_{n+1})$ is only marginally valid, although it is motivated through the optimal conditionally valid prediction regions. 

   \cite{izbicki2019distribution} propose a prediction set similar to $\widehat{C}_{\alpha}^{\texttt{oracle}}$ but use an alternative estimate of $t_{\delta}(x)$ via a notion of ``profile distance'' between points, which they defined as
   \[
   d(x, x') := \int_0^{\infty} (\widehat{g}_x(t) - \widehat{g}_{x'}(t))^2dt.
   \]
   For every $x$, let $\mathcal{N}_m(x; d)$ represent the indices of the first $m$ nearest neighbors of $x$ with respect to the profile distance $d$. The prediction set described by \citet{izbicki2019distribution} is 
   \begin{equation}\label{eq:Izbicki-estimator}
   \widecheck{t}_{\delta}(x) := \mbox{$(1-\delta)$-th quantile of $\{\widehat{p}(y_i|x_i) : i\in \mathcal{N}_m(x; d)$\}}.
   \end{equation}
  Formulating their procedure in terms of the nested sets
   \[
   \mathcal{F}_{\delta}^{\texttt{ISS}}(x) ~:=~ \{y\in\mathcal{Y}:\,\widehat{p}(y|x) \ge \widecheck{t}_{\delta}(x)\},
   \]
   Proposition~\ref{thm:Conformal-main-result} readily yields marginal validity.

   We conjecture an improvement over the proposal of \citet{izbicki2019distribution}. Because the optimal prediction set~\eqref{eq:optimal-pred-set} depends directly on $t_{\delta}(x)$, it is more important to combine information from those $x_i$'s which are close to $x$ in terms of ${t}_{\delta}(x)$.  For example, we may define a ``revised'' profile distance as
   \[\textstyle
   \widetilde{d}(x, x') := \int_0^1 (\widehat t_{\delta}(x) - \widehat t_{\delta}(x'))^2w(\delta)d\delta,
   \]  
   where the function $w(\cdot)$ provides more weight on values close to zero.
	(We use such $w$ because often one is interested in coverage levels close to 1 or equivalently small values of the miscoverage level $\alpha$.) Using this new distance, we can use various alternative estimators of the threshold, for example using $\widetilde d$ in \eqref{eq:Izbicki-estimator}, or kernel-smoothed variants of \eqref{eq:plugin-t} and \eqref{eq:Izbicki-estimator}.

 In this paper, we focus on the problem of valid prediction region in the context of regression in which case one often wants to report an interval. For this reason, we leave the discussion of optimal prediction regions discussed herein at this stage, although a more detailed enquiry in this direction would be fruitful for complicated response spaces.

\section{Experimental information and additional results}
\label{appsec:additional-info-exps}
\label{subsec:experiment-info}
All our experiments were conducted in MATLAB using the TreeBagger class %
for training regression forests and quantile regression forests. Default parameters were used for all datasets apart from the synthetic dataset of Section~\ref{subsec:synthetic}. Details for the datasets used in our experiments are provided in Table~\ref{table:metadata}. We make two additional comments: 
\begin{enumerate}
    \item The Protein dataset on UCI has 9 features. However, our data loading code erroneously skipped the ninth feature --- thus $d = 8$ in Table~\ref{table:metadata}.\footnote{We thank Yachong Yang for pointing this out.} All results in Section~\ref{sec:numerical-experiments} are based on the dataset that does not use the ninth feature. In Appendix~\ref{appsec:more-exps}, we present Table \ref{table:overall-comparison-width} and \ref{table:overall-comparison-coverage} style results using all 9 features, under the dataset name Protein2. We observe that for all conformal methods, the mean-width values for Protein2 differ only slightly from those of Protein (as expected, including the ninth feature leads to a small decrease in the mean-width for all methods). Importantly, the relative performance of the conformal methods remains exactly the same, and hence the conclusions regarding competitiveness of QOOB remain unaffected. While we did not reproduce the detailed plots of Section~\ref{sec:numerical-experiments} for Protein2, we suspect that the relative behavior of the conformal methods is same for Protein and Protein2 across different hyperparameters.
    
    \item The Kernel dataset contains 4 output variables corresponding to 4 measurements of the same entity. The output variable we used is the average of these values.
\end{enumerate}

\begin{table}[!t]
	\caption{Meta-data for the datasets used in our experiments. $N$ refers to the total number of data-points from which we create 100 versions by independently drawing 1000 data points randomly (as described in the beginning of Section~\ref{sec:numerical-experiments}). $d$ refers to the feature dimension.}
	\centering
	\resizebox{\textwidth}{!}{\begin{tabular}{|c|l|c|c|}
			\hline
			\textbf{Dataset} & \textbf{$N$} & \textbf{$d$} &\textbf{URL (\url{http://archive.ics.uci.edu/ml/datasets/*})}  \\\hline  
			Blog & 52397 & 280 & \url{BlogFeedback}  \\ \hline
			Protein & 45730 & 8 & \url{Physicochemical+Properties+of+Protein+Tertiary+Structure} \\ \hline
			Concrete & 1030 & 8 & \url{Concrete+Compressive+Strength}  \\ \hline
			News & 39644 & 59 & \url{Online+News+Popularity}  \\ \hline
			Kernel%
			& 241600 & 14 &\url{SGEMM+GPU+kernel+performance}  \\ \hline
			Superconductivity & 21263 & 81& \url{Superconductivty+Data} \\ \hline
	\end{tabular}}
	\label{table:metadata}
\end{table}

In the following subsection, we present experimental results on 5 additional UCI datasets.

\subsection{Experiments with additional datasets}
\label{appsec:more-exps}

\begin{table}[!t]
	\caption{Meta-data for the additional datasets presented in Appendix~\ref{appsec:more-exps}. $N$ refers to the total number of data-points from which we create 100 versions by independently drawing 1000 data points randomly (as described in the beginning of Section~\ref{sec:numerical-experiments}). $d$ refers to the feature dimension.}
	\centering
	\resizebox{\textwidth}{!}{\begin{tabular}{|c|l|c|c|}
			\hline
			\textbf{Dataset} & \textbf{$N$} & \textbf{$d$} &\textbf{URL (\url{http://archive.ics.uci.edu/ml/datasets/*})}  \\\hline  
			Airfoil & 1503 & 5 & \url{Airfoil+Self-Noise}  \\ \hline
			Electric & 10000 & 12 & \url{Electrical+Grid+Stability+Simulated+Data+} \\ \hline
			Cycle & 9568 & 4 & \url{Combined+Cycle+Power+Plant}  \\ \hline
			WineRed & 4898 & 11 & \url{wine-quality/winequality-red.csv}  \\ \hline
			WineWhite%
			& 1599 & 11 &\url{wine-quality/winequality-white.csv}  \\ \hline
			Protein2 & 45730 & 9& \url{Physicochemical+Properties+of+Protein+Tertiary+Structure} \\ \hline
	\end{tabular}}
	\label{table:metadata-additional}
\end{table}
In Section~\ref{sec:numerical-experiments}, we performed an in-depth comparison of QOOB with 5 other conformal methods, on 6 datasets, with varying nominal quantile level, number of trees, and sample size. In this section, we add more breadth by presenting results on 5 additional UCI datasets for default hyperparameters: number of trees = 100, nominal quantile level = $2\alpha$, number of samples = 1000. This is the counterpart of Tables \ref{table:overall-comparison-width} and \ref{table:overall-comparison-coverage} for the additional datasets. We also present results with the corrected version of the Protein dataset, called Protein2. The metadata for the datasets is contained in Table~\ref{table:metadata-additional}. The results are in Tables \ref{table:overall-comparison-width-additional} and \ref{table:overall-comparison-coverage-additional}.
\begin{table}[t]
	\caption{Mean-width~\eqref{eq:mean-width} of conformal methods with regression forests ($\alpha = 0.1$). Average values across 100 simulations are reported with the standard deviation in brackets. }
	\centering
	\resizebox{\textwidth}{!}{
		\begin{tabular}{c|c|c|c|c|c|c}
			\hline
			\textbf{Method} & {Airfoil} & {Electric} & {Cycle}& {WineRed} & {WineWhite} & {Protein2} \\ \hline \hline 
			SC-100 & 11.90 & 0.067 & 13.80 & 2.04 & 2.35 & 16.76 \\[-0.15in] 
			& (0.08) & (0.00) & (0.07) & (0.01) & (0.01) & (0.08) \\\hline
			Split-CQR-100 (2$\alpha$) & 11.40 & 0.069 & 13.83 & 2.42 & 2.99 & 14.07 \\[-0.15in] 
			& (0.09) & (0.00) & (0.07) & (0.08) & (0.04) & (0.08)\\\hline
			8-fold-CC-100 & 10.42 & 0.062 & 13.25 & 2.01 & 2.32 & 16.41 \\[-0.15in] 
			& (0.02) & (0.00) & (0.04) & (0.01) & (0.01) & (0.04) \\\hline
			OOB-CC-100 & 10.11 & \textbf{0.061} & 13.18 & \textbf{2.00} & \textbf{2.31} & 16.38 \\[-0.15in] 
			& (0.02) & (0.00) & (0.04) & (0.01) & (0.01) & (0.04) \\\hline
			OOB-NCC-100 & 10.25 & 0.065 & 14.01 & \textbf{2.00} & 2.35 & 14.97 \\ [-0.15in]
			& (0.03) & (0.00) & (0.06) & (0.01) & (0.01) & (0.05)\\\hline
			QOOB-100 ($2\alpha$) & \textbf{9.80} & 0.064 & \textbf{13.12} & 2.53 & 2.94 & \textbf{13.73} \\[-0.15in] 
			& (0.02) & (0.00) & (0.04) & (0.04) & (0.00) & (0.05)\\\hline
	\end{tabular}}
	\label{table:overall-comparison-width-additional}
\end{table}
\begin{table}[!h]
	\caption{Mean-coverage~\eqref{eq:mean-coverage} of conformal methods with regression forests ($\alpha = 0.1$).  Average values across 100 simulations are reported. The standard deviation of these average values are zero up to two significant digits. }
	\centering
	\resizebox{\textwidth}{!}{
		\begin{tabular}{c|c|c|c|c|c|c}
			\hline
			\textbf{Method} & {Airfoil} & {Electric} & {Cycle}& {WineRed} & {WineWhite} & {Protein2} \\ \hline \hline 
			SC-100 & 0.90 & 0.90 & 0.90 & 0.90 & 0.90 & 0.90 \\ \hline
			Split-CQR-100 ($2\alpha$) & 0.90 & 0.90 & 0.90 & 0.96 & 0.98 & 0.90 \\ \hline
			8-fold-CC-100 & 0.91 & 0.91 & 0.91 & 0.90 & 0.90 & 0.91 \\ \hline
			OOB-CC-100 & 0.91 & 0.91 & 0.90 & 0.90 & 0.90 & 0.90 \\ \hline
			OOB-NCC-100 & 0.91 & 0.91 & 0.91 & 0.90 & 0.91 & 0.91 \\ \hline
			QOOB-100 ($2\alpha$) & 0.92 & 0.92 & 0.91 & 0.98 & 0.99 & 0.91 \\ \hline
	\end{tabular}}
	\label{table:overall-comparison-coverage-additional}
\end{table}
We make the following observations:
\begin{enumerate}
    \item   On 3 out of 6 datasets, QOOB is the best performing method, and on the remaining 3, OOB-CC is the best performing method. Thus while Split-CQR was the best performing method on some of the datasets in the main paper, this is not true for any of the additional datasets. 
\item Specifically, QOOB performs better than Split-CQR on all datasets but one (WineRed). 
\item We observe that for both Wine datasets, the quantile based methods (QOOB and Split-CQR) overcover: they have significantly high mean-widths compared to the non-quantile based methods, and the coverage is more than $0.96$.  This is in stark contrast to all other experimental results in this paper. On investigation, we observed that the Wine datasets have very light tails --- most of the mass is concentrated on three output values: 5,6,7. Thus conditional quantile estimation can fail on some points for quantile levels away from 0.5; this seems to have occurred in our experiments. 
\end{enumerate}

Based on Tables \ref{table:overall-comparison-width} and \ref{table:overall-comparison-width-additional},  we recommend QOOB as the most reliable general purpose conformal method.

\end{document}